\journal{Information and Computation}
\newcommand{\treeZ}{t}
\newcommand{\eqdef}{\stackrel{\text{def}}=}
\newcommand{\eqext}[1]{\stackrel{#1}=}
\newcommand{\comment}[1]{}
\newcommand{\dL}{\mathtt{\scriptstyle L}}
\newcommand{\dR}{\mathtt{\scriptstyle R}}
\newcommand{\dS}{\{\dL, \dR\}\xspace}
\newcommand{\setD}{\dS}
\newcommand{\setPositions}{\{ \dL, \dR \}^{\ast}}
\newcommand{\setNodes}{\dS^{\ast}}
\newcommand{\powerset}{\mathcal{P}}
\newcommand{\init}{\mathrm{I}}
\newcommand{\lang}{\mathrm{L}}
\newcommand{\graph}{\mathrm{G}}
\newcommand{\blank}{\mbox{$\flat$}}
\newcommand{\bbN}{\mathbb{N}}
\newcommand{\bbR}{\mathbb{R}}
\renewcommand{\AA}{\mbox{$\mathcal{A}$}}
\newcommand{\nhasBlank}[1] {#1}
\renewcommand{\AA}{\mathcal{A}}
\newcommand{\CC}{\mathcal{C}}
\newcommand{\DD}{\mathcal{D}}
\newcommand{\FF}{\mathcal{F}}
\newcommand{\MM}{\mathcal{M}}
\newcommand{\Uu}{\mathcal{U}}
\newcommand{\bB}{\mathbb{B}}
\newcommand{\bR}{\mathbb{R}}
\newcommand{\tL}{\mathrm{L}}
\newcommand{\td}{\mathrm{d}}
\newcommand{\one}{\mathbbm{1}\xspace}
\newcommand{\w}{\omega}
\newcommand{\trees}[1] {\mathcal{T}_{#1}\xspace}
\newcommand{\treesFinite}[1] {\mathcal{T}_{#1}^{{<}\omega}\xspace}
\newcommand{\treesInfinite}[1] {\mathcal{T}_{#1}^{\omega}\xspace}
\newcommand{\setBall}[1]{\mathbb{B}_{#1}}
\newcommand{\setBallInNode}[2]{\mathbb{B}_{#1,#2}}
\newcommand{\dom}[1] {\textit{Dom}(#1)}
\newcommand{\nodes}[1] {\dom{#1}}
\newcommand{\fun}[3]{\ensuremath{#1\colon #2 \to #3}}
\newcommand{\parfun}[3]{\ensuremath{#1\colon #2 \rightharpoonup #3}}
\newcommand{\rootP}{\varepsilon}
\newcommand{\ancestor}{\sqsubsetneq}
\newcommand{\weakAncestor}{\sqsubseteq}
\newcommand{\subtreeSymbol}{.}
\newcommand{\isPrefix}[2]{#1{\sqsubseteq}#2}
\newcommand{\prefix}{\sqsubseteq}
\newcommand{\subtreeAtNode}[2]{{#1}{\subtreeSymbol}#2}
\newcommand{\decpro}[4]{
\begin{problem}[#1]
	\label{#2}
	\[\begin{tabular}{r p{10cm}}
		Input:  & #3 \\
		Output: & #4 \\
	\end{tabular}
	\]
\end{problem}
}
\newcommand{\nexp}{\mbox{\textit{NEXP}}}
\newcommand{\expspace}{\mbox{\textit{EXPSPACE}}}
\newcommand{\exptime}{\textit{EXP}}
\newcommand{\np}{\textit{NP}}
\newcommand{\aptime}{\textit{APTIME}\xspace}
\newcommand{\standardMeasureS}{\mu^{\ast}}
\newcommand{\standardMeasure}[1]{\standardMeasureS(#1)}
\newcommand{\standardMeasureBig}[1]{\standardMeasureS\big(#1\big)}
\newcommand{\type}{\mathrm{type}\xspace}
\tikzstyle{eve}       = [scale=0.8, inner sep=0, shape=diamond,draw=black,minimum size=10mm]
\tikzstyle{adam}      = [scale=0.8, inner sep=0, shape=rectangle,draw=black,minimum size=8mm]
\tikzstyle{branching} = [scale=0.8, inner sep=0, regular polygon, regular polygon sides=3 ,draw=black,minimum size=10mm]
\tikzstyle{generic} = [scale=0.8, inner sep=0, regular polygon, regular polygon sides=8 ,draw=black,minimum size=10mm]
\tikzstyle{nature}    = [scale=0.8, circle, draw=black, minimum size=9mm]
\tikzstyle{successor} = [->, draw=black!80]
\newtheorem{thm}{Theorem}[section]
\newtheorem{lemma}[thm]{Lemma}
\newtheorem{problem}[thm]{Problem}
\newtheorem{proposition}[thm]{Proposition}
\newtheorem{corollary}[thm]{Corollary}
\newtheorem{claim}[thm]{Claim}
\newtheorem{fact}[thm]{Fact}
\newtheorem{example}[thm]{Example}
\newtheorem{remark}{Remark}[section]
\date{October 29, 2019}
\begin{document}




\begin{frontmatter}




\title{The Uniform Measure of Simple Regular Sets\\ of Infinite Trees\footnote{Both authors were supported by Poland's National Science Centre grant no.~2016/21/D/ST6/00491.}}

\address[label1]{University of Warsaw}
\address[label2]{University of New Caledonia}

\author[label1]{Michał Skrzypczak}
\author[label1,label2]{Marcin Przybyłko}

\address{}

\begin{abstract}
    We consider the problem of~computing the measure of~a~regular set of~infinite binary trees.
    While the general case remains unsolved, we~show that the measure of~a~language can be~computed when the set is~given in~one of~the following three formalisms: a~first-order
    formula with no~descendant relation; a~Boolean combination of~conjunctive queries (with descendant relation); or~by~a~non\=/deterministic safety tree automaton. Additionally, in~the first two cases the measure of~the set is~always rational, while in~the third it~is~an~algebraic number. Moreover, we~provide an~example of a~first-order formula that uses descendant
    relation and defines a~language of~infinite trees having an~irrational (but algebraic) measure. 
\end{abstract}

\begin{keyword}
infinite trees \sep uniform measure \sep random tree \sep first\=/order logic


\MSC 68Q45 Formal languages and automata \sep 68Q87 Probability in computer science
\end{keyword}

\end{frontmatter}



\section{Introduction}
\label{sec:intro}

The problem of~computing a~measure of~a~set can be seen as~one of~the fundamental problems considered in~the study of~probabilistic systems.
This problem has been studied mostly implicitly, as~it~is~often one of~the intermediary steps in~solving stochastic games, cf.~\cite{chatterjeeStochasticRegular},
in~answering queries in~probabilistic databases, cf.~\cite{suciuProbDatabases}, or~in~model checking for stochastic branching processes, cf.~\cite{chen_model_checking}.

To~us, this problem naturally arises in~the study of~stochastic games.
Many of~the games considered in~the literature can be~seen as~instances of~the stochastic version of Gale\=/Stewart games~\cite{galeGames}.
Such games use winning conditions expressed as~sets of~winning plays, i.e.~a~set of~(in)finite words representing winning plays.
Hence, computing the value of~a~stochastic game involves computing the measure of~the winning set with respect to~the probabilistic space generated
by~the stochastic elements of~the game. 

Mio~\cite{mioBranchingGames} introduced branching games, i.e.~stochastic games for which
the plays are represented as~(in)finite trees, rather than words.
Then, the questions concerning whether a~set of~trees has a~well\=/defined measure~\cite{michalewskiMeasureFinal},
and whether that measure can be~computed~\cite{michalewskiCompMeasure} have been raised and partially answered.

\comment{
\michal{Czy mozemy wywalic ponizszy paragraf? I tak jest potem powtorzony...}
\marcin{Tak, jesli chcesz mozes sporo tu pozmieniac}
In~this work we~focus our attention on~the problem of~computing the uniform measure of~a~given set of~infinite binary trees in three specific  cases: when the set is~defined by~a~first\=/order formula, when the set is defined by~a~Boolean combination of~conjunctive queries, or~ when it is defined by~a~non\=/deterministic safety tree automaton.
}

\paragraph{Related work}
The problem of computing the measure of an arbitrary regular set of~trees has been already, explicitly or~implicitly, studied.
Gogacz et al.~\cite{michalewskiMeasureFinal} prove that regular sets~of trees are universally measurable.
In the case of infinite trees,
Chen et~al.~\cite{chen_model_checking} show that the measure of a~set accepted by~deterministic automaton is~computable;
Michalewski and Mio~\cite{michalewskiCompMeasure} extend the class of~sets with computable measure to~the class of~sets defined by the so\=/called game automata.
In~the case of~finite trees, Amarilli et~al.~\cite{amarilliProvenance} show that with measures defined by~fragments of~the probabilistic XML, i.e.~where the support of~the measure consists of~the trees of~bounded depth, the measure is~computable for arbitrary regular sets of~trees. 
In~the case of~regular languages of~infinite words, Staiger~\cite{staigerMesureOnWords} shows that the measure of~every regular language of~words is~computable.
Note that, in~all the above results, the inherent deterministic nature of~the involved automata plays an~important role.

The problem of~computing the measure of~a~set of~infinite trees has also been, implicitly, considered in~probability games.
The problem is~a~special case of~computing the value of~a~stochastic game when the strategies of~players are already chosen.
In~the case of~infinite trees, Przybyłko and Skrzypczak~\cite{przybylkoRegularBranchingGamesMFCS} consider branching games with regular wining sets.
In the case of words, for the survey of probabilistic $\omega$\=/regular games on graphs see e.g.~Chatterjee and Henzinger~\cite{chatterjeeStochasticRegular}.

The problem under consideration can be~also seen as~the problem of~query evaluation in~the probabilistic databases setting. For instance Amarilli et~al.~\cite{amarilliCQsOnProbabilisticGraphs} enquire into the evaluation problem of~conjunctive queries over probabilistic graphs.
For an~introduction to~probabilistic databases see e.g.~\cite{suciuProbDatabases}.

\paragraph{Our contribution}
We~provide algorithms that~compute the measure of~tree sets belonging to~some restricted classes of~regular languages.
We~show that, in~the case of first\=/order (FO) formulae using unary predicates and child relation, the uniform measure can be computed in~three\=/fold exponential space.
We also show that in~the case of~Boolean combinations of~conjunctive queries (BCCQ) using unary predicates, child relation, and descendant relation, the measure can be~computed in~exponential space. Additionally, we provide an algorithm for computing the measure of a language given by a non\=/deterministic safety tree automaton. As the class of languages recognisable by these automata coincides with the class of topologically closed regular languages (see Proposition~\ref{pro:closed-is-safety}), this result provides a~method for languages at the basic level of topological complexity of regular tree languages. The algorithm translates the structure of a~given automaton $\AA$ into an~exponentially bigger first\=/order formula over the field of reals ($\bbR$). Thus, decision problems about the measure of $\lang(\AA)$ can be solved in doubly exponential space.

We~additionally provide an~example of~a~first\=/order formula over a~two letter alphabet
for which the defined set of trees has an~irrational uniform measure.
An~example of~a~regular language with an~irrational uniform measure was already presented in~\cite{michalewskiCompMeasure}, however that language is~not first\=/order definable.


This paper is an~extended journal version of~\cite{przybylko_tree_games}. Parts of the material presented here have been included in the PhD thesis of the first author~\cite{przybylko_phd}.

\paragraph{Organization of rest of the paper}
In~Section~\ref{sec:prelim} we~define basic notions used in~this article.
In~Section~\ref{sec:examples} we~showcase some basic properties of~the uniform measure on~selected examples.
The computability of~the measure of~the regular languages defined by~first\=/order
formulae is~discussed in~Section~\ref{sec:fo-sets}.
The computability of the measure of~the regular languages defined by~conjunctive queries is~discussed in~Section~\ref{sec:CQsMeasure}.
In Section~\ref{sec:open-sets} we~discuss the case of~safety automata.
Finally, in Section~\ref{sec:complexity} we provide some computational complexity bounds.
In the last section, we summarise the obtained results and propose some directions of future research.

\paragraph{Acknowledgements}

The authors would like to express their gratitude to Damian Niwi{\'n}ski for a~number of insightful comments on the topic. Also, the authors thank the anonymous referees for their careful reviews and helpful suggestions.

\section{Preliminaries}
\label{sec:prelim}

In this section we present crucial definitions used throughout this work.
We assume basic knowledge of~logic, automata, and measure.
For introduction to logic and automata see~\cite{thomasLanguages}, for introduction
to topology and measure see~\cite{kechrisDescriptive}.

\paragraph{Words and trees}
By~$\mathbb{N}$ we~denote the set of~natural numbers, i.e.~the set $\{0,1,2,\dots\}$ which, when treated as an~ordinal is also denoted by~$\omega$.
An~alphabet $\Gamma$ is~any non\=/empty finite set.
A~\emph{word} is~a~partial function \parfun{w}{\mathbb{N}}{\Gamma} such that the domain $\dom{w}$ of~$w$ is~$\leq$\=/closed. 
By $|w|$ we denote the length of~the word $w$, i.e. the size of~its domain.
By $\varepsilon$ we denote the empty word, i.e.~the unique word of~length $0$.
If the domain of~a~word $w$ is~finite, then the word is~called finite; otherwise, it is~called infinite.
Let $n \in \mathbb{N}$ and $\bowtie \in \{<, \leq, =\}$, then the set of~all words over an~alphabet $\Gamma$ of~length $l$ such that $l \bowtie n$  is~denoted~$\Gamma^{\bowtie n}$, e.g.~$\{0,1\}^{\leq 5}$ is~the set of~all binary words of~length at~most~$5$.
Following the usual convention, the set of~all finite words over an~alphabet $\Gamma$, i.e. the set $\Gamma^{<\omega}$, is~denoted $\Gamma^{\ast}$
and the set of~all infinite words over $\Gamma$, i.e.~the set $\Gamma^{=\omega}$, is~denoted $\Gamma^{\omega}$.

A~word $w$ is~called a~\emph{prefix} of~a~word $v$, denoted $w \sqsubseteq v$, if~$\dom{w} \subseteq \dom{v}$ and for every $i \in \dom{w}$ we have that
$w(i) = v(i)$. By $w \cdot v$, or~simply $wv$,  we~denote the \emph{concatenation} of~the words $w$ and $v$.

A~\emph{tree} is~any~partial function \parfun{t}{\setPositions}{\Gamma}, where the domain $\dom{t}$ is~a~non\=/empty prefix\=/closed set.
The elements of~the set $\setD$ are called \emph{directions} (\emph{left} and \emph{right}, respectively) and the elements of~the set $\setPositions$ are called \emph{positions}.

For a given tree $t$, the elements of~the set $\dom{t}$ are called nodes of~$t$, or~nodes for short.
Any tree $t$ is~either \emph{finite}, if~its domain $\dom{t}$ is~finite, or~\emph{infinite}.
A~tree $t$ is~called a~\emph{full tree of~height $k$} if~$\dom{t} = \setD^{\le k}$.
A~tree $t$ is~called a~\emph{full tree} if~$\dom{t} = \setPositions$.

Let $\Gamma$ be an alphabet. The set of~all trees over the alphabet $\Gamma$ is~denoted by $\trees{\Gamma}$;
the set of~all finite trees by $\treesFinite{\Gamma}$;
the set of~all full trees of~height $k$ by $\trees{\Gamma}^{k}$;
the set of~all full trees by $\treesInfinite{\Gamma}$.
A~tree $t_1$ is~called a~\emph{prefix} of~a~tree $t_2$, denoted $t_1 \weakAncestor t_2$, if~$\dom{t_1} \subseteq \dom{t_2}$ and for every $u \in \dom{t_1}$ we have that
$t_1(u) = t_2(u)$. 
For a~tree~$t$ and a~node~$u \in \nodes{t}$, by~$\subtreeAtNode{t}{u}$ we~denote the unique tree such that for every position $v \in \setPositions$ the following holds.

\begin{equation}
\label{eq:subtreeDefinition}
\subtreeAtNode{t}{u}(v) \eqdef t(uv)
\end{equation}

The tree $\subtreeAtNode{t}{u}$ is~called the \emph{sub\=/tree of~$t$ in~the node~$u$}.
For a~tree $\treeZ$ and a~position~$u$, by $\setBallInNode{\treeZ}{u}$ we denote the set of~all full trees in which $t$ is~a prefix of~the sub\=/tree in~the node $u$, i.e.
\begin{equation}
\label{eq:ball}
\setBallInNode{\treeZ}{u} \eqdef \{\treeZ' \in \treesInfinite{\nhasBlank{\Gamma}} \mid \isPrefix{\treeZ}{\subtreeAtNode{\treeZ'}{u}}\},
\end{equation} with $\mathbb{B}_{\treeZ} \eqdef \setBallInNode{\treeZ}{\varepsilon}$.

\paragraph{Logic}
A tree $t$ over an~alphabet $\Gamma$ can be seen as a~relational structure
$\langle \dom{t}, {\varepsilon}, {s_{\dL}}, {s_{\dR}}, {s}, {\ancestor}, (a^{t})_{a \in \Gamma} \rangle$, where

\begin{itemize}
    \item $\dom{t}$ is~the domain of~$t$;
    \item $\varepsilon$ is~the root constant;
    \item $s_{\dL},s_{\dR}  \subseteq \dom{t} \times \dom{t}$ are the left child relation ($u\ {s_\dL}\ u\cdot\dL$) and the right child relation ($u\ {s_\dR}\ u\cdot\dR$), respectively;
    \item $s$ is~the child relation $s_{\dL} \cup s_{\dR}$;
    \item $\ancestor$ is~the ancestor relation, i.e. the transitive closure of~the relation $s$;
    \item $a^{t} \subseteq \dom{t}$ is~a~subset of~$\dom{t}$, for ${a \in \Gamma}$, and the family of~sets  $(a^{t})_{a \in \Gamma}$ is~a partition of~$\dom{t}$.
\end{itemize}
The partition $(a^{t})_{a \in \Gamma}$ induces the tree $t$ in the natural way: ${t}(u) = a$ if~and only if~$u \in a^{t}$.

The \emph{distance} between two positions is~the function \fun{d}{\setPositions \times \setPositions}{\bbN} defined as  $\td(u,v) = |u| + |v| - 2|x|$, where
$x$ is~the longest common prefix of~$u$ and~$v$.
Equivalently, the distance between two different
positions is~the length of~the shortest undirected path connecting the two nodes in~the graph $\langle \setPositions,  s_{\dL} {\cup} s_{\dR}\rangle$.  

\paragraph{Regular languages}
Formulae of~monadic second\=/order logic (MSO) can quantify over nodes in trees $\exists x$, $\forall x$ and over sets of~nodes $\exists X$, $\forall X$.
A~first\=/order (FO) formula is~an~MSO formula that does not quantify over the sets of~nodes.
A~sentence is~a~formula with no free variables.

We say that an~MSO formula $\varphi$ is~over a~signature $\Sigma$ if~$\varphi$ is~a well\=/formed formula
built from the symbols in $\Sigma$ together with the quantifiers and logical connectives.
Let $\Gamma$ be an alphabet, in this paper, we consider only formulae over the signatures $\Sigma$
such that $\Sigma \subseteq \{\rootP, s_{\dL}, s_{\dR}, s, \ancestor\} \cup \Gamma$.

Let $\varphi$ be a~first\=/order formula.
We write $t,v \models \varphi(x_1, \dots, x_k)$, if~the tree $t$, as a~relational structure, with the~valuation $v \in \dom{t}^k$ satisfies the formula $\varphi(x_1,\dots, x_k)$.
If $\varphi$ is~a sentence, we simply write $t \models \varphi$.
We say that a~formula $\varphi(x_1, \dots, x_k)$ is~satisfiable if~there is~a tree $t$ and a~tuple $v \in \dom{t}^k$ such that $t,v \models \varphi(x_1,\ldots,x_k)$.

Let $\Gamma$ be an alphabet, the set defined by an~MSO sentence $\varphi$, denoted $\lang(\varphi)$,
 is~the set of~all full trees over the alphabet $\Gamma$ that satisfy $\varphi$, i.e.~$\lang(\varphi) \eqdef \{ t \in \treesInfinite{\Gamma} \mid t \models \varphi\}$.
Such a~set of trees  is~called regular. This definition of~regular languages of~trees is~equivalent to the automata based definition, cf. e.g.~\cite{thomasLanguages}.

\paragraph{Topology and measure}
Recall that the set of~all full trees over an~alphabet $\Gamma$, denoted $\treesInfinite{\Gamma}$, is~the set of~all functions \fun{t}{\setPositions}{\Gamma}.
This set can naturally be enhanced with a topology in such a way that it becomes a homeomorphic copy of~the Cantor set, see~Gogacz et al.~\cite{michalewskiMeasureFinal} for 
more detailed definitions.

Note that the family of~the sets of~the form
\begin{equation}
\label{eq:baseTopology}
\{\fun{t}{\setD^{\ast}}{\Gamma} \mid \tau \prefix t\},
\end{equation}
where $\tau \in \treesFinite{{\Gamma}}$ is~a~full tree of~some finite height, constitutes a~\emph{base of~that topology}. A~set from the basis is~called a~\emph{base set}. Notice that the above set equals $\setBall{\tau}$, see~\eqref{eq:ball}.

A~set is~\emph{open} if~it~is~a~union, possibly empty, of~some base sets;
\emph{closed} if~it~is~the~complement of~an~open set;
\emph{clopen} if~it~is~both open and closed.
Note that our chosen basis consists of~clopen sets. Additionally, the family of clopen sets is closed under finite Boolean combinations.

The \emph{uniform} measure $\mu^{\ast}$ defined on the set of~full trees $\treesInfinite{\Gamma}$
 is the unique complete probability Borel measure such that for every finite tree $\tau \in \treesFinite{\Gamma}$ we have that
$\standardMeasure{\bB_{\tau}} = |\Gamma|^{-|\nodes{\tau}|}$. In other words, this measure is~such that for every node $u \in \dS^{\ast}$ and label $a \in \Gamma$, 
the probability that in a random tree~$\treeZ$ the node $u$ is~labelled with the letter $a$ is~$\frac{1}{|\Gamma|}$, i.e.~$\standardMeasure{\{ t \in \treesInfinite{\Gamma} \mid t(u)=a \}} = \frac{1}{|\Gamma|}$.
Notice that for any two distinct positions $u$, $v$ and two letters $a$, $b$ the events $S_{u,a} = \{ t \in \treesInfinite{\Gamma} \mid t(u)=a \}$ and $S_{v,b} = \{ t \in \treesInfinite{\Gamma} \mid t(v)=b \}$ are independent, i.e. 
\begin{equation}
\label{eq:indepencence}
\standardMeasureS(S_{u,a} \cap S_{v,b}) = \standardMeasureS(S_{u,a}) \cdot \standardMeasureS(S_{v,b}).
\end{equation}

As~the following theorem implies, every regular set of~trees $L$ has a~well\=/defined uniform measure $\standardMeasure{L}$.
\begin{thm}[\cite{michalewskiMeasureFinal}]
    Every regular language $L$ of~infinite trees is~universally measurable, i.e.~for every complete Borel measure $\mu$ on the set of~trees, we know that $L$ is~$\mu$\=/measurable.
\end{thm}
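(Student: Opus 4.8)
The plan is to reduce the statement to a classical result of~descriptive set theory, namely that every analytic (and hence every co-analytic) subset of~a~Polish space is~universally measurable. First I~would recall that the space $\treesInfinite{\Gamma}$, equipped with the topology generated by~the base sets~\eqref{eq:baseTopology}, is~a~homeomorphic copy of~the Cantor set, hence a~compact metrisable, in particular Polish, space; this is~already noted in~the excerpt. Consequently, to~prove that a~regular language $L$ is~$\mu$\=/measurable for every complete Borel measure $\mu$, it~suffices to~show that $L$ lies in~some $\sigma$\=/algebra all of~whose members are known to~be universally measurable. The natural candidate is~the $\sigma$\=/algebra of~sets with the Baire property enlarged to~the class of~analytic sets: by~Lusin's theorem every $\boldsymbol{\Sigma}^1_1$ set is~universally measurable, and the complement of~a~universally measurable set is~universally measurable, so every $\boldsymbol{\Pi}^1_1$ set is~as~well.

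The key step is~therefore to~locate regular tree languages within the projective hierarchy, and in~fact to~observe that they sit very low: every regular language of~infinite trees is~$\boldsymbol{\Delta}^1_2$, and for the present purpose it is~enough to~place it~in~$\boldsymbol{\Sigma}^1_2 \cap \boldsymbol{\Pi}^1_2$. Concretely, given an~MSO sentence $\varphi$ defining $L$, one~passes to~an~equivalent nondeterministic parity tree automaton $\AA$ (the automata characterisation of~regularity is~quoted in~the excerpt). Then ``$t \in \lang(\AA)$'' unwinds to~``there exists a~run $\run$ of~$\AA$ on~$t$ such that on~every branch the parity condition holds'', which is~an~existential quantification over~$\run$ ranging in~a~Polish space, followed by~a~Borel (indeed $\boldsymbol{\Pi}^0_2$, a~countable intersection over branches of~parity conditions) matrix. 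This exhibits $L$ as~the continuous image of~a~Borel set, hence analytic; dualising via~complementation of~automata (or~simply noting that $\treesInfinite{\Gamma}\setminus L$ is~also regular, being defined by~$\neg\varphi$) shows $L$ is~also co-analytic. Either containment alone suffices: $\boldsymbol{\Sigma}^1_1 \subseteq \{$universally measurable$\}$ by~Lusin, so~$L$ is~universally measurable.

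I~expect the main obstacle to~be purely expository rather than mathematical: one~must be careful about which ``Polish space of~runs'' is~used and verify that projection along it~is~continuous, and one~must state precisely the version of~Lusin's universal-measurability theorem being invoked (for arbitrary complete Borel probability measures, not merely Lebesgue measure). A~secondary subtlety is~that the theorem as~stated quantifies over \emph{all} complete Borel measures $\mu$, so~the argument must not use any special feature of~the uniform measure $\standardMeasureS$; the descriptive-set-theoretic route handles this uniformly, which is~precisely why it~is~preferable to~any direct combinatorial estimate. Finally, I~would remark that one can entirely bypass automata by~appealing to~the fact that MSO-definable sets are Borel-Wadge-bounded within the regular sets and citing the known bound that regular tree languages are $\boldsymbol{\Delta}^1_2$; but the automaton-based analyticity argument is~self-contained and is~the version I~would write out.
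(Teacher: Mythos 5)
There is a genuine gap, and it sits at the heart of your argument. First, note that the paper does not prove this theorem at all: it is quoted from Gogacz et al.\ \cite{michalewskiMeasureFinal}, so the only question is whether your proposed proof works — and it does not. The unwinding of ``$t\in\lang(\AA)$'' is \emph{not} ``existential quantification over runs of a Borel matrix''. For a fixed run $\run$, the condition ``every branch satisfies the parity condition'' is a universal quantification over the \emph{uncountable} space $\dS^{\omega}$ of branches (your phrase ``a countable intersection over branches'' is where the error enters); this condition is $\boldsymbol{\Pi}^1_1$ and in general properly so — already ``every branch carries only finitely many $b$'s'' is $\boldsymbol{\Pi}^1_1$\=/complete. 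Hence projecting along the run only places $L$ in $\boldsymbol{\Sigma}^1_2$, and by complementation in $\boldsymbol{\Delta}^1_2$, which is Rabin's classical bound. The stronger claim you need — that every regular tree language is analytic or co\-/analytic — is simply false: the Rabin--Mostowski index hierarchy is infinite, and for instance the game tree languages of higher index are regular but lie strictly beyond $\boldsymbol{\Sigma}^1_1\cup\boldsymbol{\Pi}^1_1$ (only Büchi\=/recognisable languages are guaranteed analytic). So Lusin's theorem does not apply.

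Your fallback, ``it is enough to place $L$ in $\boldsymbol{\Sigma}^1_2\cap\boldsymbol{\Pi}^1_2$'', also fails: universal measurability of $\boldsymbol{\Delta}^1_2$ (or $\boldsymbol{\Sigma}^1_2$) sets is not a theorem of ZFC — under $V=L$ there are $\boldsymbol{\Delta}^1_2$ sets that are not Lebesgue measurable, and measurability at this projective level requires large\-/cardinal\-/type hypotheses. This is precisely why the cited result is nontrivial: the proof in \cite{michalewskiMeasureFinal} does not go through the projective hierarchy but shows (in ZFC) that the game tree languages, which are Wadge\-/cofinal among regular languages, belong to Kolmogorov's class of $R$\=/sets, a class of universally measurable sets closed under continuous preimages; universal measurability of an arbitrary regular language then follows by reducing it continuously to such a game language. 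If you want a correct self\-/contained argument restricted to an easy case, your route does work verbatim for Büchi (hence for the safety automata of Section~\ref{sec:open-sets}) languages, since those are genuinely analytic — but not for regular languages in general.
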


Hence, the following problem is~well\=/defined.

\begin{problem}[The $\standardMeasure{\text{MSO}}$ problem]
    \label{problem:computeMeasure}
   Is~there an algorithm that given an~MSO formula $\varphi$ computes $\standardMeasureBig{\lang(\varphi)}$?
\end{problem}

With the $\standardMeasure{\text{MSO}}$ problem we associate the following decision problem.

\begin{problem}[The positive $\standardMeasure{\text{MSO}}$ problem]
    \label{problem:positiveMeasure}
    Given an~MSO formula $\varphi$, decide whether $\standardMeasureBig{\lang(\varphi)} > 0$.
\end{problem}

If $\CC$ is~a class of~regular languages of~infinite trees, then by \emph{the (positive) $\standardMeasure{\CC}$ problem}, 
we understand the above where possible input languages are restricted to the class $\CC$.
If we restrict the class $\CC$ to formulae over the signature $\Sigma$ we denote it by $\CC(\Sigma)$.

The problem, in this form, was stated by Michalewski and Mio~\cite{michalewskiCompMeasure}.
It~is~open in the general case, but some partial results have been obtained, see the paragraph \emph{Related work} for details.

\section{Simple examples}
\label{sec:examples}

To~better understand the properties of~the uniform measures let us~consider some
simple sets of~infinite trees.
The presented examples not only give an~insight into the behaviour of~the uniform measures, but also will be used in~the proofs in~the following sections.

We start with a~simple lemma concerning the existence of sub\=/trees.

\begin{lemma}
    \label{lemma:basicMeasuresLemma}
    Let $\treeZ$ be~a~tree over the alphabet $\Gamma$ and~$u \in \setNodes$ be~a~position.
    \begin{enumerate}\label{enum:basiecMeasuresLemma}
        \item \label{item:finitePrefix} If $\treeZ$ is~finite and $L = \setBallInNode{\treeZ}{u} = \{\treeZ' \in \treesInfinite{{\Gamma}} \mid \isPrefix{\treeZ}{\subtreeAtNode{\treeZ'}{u}}\}$ then $\standardMeasure{L} = {\Gamma}^{-|\nodes{\treeZ}|}$.
        \item \label{item:finiteSubtree} If $\treeZ$ is~finite and 
        $L = \{\treeZ' \in \treesInfinite{{\Gamma}} \mid \exists v. (u \ancestor v) \land (\treeZ \weakAncestor \subtreeAtNode{\treeZ'}{v})\}$ then $\standardMeasure{L} = 1$.
        \item \label{item:infinitePrefix} If $\treeZ$ is~infinite and $L = \setBallInNode{\treeZ}{u} = \{\treeZ' \in \treesInfinite{{\Gamma}} \mid \isPrefix{\treeZ}{\subtreeAtNode{\treeZ'}{u}}\}$ then $\standardMeasure{L} = 0$.
    \end{enumerate}
\end{lemma}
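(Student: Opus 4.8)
The three items follow from the definition of the uniform measure together with the independence property \eqref{eq:indepencence}, so the plan is to handle each by a direct computation.

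For item \ref{item:finitePrefix}, note that $\setBallInNode{\treeZ}{u}$ is exactly the event that for every node $w \in \nodes{\treeZ}$ the letter $\treeZ'(uw)$ equals $\treeZ(w)$. Since $\treeZ$ is finite, this is a finite intersection of events of the form $S_{uw,\treeZ(w)}$ indexed by $w \in \nodes{\treeZ}$, all of which involve pairwise distinct positions $uw$. By \eqref{eq:indepencence} (applied inductively, or invoked directly as a statement about finitely many independent events), the measure is the product $\prod_{w \in \nodes{\treeZ}} \standardMeasureS(S_{uw,\treeZ(w)}) = \prod_{w \in \nodes{\treeZ}} |\Gamma|^{-1} = |\Gamma|^{-|\nodes{\treeZ}|}$. (In fact $\setBallInNode{\treeZ}{u}$ is a base set — a ball $\setBall{\tau}$ for a suitable finite full tree $\tau$ containing $\treeZ$ shifted to $u$ — and one could alternatively just invoke the defining property $\standardMeasure{\setBall{\tau}} = |\Gamma|^{-|\nodes{\tau}|}$, but since the tree in the ball need not be a full tree one should spell out the independence argument.)

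For item \ref{item:infinitePrefix}, if $\treeZ$ is infinite then for every finite prefix $\treeZ_k \weakAncestor \treeZ$ we have $L \subseteq \setBallInNode{\treeZ_k}{u}$, hence by monotonicity of the measure and item \ref{item:finitePrefix}, $\standardMeasure{L} \le |\Gamma|^{-|\nodes{\treeZ_k}|}$. Since $\treeZ$ is infinite, $|\nodes{\treeZ_k}| \to \infty$ as $k \to \infty$ and $|\Gamma| \ge 2$ (an alphabet is nonempty and a one\=/letter alphabet gives only a single tree, in which case every prefix is already the whole tree — this degenerate case can be dismissed separately, or one notes $|\Gamma|\ge 2$ is needed for an infinite $\treeZ$ to have a nontrivial ball); therefore $\standardMeasure{L} = 0$.

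For item \ref{item:finiteSubtree}, the complement of $L$ is the set of trees $\treeZ'$ in which $\treeZ$ does \emph{not} appear as a prefix of any subtree rooted strictly below $u$. Fix a sequence of positions $u \ancestor u_1 \ancestor u_2 \ancestor \cdots$ that are pairwise "far apart", e.g. $u_i = u\dL^{i\cdot h}$ where $h$ is the height of $\treeZ$, so that the events $A_i = \{\treeZ' \mid \treeZ \weakAncestor \subtreeAtNode{\treeZ'}{u_i}\}$ depend on pairwise disjoint sets of positions and are therefore independent, each of probability $p \eqdef |\Gamma|^{-|\nodes{\treeZ}|} > 0$ by item \ref{item:finitePrefix}. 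Then $\treesInfinite{\Gamma} \setminus L \subseteq \bigcap_i (\treesInfinite{\Gamma}\setminus A_i)$, and by independence $\standardMeasureS\big(\bigcap_{i \le n}(\treesInfinite{\Gamma}\setminus A_i)\big) = (1-p)^n \to 0$, so $\standardMeasure{\treesInfinite{\Gamma}\setminus L} = 0$, i.e. $\standardMeasure{L} = 1$. The only mild subtlety, and the step needing the most care, is making the independence of the $A_i$ precise: one should choose the $u_i$ so that the full subtrees of height $h$ rooted at the $u_i$ occupy disjoint node sets, and then argue that $A_i$ is a clopen event measurable with respect to the coordinates in that node set, so that \eqref{eq:indepencence} applies to the collection $\{A_i\}$ (extended from pairs to finite families, which is routine for the uniform product measure).
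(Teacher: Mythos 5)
Your proposal is correct and follows essentially the same route as the paper: item~1 by independence of the coordinate events, item~3 by monotone approximation with finite prefixes of $\treeZ$, and item~2 by planting infinitely many independent copies of the prefix event strictly below $u$ and letting $(1-p)^n \to 0$; the only real difference is that the paper uses the pairwise incomparable positions $u\dL^i\dR$ rather than positions along a single path. One small slip in your illustrative choice for item~2: under the paper's convention that a tree of height $h$ has nodes of length up to $h$, the position sets used by the events at $u_i=u\dL^{ih}$ need not be disjoint (the node $u\dL^{(i+1)h}$ can lie in both the $i$-th and the $(i{+}1)$-st window, and for $h=0$ all $u_i$ collapse to $u$, which is not strictly below $u$); spacing by $h+1$, or taking the incomparable positions $u\dL^i\dR$ as the paper does, fixes this immediately, and since you explicitly state the correct disjointness requirement this is cosmetic rather than a gap. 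Your observation that item~3 implicitly needs $|\Gamma|\ge 2$ is a fair point that the paper leaves tacit.
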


\begin{proof}
    The proof of~Item~\ref{item:finitePrefix} is~straightforward. To~prove Item~\ref{item:finiteSubtree}, let~$L_i$ be~the set $L_i \eqdef \setBallInNode{\treeZ}{u\dL^i\dR} = \{\treeZ' \in \treesInfinite{{\Gamma}} \mid \isPrefix{\treeZ}{\subtreeAtNode{\treeZ'}{(u\dL^i\dR)}}\}$. Then, for every $j \geq 0$ we~have that~$L_j \subseteq L$ and, in~consequence, 
    $\overline{L} \subseteq \bigcap_{j \geq 0} \overline{L_j}$. Hence, for every $j \geq 0$ we~have that
    
    \[
    1 - \standardMeasure{L} = \standardMeasure{\overline{L}} \leq \standardMeasure{\bigcap_{j > i \geq 0} \overline{L_i}} = \big(1 - |\Gamma|^{-|\nodes{\treeZ}|}\big)^{j},
    \]
    where the last inequality follows from the fact that the nodes $u\dL^l\dR$ and $u\dL^k\dR$ are incomparable for $k \neq l$, thus $L_i$~are independent and
    \[
    \standardMeasure{\bigcap_{j > i \geq 0} \overline{L_i}} = \prod_{0 \leq i < j} \standardMeasure{\overline{L_i}}=  \prod_{0 \leq i < j} (1 -  |\Gamma|^{-|\nodes{\treeZ}|}) = \big(1 - |\Gamma|^{-|\nodes{\treeZ}|}\big)^{j}.
    \]
    Taking the limit, we~conclude Item~\ref{item:finiteSubtree}.
    
    To~prove Item~\ref{item:infinitePrefix}, let~$t_i$ be~a~sequence of~finite trees such that for every $i \geq 0$ we~have that $\treeZ_i \prefix \treeZ_{i+1} \prefix \treeZ$ and $|\nodes{\treeZ_i}| < |\nodes{\treeZ_{i+1}}|$.
    Since the sequence of~sets~$\setBallInNode{\treeZ_i}{u}$ is~decreasing and its limit contains the set $\setBallInNode{\treeZ}{u}$, i.e.~$\setBallInNode{\treeZ_i}{u} \supseteq \setBallInNode{\treeZ_{i+1}}{u} \supseteq \setBallInNode{\treeZ}{u}$, we~have that 
    $\standardMeasure{\setBallInNode{\treeZ}{u}} \leq \lim\limits_{i \to +\infty}\standardMeasure{\setBallInNode{\treeZ_i}{u}} = \lim\limits_{i \to +\infty} {|\Gamma|}^{-|\nodes{\treeZ_i}|} =  0$.
\end{proof}

The above examples may suggest that the uniform measures enjoy a~form of~\emph{Kolmogorov's zero\=/one law}:
e.g.~in~a~random tree a~given finite structure exists with probability~$1$, whereas a~given infinite structure exists with probability~$0$.
It~is not exactly the case, as~can be~seen by~the following examples. 

\begin{example}
    \label{ex:someMeasuresOfSets}
    Let $\Gamma = \{a,b,c\}$.
    \begin{enumerate}\label{enum:someMeasuresOfSets}
        \item \label{item:laprefixes} If~$L_a$ is~the set of~trees over the alphabet~$\Gamma$ with arbitrarily long sequences of~$a$\=/labelled nodes, i.e.
        \[
        L_a = \{\treeZ \in \treesInfinite{\Gamma} \mid \forall k \ge 0.\ \exists w,v \in \dS^{\ast}. \big(
        \left(|v| \geq k \right) \land \forall u \sqsubseteq v.\ \treeZ(wu) =a\big)\},
        \] then $\standardMeasure{L_a} = 1$.
        \item \label{item:la3path}If~$L_{a3}$ is~the language of~trees over the alphabet $\Gamma$ with an~infinite $\{a\}$\=/labelled path starting at~the root, i.e.
        \[
        L_{a3} = \{\treeZ \in \treesInfinite{\Gamma}\mid \exists w \in \dS^{\omega}.\  \forall u \sqsubsetneq w.\ \treeZ(u) = a\},
        \] then $\standardMeasure{L_{a3}}=0$.
        \item \label{item:la2path} If $L_{a2}$ is the language  of trees over the alphabet $\{a,b\}$ with an~infinite $\{a\}$\=/labelled path starting at~the root, i.e.
        \[
        L_{a2} = \{\treeZ \in \treesInfinite{\{a,b\}}\mid \exists w \in \dS^{\omega}.\ \forall u \sqsubsetneq w.\ \treeZ(u) = a\},
        \] then $\standardMeasure{L_{a2}}=0$.
        \item \label{item:labpath} If $L_{ab}$ is the language of trees over the alphabet $\Gamma$ with an~infinite $\{a,b\}$\=/labelled path starting at~the root, i.e. 
        \[
        L_{ab} = \{\treeZ \in \treesInfinite{\Gamma}\mid \exists w\in\dS^{\omega}.\  \forall u \sqsubsetneq w.\ \treeZ(u) \in \{a,b\}\},
        \] then $\standardMeasure{L_{ab}}=\frac{1}{2}$.

    \end{enumerate}
\end{example}

\begin{proof}[Calculating the measures]
    To~see Item~\ref{item:laprefixes}, let $\treeZ^i$ be~a~full tree of~height~$i$ such that every node in~$\nodes{\treeZ^i}$ is labelled~$a$
    and let $L^i$~be~the language of~trees having~$t^i$ as~a~prefix of~its~sub\=/tree at some node~$u$.
    Then, by~Item~\ref{item:finiteSubtree} of Lemma~\ref{lemma:basicMeasuresLemma}  we~have that $\standardMeasure{L^i} = 1$.
    Moreover, $\bigcap_{i \geq 1} L^i \subseteq L_{a}$ and $L^{i+1} \subseteq L^i$.
    Since every measure is~monotonically continuous, we have that 
    \[
    \standardMeasure{L_{a}} \geq \standardMeasureBig{\bigcap_{i \geq 1} L^i} = \lim\limits_{n \to +\infty} \standardMeasureBig{\bigcap_{n \geq i \geq 1} L^i} = 1.
    \]
    
    Let $\phi(\treeZ)$ stay for ``in~the tree~$\treeZ$ there is~an~infinite $\{a\}$\=/labelled path starting at~the root'' then
    Item~\ref{item:la3path} follows from the fact that the language in~question is~regular, thus measurable, and its measure satisfies the following equation.%
    \footnote{Here, we~slightly abuse the notation for the sake of~readability.
        We write $\standardMeasure{\psi}$ instead of~$\standardMeasure{ \{t \in \treesInfinite{\Gamma} \mid \psi\}}$. Moreover, we write $\psi(t.u)$ for the statement ``the sub\=/tree of $t$ in $u$ satisfies $\psi$''.}
    \begin{align*}
    \standardMeasure{L_{a3}} = \standardMeasureBig{\varphi(t) \land \treeZ(\varepsilon) {\not=} a}  & +\; \\
    \standardMeasureBig{\treeZ(\varepsilon) {=} a}&\cdot
    \big(
    \standardMeasure{\phi(\subtreeAtNode{\treeZ}{\dL})}
    +
    \standardMeasure{\phi(\subtreeAtNode{\treeZ}{\dR})}
    - \standardMeasure{\phi(\subtreeAtNode{\treeZ}{\dL}) \land \phi(\subtreeAtNode{\treeZ}{\dR})}
    \big)
    \end{align*}
    The equation states that the measure of~$L_{a3}$ is~equal to~the sum of the measures
    of~two sets of~trees.
    The first set consists of~all trees~$t$ such that the root is~not labelled~$a$ and the tree $t$~satisfies $\phi$.
    The second set consist of~all trees $t$ such that the root is~labelled~$a$, the sub\=/tree at~the left child 
    of~the root satisfies~$\phi$, i.e.~$\phi(\subtreeAtNode{\treeZ}{\dL})$, or~the sub\=/tree at~the right child 
    of~the root satisfies~$\phi$, i.e.~$\phi(\subtreeAtNode{\treeZ}{\dR})$.
    
    Since the set of~all possible trees at~left child (or, at~right child) of~the root is~the set of~all trees,
    we~get the equation
    \[\standardMeasure{L_{a3}} = \frac{1}{3}\cdot \big(2\standardMeasure{L_{a3}} - \standardMeasure{L_{a3}}^2\big)=\frac{2}{3}\standardMeasure{L_{a3}}-\frac{1}{3}\cdot\standardMeasure{L_{a3}}^2\]
    implying that $\standardMeasure{L_{a3}} =  0$ or $\standardMeasure{L_{a3}} = -1$.
    Since the measure cannot be~negative, we~conclude that $\standardMeasure{L_{a3}} =  0$.
    
    Similarly, in~Item~\ref{item:la2path} we~get the equation
    \begin{equation}
    \label{example:measure4}
    \standardMeasure{L_{a2}} = \frac{1}{2}\cdot \big(2\standardMeasure{L_{a2}} - \standardMeasure{L_{a2}}^2\big)=\standardMeasure{L_{a2}}-\frac{1}{2}\cdot\standardMeasure{L_{a2}}^2
    \end{equation}
    implying that $\standardMeasure{L_{a2}} =  0$.
    
    In~Item~\ref{item:labpath}, we~get the equation
    \begin{equation}
    \label{example:measure3}
    \standardMeasure{L_{ab}} = \frac{2}{3}\cdot \big(2\standardMeasure{L_{ab}} - \standardMeasure{L_{ab}}^2\big)=\frac{4}{3}\standardMeasure{L_{ab}}-\frac{2}{3}\cdot\standardMeasure{L_{ab}}^2
    \end{equation}
    implying that either $\standardMeasure{L_{ab}} =  \frac{1}{2}$ or~$\standardMeasure{L_{ab}} =  0$.
    Thus we~need to~look at~this example a~bit more carefully.
    Consider a~sequence of~languages~$\{A^i\}_{i \geq 0}$, where $A^0 = \treesInfinite{\Gamma}$
    and $A^i$~is~the language such that there is~an~$\{a,b\}$-labelled path of~length~$i$ beginning at~the root.
    Then, we~claim the following.
    
    \begin{claim}
        \label{claim:prefixesConvergeExamplesMeasures}
        $\bigcap_{i \geq 1} A^i = L_{ab}$
    \end{claim}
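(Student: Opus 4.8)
The plan is to prove the two inclusions of the claimed equality separately; one of them is trivial, and the other is a straightforward compactness (K\"onig's lemma) argument.

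For the inclusion $L_{ab} \subseteq \bigcap_{i \ge 1} A^i$: if $t \in L_{ab}$, fix an infinite path $w \in \dS^{\omega}$ with $t(u) \in \{a,b\}$ for all $u \sqsubsetneq w$. For each $i \ge 1$, the length\=/$i$ prefix of $w$ witnesses that $t \in A^i$, since all of its strict prefixes are strict prefixes of $w$ and hence $\{a,b\}$\=/labelled. Thus $t$ lies in every $A^i$.

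For the converse inclusion $\bigcap_{i \ge 1} A^i \subseteq L_{ab}$, I would fix $t \in \bigcap_{i \ge 1} A^i$ and consider the set of nodes
\[
P \eqdef \{\, w \in \dS^{\ast} \mid t(u) \in \{a,b\} \text{ for all } u \sqsubsetneq w \,\}.
\]
First, $P$ is nonempty (it contains $\varepsilon$ vacuously) and prefix\=/closed: if $v \sqsubseteq w \in P$, then every strict prefix of $v$ is a strict prefix of $w$, hence $\{a,b\}$\=/labelled, so $v \in P$. Second, for every $i \ge 1$ the membership $t \in A^i$ supplies a node at depth $i$ whose strict prefixes are all $\{a,b\}$\=/labelled, that is, a node of $P$ at depth $i$; hence $P$ contains nodes of arbitrarily large depth. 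Regarded as a subtree of the full binary tree $\dS^{\ast}$, the set $P$ is therefore infinite and finitely branching, so by K\"onig's lemma it has an infinite branch $w \in \dS^{\omega}$, i.e.\ one whose every finite prefix lies in $P$.

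Finally I would check that this $w$ witnesses $t \in L_{ab}$, i.e.\ that $t(u) \in \{a,b\}$ for every $u \sqsubsetneq w$: given such a $u$, pick a direction $d \in \dS$ with $ud \sqsubseteq w$ (possible because $w$ is infinite), note that $ud \in P$ as a finite prefix of $w$, and since $u \sqsubsetneq ud$ the definition of $P$ gives $t(u) \in \{a,b\}$. Hence $t \in L_{ab}$, which completes the argument. I do not anticipate a genuine obstacle; the only delicate points are the bookkeeping forced by the strict\=/prefix convention in the definition of $L_{ab}$ (handled in the last step) and the explicit invocation of K\"onig's lemma, which is available precisely because $\dS$ is finite, so the full binary tree is finitely branching.
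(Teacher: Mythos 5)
Your proof is correct and follows essentially the same route as the paper: the easy inclusion from prefixes of the infinite path, and the converse via K\"onig's lemma applied to the (prefix\=/closed, finitely branching) set of nodes witnessing finite $\{a,b\}$\=/labelled paths. You merely spell out the compactness bookkeeping that the paper leaves implicit.
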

    \begin{proof}
        Since an infinite path contains sub\=/paths of~arbitrary length, we~have that $\bigcap_{i \geq 1} A^i \supseteq L_{ab}$.
        For the reverse inclusion, let $t \in \bigcap_{i \geq 1} A^i$.
        Then, for every $i \geq 0$ the tree~$t$ has an~$\{a,b\}$\=/labelled path of~length~$i$.
        Since~$t$ is~a~binary tree, K\"{o}nig's lemma assures that the tree~$t$ has an~infinite $\{a,b\}$\=/labelled path. This concludes the proof of~the claim.
    \end{proof}
    
    Now, for every $i \geq 0$ we~have that $A^{i+1} \subseteq A^i$ and 
    \begin{equation}
    \label{example:recursiveMeasure}
    \standardMeasure{A^{i+1}} = \frac{2}{3}\cdot \big(2\standardMeasure{A^i} - \standardMeasure{A^i}^2\big)=\frac{4}{3}\standardMeasure{A^i}-
    \frac{2}{3}\cdot\standardMeasure{A^i}^2.
    \end{equation}
    Note that if~$\standardMeasure{A^i} \ge \frac{1}{2}$, then $\standardMeasure{A^{i+1}} \ge \frac{1}{2}$. Indeed, the quadratic function $f(x) = \frac{2}{3}(2x - x^2)$ is monotonically  increasing on the interval $[-\infty, 1] $ and we have that $f(1) = \frac{2}{3}$ and $f(\frac{1}{2}) = \frac{1}{2}$. Since $\standardMeasure{A^0} = 1 \ge \frac{1}{2}$, we conclude that $\standardMeasure{L_{ab}} =  \frac{1}{2}$. 
\end{proof}

Before we proceed, observe that the languages $L_{a3}$, $L_{a2}$, and $L_{ab}$ can all be recognised by some non\=/deterministic safety automata (see Proposition~\ref{pro:closed-is-safety} and Section~\ref{sec:safety-auto}). Thus, instead of computing the measures by hand, we could have invoked Theorem~\ref{thm:openSetsUpperBound}. It is not incidental, because the idea of~inductive approximation of the~measure of~the set $L_{ab}$, expressed by~Equation~\eqref{example:recursiveMeasure}, is the cornerstone of the general construction performed in~Section~\ref{sec:open-sets}.

\section{First-order definable languages}
\label{sec:fo-sets}

The ideas presented in~both Lemma~\ref{lemma:basicMeasuresLemma} and Example~\ref{ex:someMeasuresOfSets} allow us~to~compute the measures of~sets of trees defined by~certain first\=/order formulae.

\subsection{First-order definable languages without descendant}
\label{subsec:fo-sets-no-desc}

\begin{thm}
    \label{thm:FOcomputable}
    Let $\Gamma$ be an~alphabet and let $\varphi$ be~a~first\=/order sentence over the signature~$\{\rootP, s_\dL, s_\dR, s\}\cup \Gamma$. Then the measure $\standardMeasure{\lang(\varphi)}$ is~rational and computable in~three\=/fold exponential space.
\end{thm}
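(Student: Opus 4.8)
The plan is to exploit that, without the descendant relation, first\=/order logic over trees is \emph{local}. Since the root is first\=/order definable --- it is the unique node with no $s$\=/predecessor --- one may first rewrite $\varphi$ so as to drop the constant $\rootP$ from the signature, at the cost of a bounded increase of the quantifier rank; the remaining structure carries only the binary child relations and the unary label predicates, so its Gaifman graph has degree at most $3$. Hence Hanf's locality theorem applies (it is valid for bounded\=/degree structures, finite or not), and from $\varphi$, of quantifier rank $q=O(|\varphi|)$, one computes a radius $r\le 3^{q}$ and a threshold $k\le 3^{q}$ together with a sentence $\psi$, equivalent to $\varphi$ over $\treesInfinite{\Gamma}$, that is a Boolean combination of basic Hanf sentences ``there are at least $k'$ nodes whose $r$\=/neighbourhood, viewed as a pointed labelled structure, is isomorphic to $\sigma$'', where $\sigma$ ranges over a finite set $\Sigma_r$ of $r$\=/neighbourhood types and $k'\le k$. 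In a binary tree an $r$\=/neighbourhood has at most $3\cdot 2^{r}$ nodes, so $|\Sigma_r|$ is at most triple\=/exponential in $|\varphi|$; this will be the source of the final complexity bound.

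Next I would split $\Sigma_r$ into \emph{rooted} types --- those in which some node at distance strictly less than $r$ from the centre has no $s$\=/predecessor --- and \emph{internal} types (a node can realise an internal type only if its depth is at least $r$, since otherwise its $r$\=/ball would contain the actual root at distance $<r$). For an internal type $\sigma$ the key claim is a zero\=/one law for the number $N_\sigma(t)$ of nodes of $t$ whose $r$\=/neighbourhood is $\sigma$: with probability $1$ it is $0$ or $\infty$, and it is $\infty$ exactly when $\sigma$ is realised in some full tree, which is a decidable finite condition. For the nontrivial direction I would let $\pi_\sigma\in\dS^{r}$ be the length\=/$r$ path from the $r$\=/th ancestor of the centre down to the centre, as recorded by $\sigma$, and consider the nodes $v_j\eqdef\dL^{j(2r+1)}\pi_\sigma$ for $j\ge 1$. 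One checks that the $r$\=/ball of $v_j$ is contained in the set of descendants of $\dL^{j(2r+1)}$ of depth at most $j(2r+1)+2r$, that these regions are pairwise disjoint and all of the same shape, and that the probability that a random labelling turns the ball of $v_j$ into an isomorphic copy of $\sigma$ is a fixed $c_\sigma>0$ (positive since $\sigma$ is realisable). Then the events ``the $r$\=/ball of $v_j$ is isomorphic to $\sigma$'' are independent with common positive probability, so Borel-Cantelli gives infinitely many of them almost surely. This is the general form of the phenomenon witnessed by Item~\ref{item:finiteSubtree} of Lemma~\ref{lemma:basicMeasuresLemma}.

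The rooted types behave oppositely: a node realising such a $\sigma$ has depth $<r$, so there are fewer than $2^{r}$ candidates, and the $r$\=/ball of any node of depth $<r$ consists only of nodes of depth $<2r$. Hence the whole vector $\big(N_\sigma(t)\big)_{\sigma\text{ rooted}}$ is a bounded function of the restriction of $t$ to the finite region $R\eqdef\dS^{<2r}$, and its distribution is computed explicitly by summing $|\Gamma|^{-|R|}$ over the labellings of $R$. Putting the two analyses together, almost surely every basic Hanf sentence about an internal type has the truth value already computed, and every basic Hanf sentence about a rooted type has a truth value determined by $t$ restricted to $R$; consequently
\[
\standardMeasure{\lang(\varphi)} \;=\; \standardMeasure{\lang(\psi)} \;=\; \sum_{L\colon R\to\Gamma} |\Gamma|^{-|R|}\cdot\big[\,\psi\text{ holds under }L\,\big],
\]
with the internal basic sentences evaluated to their almost\=/sure values. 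Each summand is a rational, so $\standardMeasure{\lang(\varphi)}$ is rational. For the complexity bound, $q=O(|\varphi|)$ forces $r,k\le 3^{O(|\varphi|)}$, so $R$ has at most double\=/exponentially many nodes and there are at most triple\=/exponentially many labellings $L$; the set $\Sigma_r$, the rooted/internal classification, the realisability checks, and the normal form $\psi$ are all computable in triple\=/exponential time, hence space; and the displayed sum is a triple\=/exponential sum of rationals of triple\=/exponential bit\=/size. All of this stays within three\=/fold exponential space.

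I expect the main obstacle to be twofold. First, the locality step must be carried out with explicit single\=/exponential parameters, and the dropped root constant together with the boundary nodes at depth exactly $r$ must be handled with care --- the $r$\=/neighbourhood of such a node is genuinely indistinguishable from an internal one, so it needs no special treatment, but this has to be verified rather than assumed. Second, the zero\=/one law for internal types has to be made fully precise; this, together with the explicit distribution over $R$, is where the real probabilistic content of the theorem lies.
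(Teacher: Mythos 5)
Your proposal is correct, but it takes a genuinely different route from the paper. The paper keeps the root constant in the signature, applies Gaifman's theorem (Theorem~\ref{thm:gaifman}, with the three\-/fold exponential construction of Theorem~\ref{thm:gaifmanCost}), shows that each basic $r$\-/local sentence has at most one \emph{root formula} (Fact~\ref{fact:uniqueRootFormula}) while its remaining local conjuncts are almost surely witnessed far from the root via Item~2 of Lemma~\ref{lemma:basicMeasuresLemma} (this is Lemma~\ref{lemma:basicSentenceMeasure}), replaces the formula by a clopen \emph{reduction} (Lemma~\ref{lemma:standardMeasureOfBCOfBasicFormulae}), and finally counts full trees of height $2r{+}1$ satisfying the reduction. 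You instead eliminate the root constant, pass to a Hanf normal form over degree\-/$3$ structures, and classify $r$\-/sphere types as rooted or internal; your zero\--one law for internal types (disjoint isomorphic regions along the branch $\dL^{j(2r+1)}$, independence of labels on disjoint node sets, Borel--Cantelli) is exactly the probabilistic content of Item~2 of Lemma~\ref{lemma:basicMeasuresLemma}, and your sum over labellings of $\{\dL,\dR\}^{<2r}$ plays the role of the paper's sum over full trees of height $2r{+}1$; both yield rationality and land in three\-/fold exponential space for the same reason, namely the triple\-/exponential number of sphere types respectively of prefixes. What your route buys: the rooted/internal dichotomy is arguably cleaner than the root\-/formula analysis and its reduction, and rigidity of spheres (the unique source node must be fixed) makes the probability $c_\sigma=|\Gamma|^{-|\mathrm{ball}|}$ completely explicit; your treatment of the depth\-/exactly\-/$r$ boundary and of the fact that rooted types are realised only at depth $<r$ is correct. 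What it costs: you rely on an \emph{effective} Hanf normal form whose equivalence must hold over all, including infinite, structures of degree at most $3$ (Hanf's original theorem does cover locally finite infinite structures, and effective constructions with elementary bounds exist, e.g.\ by Bollig and Kuske and by Heimberg, Kuske, and Schweikardt), so you should pin down the precise statement and its radius/threshold bounds --- your claimed threshold $k\le 3^{q}$ is not the standard one, but this is harmless, since even doubly exponential thresholds leave the dominating triple\-/exponential cost, and hence the three\-/fold exponential space bound, unchanged.
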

The proof utilises the \emph{Gaifman locality} to~partition the formula into two separate sub\=/formulae.
Intuitively, one sub\=/formulae describes the neighbourhood of~the root while the remaining one describes the tree ``far away from the root''.

\paragraph*{Gaifman normal form}
Let $\AA$ be a~relational structure.
 The \emph{Gaifman graph} of~$\AA$ is the undirected graph $G^{\AA}$ where the set 
of vertices is the universe of~$\AA$ and there is an edge between two vertices in $G^{\AA}$ if there is a relation $R$ in $\AA$ 
and a tuple $x \in R$ that contains $u$ and $v$.
The Gaifman distance $\td(u,v)$ between two elements $u$, $v$ of the universe of $\AA$ is the distance between $u$ and $v$ in the Gaifman graph.

Notice that if~we~exclude the ancestor relation from the tree structure, then the Gaifman graph of~a~tree $t$ is~induced by~the child relations only and the Gaifman distance coincides with the distance between the positions in the tree. This means, in particular, that for a~fixed finite distance $l$ and any~given node $u \in \dom{t}$ there is~only finitely many positions $v \in \dom{t}$ such that the Gaifman~distance between $u$ and $v$ is $l$ or less. However, if we~allow the ancestor relation then any two nodes in the tree are in Gaifman distance two or~less.
In~this section, from now on, we~exclude $\ancestor$ from the signature.

For a~natural number $r\in\bbN$, let $[\td(x,y) \leq r]$ be a~first\=/order formula stating that the distance between $x$ and $y$ is at most $r$. This formula has two free variables $x$, $y$ and its size depends on~$r$. Similarly, the negation of that formula will be denoted $[\td(x,y)>r]$.

We say that a first\=/order formula $\varphi(x)$ is an~\emph{$r$\=/local formula around $x$} if the quantifiers of $\varphi$ are restricted to the
$r$\=/neighbourhood of $x$, i.e.~if the quantifiers inside $\varphi(x)$ (except those inside the formulae $[\td(x,y) < r]$) have the form $\forall^{\le r}$ or $\exists^{\le r}$ defined as follows:
\begin{align*}
\exists^{\le r} y.\ \psi(y) &\eqdef \exists y.\ [\td(x,y) \leq r] \land \psi(y)\\
\forall^{\le r} y.\ \psi(y) &\eqdef \forall y.\ [\td(x,y) \leq r] \to \psi(y).
\end{align*}

We say that a~first\=/order sentence $\varphi$ is a \emph{basic $r$-local sentence} if it is of the form
\begin{equation}
\label{eq:basicLocalFormula}
\varphi \eqdef \exists x_1,\dots,x_n \left( \bigwedge_{i=1}^{n} \varphi^r_i(x_i) \land \bigwedge_{1 \leq i<j \leq n} [\td(x_i,x_j) > 2r] \right),
\end{equation}
where $\varphi_i^r(x)$ are $r$\=/local formulae around $x$.

\begin{thm}[Gaifman]
    \label{thm:gaifman}
    
    Every first-order sentence is equivalent to a Boolean combination of basic $r$-local sentences, where $r$ is a number depending
    on the size of the formula.
    Furthermore, $r$ can be chosen so that $r \leq 7^{\textit{qr}(\varphi)}$, where $\textit{qr}(\varphi)$ is the quantifier rank of $\varphi$.
\end{thm}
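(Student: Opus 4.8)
The plan is to prove the theorem by induction on the quantifier rank $\textit{qr}(\varphi)$, after strengthening the statement so that it also covers formulae with free variables. Concretely, I~would show that every first\=/order formula $\varphi(\bar x)$ of~quantifier rank $q$ is~equivalent to~a~Boolean combination of~$r$\=/local formulae around the variables of~$\bar x$ and~of~basic $r$\=/local sentences, for~some radius $r \le 7^{q}$. Taking $\bar x$ empty then gives the theorem, because an~$r$\=/local formula with no~centre is~itself a~sentence and~can be~absorbed into the Boolean combination of~basic local sentences. The base case $q=0$ is~immediate: a~quantifier\=/free formula mentions only the variables $\bar x$ and~positions at~bounded Gaifman distance from them (through the root constant and~the child relations $s_\dL, s_\dR, s$), hence is~$r_0$\=/local around $\bar x$ for~a~fixed constant $r_0$ determined by~the signature. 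The Boolean connectives are~handled for~free, since a~Boolean combination of~formulae of~the required shape is~again of~that shape, so the entire argument reduces to~the existential step.

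For the step $\varphi(\bar x) = \exists y.\ \psi(\bar x, y)$, I~would first apply the induction hypothesis to~$\psi$, obtaining an~equivalent Boolean combination of~$s$\=/local formulae around the tuple $\bar x y$ and~of~basic $s$\=/local sentences, with $s \le 7^{q-1}$. The basic local sentences are~parameter\=/free and~pass through the quantifier untouched, so the work concentrates on~the local formulae. The key structural lemma is~a~\emph{splitting lemma}: an~$s$\=/local formula around a~tuple whose centres are~pairwise at~distance greater than~$2s$ is~equivalent to~a~Boolean combination of~$s$\=/local formulae around the individual centres, since in~that case the $s$\=/neighbourhoods are~pairwise disjoint and~no~restricted quantifier $\exists^{\le s}$ or~$\forall^{\le s}$ can~reach from~one to~another. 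Guided by~this, I~would split on~the position of~the witness~$y$: either $[\td(y,x_i) \le 2s]$ for~some~$i$, in~which case the $s$\=/neighbourhood of~$y$ is~absorbed into an~enlarged neighbourhood of~$x_i$ and~the existential produces a~formula local around~$\bar x$ of~radius at~most~$3s$; or~$[\td(y,x_i) > 2s]$ for~all~$i$, the far case.

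In~the far case the splitting lemma rewrites each relevant disjunct as~$\gamma(\bar x) \land \delta(y)$, where $\delta$ runs over the finitely many $s$\=/local types around a~single centre, and~it~remains to~express $\exists y.\ \big([\td(y,x_i) > 2s \text{ for all } i] \land \delta(y)\big)$ in~the allowed form. For~this I~would count pairwise separated realisers of~$\delta$: for~each $m \in \bbN$, the statement ``there exist $m$ points, pairwise at~distance greater than~$4s$, each satisfying $\delta$'' is~literally a~basic local sentence of~the shape~\eqref{eq:basicLocalFormula}. The decisive combinatorial fact is~a~pigeonhole bound: a~single centre $x_i$ can~be within distance~$2s$ of~at~most one member of~any family that is~pairwise more than~$4s$ apart, so~as~soon as~there are~more than~$|\bar x|$ such separated realisers, at~least one of~them is~far from~all of~$\bar x$ and~the existential holds. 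Hence only the thresholds $m \le |\bar x|+1$ are~relevant, and~the far case reduces to~a~Boolean combination of~these basic local sentences together with a~bounded, and~therefore local around~$\bar x$, accounting of~the realisers that lie close to~$\bar x$. This is~exactly where arbitrary — in~particular infinite and~unbounded\=/degree — structures are~accommodated: one never counts all points of~a~neighbourhood, only separated witnesses up~to~a~threshold fixed by~the number of~free variables.

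Finally, the radius bound follows by~tracking its growth through the recursion: absorption of~a~witness neighbourhood and~the re\=/establishment of~a~separation margin for~the next stage's basic local sentences together turn a~radius~$s$ into a~radius bounded by~$7s+3$, so~the recurrence $r_q \le 7\,r_{q-1}+3$ gives $r_q \le (7^{q}-1)/2 \le 7^{q}$, as~claimed. I~expect the main obstacle to~be the far case: making the counting argument fully rigorous for~arbitrary structures, checking that the formula it~produces is~genuinely a~Boolean combination of~the permitted shapes, and~keeping all separation radii aligned so~that the splitting lemma applies, all~while respecting the exponential radius budget. The cleanest way I~know to~discharge precisely this obstacle is~an~Ehrenfeucht\=/Fra\"iss\'e argument in~which Duplicator maintains an~isomorphism between shrinking neighbourhoods of~the chosen points and~answers Spoiler's far moves using the players' agreement on~basic local sentences up~to~threshold equal to~the number of~remaining rounds; this is~also the form in~which the factor\=/$7$ per~quantifier is~most transparent.
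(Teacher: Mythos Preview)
The paper does not prove this theorem: it is stated as the classical Gaifman locality theorem, attributed to Gaifman, and invoked as a~known result without proof. Your proposal is therefore not to be compared against a~proof in the paper, but rather against the standard literature proof.

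That said, your sketch is the standard argument for Gaifman's theorem: strengthen the inductive statement to formulae with free variables, handle the existential step by a~near/far case split on the witness, absorb near witnesses into enlarged neighbourhoods of the existing centres, and in the far case use the pigeonhole counting of pairwise well\=/separated realisers of each local type (the observation that $|\bar x|+1$ separated realisers guarantee one far from all of $\bar x$). The recurrence $r_q \le 7r_{q-1}+3$ is exactly the one appearing in Gaifman's original paper and in standard expositions (e.g.\ Ebbinghaus--Flum), and it does yield $r_q \le 7^{q}$. One small point to watch: the root constant $\varepsilon$ in the signature means that atomic formulae may mention a~fixed element not among $\bar x$, so you should either treat $\varepsilon$ as an~implicit extra centre throughout or observe that $\mathit{root}(x)$ can itself be packaged as a~$0$\=/local formula around $x$; either way this is harmless for the bound. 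Your identification of the far case as the delicate part, and of the Ehrenfeucht--Fra\"iss\'e formulation as the cleanest way to verify it, is accurate.
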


As proved by Heimberg et al., cf.~\cite{heimbergOptimalGaifman}, the translation to Gaifman normal form can be costly.

\begin{thm}[\cite{heimbergOptimalGaifman}]
    \label{thm:gaifmanCost}
    There is a three-fold exponential algorithm on structures of degree 3 that transforms a first-order formula into its Gaifman normal form.
    Moreover, there are first-order formulae for which the three-fold exponential blow-up is unavoidable.
\end{thm}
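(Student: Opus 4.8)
The plan is to prove the positive and negative halves separately, fixing a relational signature $\sigma$ and a degree bound $d$ (the statement takes $d=3$, but nothing in the argument uses this beyond $d$ being a constant). Throughout, write $q = \textit{qr}(\varphi)$ for the quantifier rank of the input formula and let $r$ be a locality radius to be fixed. The key device is not Gaifman's theorem itself but the closely related \emph{Hanf normal form}: on structures of bounded degree every first-order sentence is equivalent to a Boolean combination of \emph{sphere-counting sentences} $\exists^{\ge k} x.\ \mathrm{sph}_\tau(x)$, where $\mathrm{sph}_\tau(x)$ asserts that the isomorphism type of the $r$-ball around $x$ (its $r$-sphere) equals $\tau$ and $k$ ranges up to a threshold $K$. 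Each such sentence is, by a packing argument available on bounded-degree structures, convertible into a basic $r$-local sentence in the sense of \eqref{eq:basicLocalFormula} with only polynomial overhead, so a Hanf normal form yields a Gaifman normal form; hence it suffices to bound the cost of producing the former.

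For the upper bound I would carry out three steps. First, invoke Hanf locality to fix $r = 2^{O(q)}$ and $K = O(q)$: two bounded-degree structures agreeing on the number of realizations (capped at $K$) of each $r$-sphere type cannot be distinguished by formulas of quantifier rank $q$. Second, count the relevant local types. In a structure of degree $d$ the $r$-ball around any point has at most $B \le 1 + d + \cdots + d^r = d^{O(r)} = 2^{2^{O(q)}}$ elements, so the number of isomorphism types of $r$-spheres over $\sigma$ is at most $2^{\mathrm{poly}(B)} = 2^{2^{2^{O(q)}}}$, i.e.\ three-fold exponential in $q$, and each formula $\mathrm{sph}_\tau(x)$ can be written out in size $\mathrm{poly}(B)$. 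Third, determine the correct Boolean combination by structural induction on $\varphi$: atoms and connectives are routine, and the crucial existential step rewrites $\exists x.\ \psi(x,\bar y)$ by summing, over the sphere type placed at the witness $x$ and its possible overlaps with the tuple $\bar y$, the contributions recorded in the normal form already computed for $\psi$. Since every intermediate normal form is a Boolean combination over the \emph{same} fixed (three-fold exponential) stock of local types, the induction does not compound the blow-up and the total working space stays three-fold exponential.

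For the lower bound I would exhibit an explicit family $(\varphi_n)_{n \ge 1}$ with $|\varphi_n| = O(n)$ such that every Gaifman normal form equivalent to $\varphi_n$ has size at least three-fold exponential in $n$. The formulas should be engineered so that, on the relevant bounded-degree structures, $\varphi_n$ separates a collection of three-fold-exponentially many distinct \emph{local profiles} (multisets of $r$-sphere types, capped at the threshold) that are pairwise distinguished only by their full profile. A counting argument then shows that any Boolean combination of basic $r$-local sentences realizing this separation must mention that many distinct local conditions: a single basic local sentence of bounded complexity can separate only a controlled number of profiles, so three-fold exponentially many of them are needed, forcing the claimed size.

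The main obstacle is the lower bound. The upper bound is essentially a careful accounting of type counts once the Hanf machinery is in place, whereas proving \emph{unavoidability} requires a delicate argument that no clever Boolean combination of short basic local sentences can encode the distinctions made by $\varphi_n$ more succinctly — typically an Ehrenfeucht--Fra\"iss\'e or profile-counting argument tailored to the family $(\varphi_n)$, establishing a matching three-fold exponential bound rather than merely a large one. Getting the three exponential levels of the lower bound to align exactly with the three levels produced by the type count in the upper bound is the technically demanding heart of the result.
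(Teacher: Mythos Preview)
The paper does not prove this statement at all: Theorem~\ref{thm:gaifmanCost} is quoted from~\cite{heimbergOptimalGaifman} (and the related discussion in~\cite{modelTheoryMakesFormulasLarge}) and used as a black box in the complexity analysis of Theorem~\ref{thm:FOcomputable}. There is therefore no ``paper's own proof'' to compare your proposal against.

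That said, your sketch is broadly in line with how the cited result is actually established. The upper bound there does pass through Hanf normal form on bounded-degree structures, with the three exponentials arising exactly as you describe (radius $r$ exponential in the quantifier rank, ball size exponential in $r$, number of sphere types exponential in the ball size). The lower bound in~\cite{heimbergOptimalGaifman} is indeed the delicate part and is obtained along the lines you indicate, by exhibiting short formulae whose equivalent Gaifman normal forms must distinguish three-fold exponentially many local configurations; your proposal correctly identifies this as the main obstacle but, as written, remains a plan rather than a proof --- the actual construction of the family $(\varphi_n)$ and the matching Ehrenfeucht--Fra\"iss\'e argument are where all the work lies.
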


\paragraph*{Root formula}
Now we~define the idea of~a~\emph{root formula}, i.e.~a~formula that necessarily describes the neighbourhood of the root. 
Let $\psi(x)$ be~an~$r$\=/local formula around~$x$.
We~say that $\psi(x)$~is~a~\emph{root formula} if~for
every tree~$\treeZ \in \treesInfinite{{\Gamma}}$ and every position~$u \in \dS^{\ast}$ 
if~$\treeZ, u \models \psi(x)$ then $\td(u,\varepsilon) < r $.
Note that every unsatisfiable formula is, by~the definition, a~root formula.

Let $\varphi$ be~a~basic~$r$\=/local sentence, i.e.~of~the form given by~\eqref{eq:basicLocalFormula}.
We~say that~$\varphi_i^r$, for $i \in \{1,\dots,n\}$, is~a~\emph{root} formula of~$\varphi$ if $\varphi_i^r$ is~a~root formula.

\begin{fact}
    \label{fact:uniqueRootFormula}
    For every satisfiable basic $r$\=/local sentence there is~at~most one root formula.
\end{fact}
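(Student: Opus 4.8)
The plan is a short argument by contradiction that exploits the separation clauses $[\td(x_i,x_j)>2r]$ together with the triangle inequality for the tree distance. Suppose $\varphi$ is a satisfiable basic $r$-local sentence of the form~\eqref{eq:basicLocalFormula}, and suppose towards a contradiction that two distinct conjuncts $\varphi_i^r$ and $\varphi_j^r$, with $i\neq j$, are both root formulae.

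Since $\varphi$ is satisfiable, first I would fix a witness: a full tree $\treeZ\in\treesInfinite{\Gamma}$ and positions $u_1,\dots,u_n\in\dS^\ast$ such that $\treeZ,u_k\models\varphi_k^r(x_k)$ for every $k$ and $\td(u_k,u_l)>2r$ for all $k<l$. In particular $\treeZ,u_i\models\varphi_i^r(x)$ and $\treeZ,u_j\models\varphi_j^r(x)$. By the definition of a root formula applied to each of $\varphi_i^r$ and $\varphi_j^r$, this forces $\td(u_i,\varepsilon)<r$ and $\td(u_j,\varepsilon)<r$.

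Now I would invoke the fact that, once the ancestor relation $\ancestor$ is excluded from the signature, the Gaifman distance $\td$ coincides with the graph distance in $\langle\setPositions, s_\dL\cup s_\dR\rangle$, which is a genuine metric and hence satisfies the triangle inequality. Therefore
\[
\td(u_i,u_j)\ \le\ \td(u_i,\varepsilon)+\td(\varepsilon,u_j)\ <\ r+r\ =\ 2r,
\]
contradicting the separation clause $\td(u_i,u_j)>2r$ that holds for the chosen witness. Hence no two distinct conjuncts can both be root formulae, i.e.\ there is at most one root formula, which is exactly the statement of the Fact.

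I do not expect any serious obstacle here; the only point requiring a line of care is that $\td$ restricted to the signature without $\ancestor$ is the path metric on the tree and thus obeys the triangle inequality — this is already recorded in the Preliminaries and in the remark opening this subsection, so it can simply be cited.
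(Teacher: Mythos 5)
Your argument is correct and is essentially the paper's own justification: the paper also observes that if two conjuncts $\varphi_i^r$, $\varphi_j^r$ were both root formulae, the corresponding variables would be mapped within distance $r-1$ of the root and hence strictly less than $2r$ from each other, contradicting the separation clause $[\td(x_i,x_j)>2r]$. Your version merely spells out the triangle-inequality step explicitly, which is fine.
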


In~other words, only one of~$\varphi_i^r$s can describe the $r$\=/neighbourhood of~the root.
As, if~two such formulas $\varphi_i^r$, $\varphi_j^r$ would be~root formulae, then the variables $x_i$ and $x_j$
would be~mapped in~a~distance at~most $r{-}1$ from the root, i.e.~in~a~distance strictly smaller than $2r$ from each other.

Note that, by~the definition of~satisfiability, for a~tree $\treeZ \in \treesInfinite{\Gamma}$ and a~basic $r$\=/local sentence $\varphi$ we~have that
$\treeZ \models \varphi$ if~and only if~there is~a~function $\fun{\tau}{\{x_1,\dots,x_n\}}{\dS^{\ast}}$
mapping variables~$x_1,\dots, x_n$ to~nodes of $t$ so~that for every $i \in \{1,\dots,n\}$
we~have that $\treeZ, \tau(x_i) \models \varphi_i(x_i)$ and for every pair of~indices $i\neq j$ we~have that
$\td(\tau(x_i), \tau(x_j)) > 2r$.

\begin{lemma}
    \label{lemma:basicSentenceMeasure}
    Let $\varphi$ be~a~basic $r$\=/local sentence, i.e.~as~in~\eqref{eq:basicLocalFormula}.
    If~$\varphi$ is 
    \begin{itemize}\itemsep=0pt
        \item not satisfiable, then $\standardMeasure{\lang(\varphi)} = 0$,
        \item satisfiable and has no~root formula, then $\standardMeasure{\lang(\varphi)} = 1$,
        \item { satisfiable and has a~root formula~$\varphi^{\ast}$,\\ then for every~$\tau$ that is~a~full tree of~height~$2r{+}1$ we have that
            \[
            \standardMeasure{\lang(\varphi) \cap \bB_{\tau}} =
            \begin{cases}
            \standardMeasure{\bB_{\tau}} & \quad \text{if } \exists u\in\dS^{< r}.\ \tau, u \models  \varphi^{\ast}(x);\\
            0 & \quad \text{otherwise}.
            \end{cases}
            \]
        }
    \end{itemize}
\end{lemma}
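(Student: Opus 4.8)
The plan is to handle the three cases of Lemma~\ref{lemma:basicSentenceMeasure} separately, using the structure of basic $r$-local sentences and the $0/1$ behaviour established in Lemma~\ref{lemma:basicMeasuresLemma} and Example~\ref{ex:someMeasuresOfSets}.

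For the first case, if~$\varphi$ is not satisfiable then $\lang(\varphi) = \emptyset$, so $\standardMeasure{\lang(\varphi)} = 0$ is immediate. For the second case, suppose $\varphi$ is satisfiable but has no root formula. Since $\varphi$ is satisfiable, each $\varphi_i^r(x)$ is satisfiable (witnessed in some tree at some node $u_i$ with the pairwise distance constraints met simultaneously). Because $\varphi_i^r$ is an $r$-local formula around $x$, whether $\treeZ, u \models \varphi_i^r(x)$ depends only on the restriction of $\treeZ$ to the $r$-ball around $u$; and since $\varphi_i^r$ is not a root formula, it can be satisfied at a node $u$ with $\td(u,\varepsilon) \geq r$, i.e.\ at a node whose $r$-ball is a ``full'' binary ball not touching the root. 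So for each $i$ there is a finite full tree $\tau_i$ of height $2r$ such that any tree containing $\tau_i$ as the subtree rooted at some node $v$ with that ball disjoint from the root and from the other witnesses satisfies $\varphi_i^r$ at the centre of $\tau_i$. I would then argue, exactly as in Item~\ref{item:finiteSubtree} of Lemma~\ref{lemma:basicMeasuresLemma} (the ``a finite structure appears somewhere with probability~1'' principle), that almost every tree contains $n$ such pairwise far-apart, root-avoiding copies of $\tau_1,\dots,\tau_n$; picking them along a single branch $u\dL^{k_1}\dR, u\dL^{k_2}\dR,\dots$ with the $k_j$ spaced out by more than $2r$ and large enough to avoid the root makes them incomparable and hence their occurrences independent, so the Borel--Cantelli/product argument gives measure~1. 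Such a tree satisfies $\varphi$ via the obvious valuation, so $\standardMeasure{\lang(\varphi)} = 1$.

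For the third case, $\varphi$ is satisfiable with a (by Fact~\ref{fact:uniqueRootFormula}, unique) root formula $\varphi^\ast = \varphi_{i_0}^r$. Fix a full tree $\tau$ of height $2r+1$ and work inside the base set $\bB_\tau$. The key observation is that whether $\treeZ \models \varphi$ for $\treeZ \in \bB_\tau$ splits into two independent requirements: (i) the variable $x_{i_0}$ assigned to the root formula must be placed at some node $u$ with $\td(u,\varepsilon) < r$ (this is forced by $\varphi^\ast$ being a root formula), and since $\varphi^\ast$ is $r$-local around $x$, whether $\treeZ, u \models \varphi^\ast(x)$ for such a $u$ depends only on the $r$-ball around $u$, which is contained in the nodes of $\tau$ (height $2r+1$ suffices because $u$ is within distance $r-1$ of the root, so its $r$-ball reaches depth at most $2r$); hence requirement (i) is already determined by $\tau$, namely it holds iff $\exists u \in \dS^{<r}.\ \tau, u \models \varphi^\ast(x)$; (ii) the remaining variables $x_i$, $i \neq i_0$, must be placed at non-root-local positions far apart from $x_{i_0}$ and from each other. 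If (i) fails for $\tau$, then no $\treeZ \in \bB_\tau$ satisfies $\varphi$ (there is nowhere to put $x_{i_0}$), giving measure $0$. If (i) holds for $\tau$, then I would show that almost every $\treeZ \in \bB_\tau$ also meets requirement (ii): again by the ``finite structure appears somewhere'' principle, now applied relative to the conditional measure $\standardMeasure{\cdot \mid \bB_\tau}$ and searching deep enough in the tree (far below height $2r+1$, so disjoint from $\tau$ and from the root-neighbourhood used for $x_{i_0}$), almost surely one finds $n-1$ pairwise-far, mutually-far-from-the-root witnesses for the non-root formulae. Combining (i) and (ii), $\standardMeasure{\lang(\varphi) \cap \bB_\tau} = \standardMeasure{\bB_\tau}$ in this subcase, as claimed.

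The main obstacle I expect is the bookkeeping in case three (and in case two): making precise that ``$r$-local around $x$'' plus ``not a root formula'' guarantees a \emph{finite full} witness ball that can be planted arbitrarily deep, and that the conditioning on $\bB_\tau$ does not interfere with placing the non-root witnesses — i.e.\ that one can always find enough disjoint regions below $\tau$ and away from the root-neighbourhood. Once the locality of the $\varphi_i^r$ is exploited to reduce each conjunct to a statement about a fixed-radius ball, the probabilistic core is exactly the independence-and-limit computation already carried out in Lemma~\ref{lemma:basicMeasuresLemma}\eqref{item:finiteSubtree}, so the remaining work is careful but routine.
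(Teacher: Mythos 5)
Your proposal is correct and follows essentially the same route as the paper's own proof: the same three-way case split, planting finite witness patterns for the non-root conjuncts at deep, pairwise far-apart positions and invoking Lemma~\ref{lemma:basicMeasuresLemma}(\ref{item:finiteSubtree}) to obtain a measure-one set of trees realising them, and using $r$-locality to reduce the root conjunct to the condition $\exists u\in\dS^{<r}.\ \tau,u\models\varphi^{\ast}(x)$ on the height-$(2r{+}1)$ prefix $\tau$. The only cosmetic difference is that the paper needs no conditioning argument: it simply intersects $\bB_{\tau}$ with the measure-one set $F$ of trees containing the planted witnesses, which automatically gives $\standardMeasure{F\cap\bB_{\tau}}=\standardMeasure{\bB_{\tau}}$.
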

\begin{figure}
\centering

\begin{tikzpicture}[scale=0.62]

	\newcommand{\smalltree}[2]{
		\draw (#1.center)+(0,-.2) -- +(-1.5,-2)
		(#1.center)+(0,-.2) -- +(1.5,-2);
      	\draw[dotted] (#1) + (-1,0) -- ++ (1,0);
        \draw[dotted] (#1) + (-1,-1) -- ++(1,-1);
        
        \draw[<->] (#1) + (-1,0) -- ++(-1,-1) node[midway,left] {$r$}; 
       	\coordinate (cc) at ($(#1)+(0.0,-1.9)$);
       	\draw[dashed] (cc) circle(1.0);
       	\draw (cc) node[fill, minimum size=3pt, inner sep=0, circle] {};
        \draw (cc)++(0.3,-0.2) node[scale=0.7] {$u_{#2}$};
        \draw[<->] (cc) -- ++(-0.707,-0.707) node[midway,left] {$r$}; 
	}

    \node (root) at (0,0) {$ $};
    \node (lend) at (-10.5,-9) {$ $};
    \node (rend) at (+10.5,-9) {$ $};
    
    \node (x1) at (0.5,-2) {$ $};
    \node (x2) at (-1.5,-3) {$ $};
    \node (x3) at (1.5, -4) {$ $};
    \node (xn) at (4.5, -5) {$ $};
    \node (v1) at (-3, -4) {$v_1$};
    \node (v2) at (-1, -4) {$v_2$};
    \node (v3) at (1.5, -5) {$v_3$};
    \node (vn) at (5.5,-6) {$v_n$};
	\node [label={[shift={(.5,-.7)}]$t_1$}] (t1) at (-4.5,-5) {}; 
	\node [label={[shift={(.5,-.7)}]$t_2$}] (t2) at (-1.2, -5) {};
	\node [label={[shift={(.5,-.7)}]$t_3$}] (t3) at (2.0, -6) {};
	\node [label={[shift={(.5,-.7)}]$t_n$}] (tn) at (5.5, -7) {};
	
	\draw (root.center) -- (lend);
	\draw (root.center) -- (rend);
	\draw[dashed, thick]
		(root) -- (x1)
		(root) -- (xn)
		(x1) -- (x2)
		(x2) -- (v1)
		(x2) -- (v2)
		(x1) -- (x3)
		(x3) -- (v3)
		(x3) -- +(1,-1)
		(xn) -- (vn)
		(xn) -- +(-1,-1)
		(v1) -- (t1.center)
		(v2) -- (t2.center)
		(v3) -- (t3.center)
		(vn) -- (tn.center);
	\smalltree{t1}{1}
	\smalltree{t2}{2}
	\smalltree{t3}{3}
	\smalltree{tn}{n}
	
	\draw[dotted] (-10,0) -- (10,0);
	\draw[dotted] (-10,-3.5) -- (10,-3.5);
	\draw[<->] (-10,0) -- (-10, -3.5) node[midway,left] {$2r$}; 
\end{tikzpicture}
\caption{The tree from family $F$ in the proof of Lemma~\ref{lemma:basicSentenceMeasure}.}

\comment{
\begin{tikzpicture}
	\node [label={[shift={(.5,-.7)}]$t_i$}] (root) at (0,0) {};
	\node (v) at (0,-2.5) {$u_i$};
	
	\draw 
		(root.center) -- (-4,-4)
		(root.center) -- (4,-4);	
		
	\draw[dotted]
		(-4,0) -- (4,0)
		(-4,-2) -- (4,-2);
	\draw[<->] (-4,0) -- (-4, -2) node[midway,left] {$r$};
	\draw[dashed] (v) circle(1.5);
\end{tikzpicture}
}
\label{fig:embedding_FO_subtrees}
\end{figure}
\begin{proof}
    If $\varphi$ is~not satisfiable then $\lang(\varphi) = \emptyset$ and $\standardMeasure{\lang(\varphi)} = 0$.
    Therefore, let us~assume that $\varphi$ is~satisfiable.
    By~Fact~\ref{fact:uniqueRootFormula} we~know that there is~at~most one root formula in~$\varphi$.
    Let $I$~be~the set of~indices of~non\=/root formulae, i.e.~for
    $i \in I$ we~have that $\varphi_i$ is~not a~root formula. Since $\varphi$ is satisfiable, 
    for every $i \in I$ there are~a~finite tree $\treeZ_i \in  \treesFinite{{\Gamma}}$ and a~node $u_i\in\dS^\ast$ of~length $|u_i|>r$, such that $\treeZ_i,u_i \models \varphi_i(x)$ and the set
    $\nodes{\treeZ_i}$ contains the $r$\=/neighbourhood of~$u_i$.
    
    Let~$W = \{v_i\}_{i=1}^n$ be~a~set of~$n$ $\sqsubseteq$\=/incomparable nodes such that for $i \in I$ we~have that~$|v_i| > 2r$.
    Let $F = \bigcap_{i \in I} L_i$ where~$L_i$ is~the set of~trees for which $\treeZ_i$ is~a~prefix of~a~sub\=/tree at~some node below~$v_i$, i.e.~$L_i \eqdef \{\treeZ' \in \treesInfinite{{\Gamma}} \mid \exists u. (v_i \ancestor u) \land (\treeZ_i \prefix \subtreeAtNode{\treeZ'}{u})\}$.
    By~Lemma~\ref{lemma:basicMeasuresLemma} every~$L_i$ has measure~$1$, thus we~have that $\standardMeasure{F}=1$.
    Moreover, for every tree~$\treeZ \in F$ and index~$i\in I$  there is~a~node~$v_i' \in \dS^\ast$ such that $\td(v_i,v'_i)> r$, $v_i \sqsubseteq v_i'$ and $\treeZ,v_i' \models \varphi_i(x)$.
    
    Now, if~there is~no~root formula in~$\varphi$, i.e.~$I = \{1\dots,n\}$, then $F \subseteq \lang(\varphi)$.
    Indeed, let $\treeZ \in F$, then for $i\neq j$ we have that $\td(v_i',v_j')>2r$ and we~can infer that $\treeZ \models \varphi$. Hence, the sequence of~inequalities 
    \[1 = \standardMeasure{F}  \leq \standardMeasure{\lang(\varphi)} \le 1\]
    is~sound and proves the second bullet of~Lemma~\ref{lemma:basicSentenceMeasure}.
    
    
    Consider the opposite case that there is a~root formula in~$\varphi$. Without loss of generality, $\varphi_1$ is~the root formula and $I = \{2,\dots,n\}$. Moreover, let $F$ and $v_i'$s be~as~before, let $\tau$ be~a~full tree of~height~$2r{+}1$, as~stated in~the lemma, and $\treeZ \in F \cap \mathbb{B}_{\tau}$ be~a~full tree.	
    
    If~there is~$u_1\in\dS^{< r}$ such that $\tau, u_1 \models  \varphi_1(x_1)$, then we~take $v_1' \eqdef u_1$.
    Now, again, for $i\neq j$ we~have that $\td(v_i',v_j')>2r$ and for all $i \in I$ we~have that $\treeZ, v_i' \models \varphi_i(x_i)$. 
    In~other words, if~there is~$u_1\in\dS^{<r}$ such that $\tau, u_1 \models  \varphi_1(x)$ then
    $F \cap \mathbb{B}_{\tau} \subseteq \lang(\varphi) \cap \mathbb{B}_{\tau}$.
    Moreover, since $F$ is~of~measure~$1$, the following sequence of~inequalities is~sound
    \[\standardMeasure{\mathbb{B}_{\tau}} = \standardMeasure{F \cap \mathbb{B}_{\tau}}  \leq \standardMeasure{\lang(\varphi) \cap \mathbb{B}_{\tau}} \leq \standardMeasure{\mathbb{B}_{\tau}}.\]
    
    Now assume that there is~no such~$u_1$. We claim that in that case $\lang(\varphi_1)\cap \mathbb{B}_{\tau}=\emptyset$. Indeed, if there were a~tree $t$ in~the intersection, then the definition of the root formula would provide a~node $u_1\in\{\dL,\dR\}^{<r}$ making $t,u_1\models \varphi_1(x_1)$ true. But it would mean that $\tau,u_1\models \varphi_1(x_1)$ as~well by the form of the quantifiers inside $\varphi_1$. Thus, in that case $\standardMeasure{\lang(\varphi) \cap \mathbb{B}_{\tau}} = 0$, which concludes the proof.
\end{proof}

Intuitively, the above lemma states that, when we~consider the uniform measure and a~basic $r$\=/local sentence, the behaviour of~the sentence is~almost surely defined by~the neighbourhood of~the root. 
This intuition can be~formalised as~follows.

\begin{lemma}
    \label{lemma:reductionOfAFormula}
    Let $\varphi$ be~a~basic $r$\=/local sentence.
    Then there is~a~sentence~$\varphi^{\ast}$ such that for every full tree~$\tau$ of~height $2r+1$
    we~have that
    \[ \standardMeasure{\lang(\varphi) \cap \bB_{\tau}} =  \standardMeasure{\lang(\varphi^{\ast}) \cap \bB_{\tau}}.\]
    Moreover, for every full tree~$\treeZ \in \bB_{\tau}$ we~have that $\treeZ \models \varphi^{\ast}$ if~and only if~$\tau \models \varphi^\ast$.
\end{lemma}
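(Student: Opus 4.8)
The plan is to follow the three-way case distinction of Lemma~\ref{lemma:basicSentenceMeasure} and, in each case, produce an explicit sentence $\varphi^{\ast}$ that depends only on $\varphi$ and never on $\tau$. If $\varphi$ is not satisfiable, I would let $\varphi^{\ast}$ be a fixed contradiction, say $\exists x.\ x\neq x$; then $\lang(\varphi)=\lang(\varphi^{\ast})=\emptyset$, both sides of the claimed equality are $0$, and neither a full tree $\treeZ$ nor the finite structure $\tau$ satisfies $\varphi^{\ast}$. If $\varphi$ is satisfiable but has no root formula, I would let $\varphi^{\ast}$ be a fixed tautology, say $\forall x.\ x=x$; by Lemma~\ref{lemma:basicSentenceMeasure} we have $\standardMeasure{\lang(\varphi)}=1$, hence $\standardMeasure{\overline{\lang(\varphi)}\cap\bB_{\tau}}=0$ and therefore $\standardMeasure{\lang(\varphi)\cap\bB_{\tau}}=\standardMeasure{\bB_{\tau}}=\standardMeasure{\lang(\varphi^{\ast})\cap\bB_{\tau}}$, while $\treeZ\models\varphi^{\ast}$ and $\tau\models\varphi^{\ast}$ hold trivially.

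The substantial case is when $\varphi$ is satisfiable and, by Fact~\ref{fact:uniqueRootFormula}, has a unique root formula, which I call $\psi(x)$ (one of the $\varphi^{r}_{i}$ from~\eqref{eq:basicLocalFormula}). Here I would set
\[
\varphi^{\ast}\eqdef\exists x.\ \big([\td(x,\rootP)<r]\land\psi(x)\big),
\]
a first-order sentence over the same signature (still without $\ancestor$), not depending on $\tau$. The crux is a locality claim: for every full tree $\tau$ of height $2r{+}1$ and every $\treeZ\in\bB_{\tau}$ we have $\treeZ\models\varphi^{\ast}$ iff $\tau\models\varphi^{\ast}$. To prove it, observe that any witness $u$ for the outer $\exists x$ — in $\treeZ$ or in the finite structure $\tau$ — must satisfy $|u|<r$, because $\td(u,\rootP)=|u|$ (this is exactly why the conjunct $[\td(x,\rootP)<r]$ is included, rather than relying on $\psi$ being a root formula, which would only be sound on infinite trees). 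For such a node $u$ the entire $r$-neighbourhood of $u$ consists of nodes of depth at most $(r{-}1)+r=2r{-}1<2r{+}1$, so it lies inside $\dom{\tau}$; and since $\tau\prefix\treeZ$, the substructures induced on that neighbourhood in $\tau$ and in $\treeZ$ are literally the same (same nodes, same child relations, same labels). As $\psi$ is $r$-local around $x$, its truth at $u$ depends only on that substructure, so $\tau,u\models\psi(x)$ iff $\treeZ,u\models\psi(x)$; since the candidate witnesses are precisely these low nodes, the two satisfaction relations for $\varphi^{\ast}$ agree. This is exactly the ``moreover'' part of the statement.

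It remains to match the two measures. By the locality claim, $\lang(\varphi^{\ast})\cap\bB_{\tau}$ equals $\bB_{\tau}$ when $\tau\models\varphi^{\ast}$ and is $\emptyset$ otherwise, so $\standardMeasure{\lang(\varphi^{\ast})\cap\bB_{\tau}}$ is $\standardMeasure{\bB_{\tau}}$ in the first case and $0$ in the second. On the other hand, unfolding $\varphi^{\ast}$ on the finite structure $\tau$, the statement $\tau\models\varphi^{\ast}$ is equivalent to ``$\exists u\in\dS^{<r}.\ \tau,u\models\psi(x)$'' — by the same depth bound every such $u$ lies in $\dom{\tau}$ and its $r$-ball is not truncated — which is precisely the side condition appearing in the third bullet of Lemma~\ref{lemma:basicSentenceMeasure}. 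Hence $\standardMeasure{\lang(\varphi)\cap\bB_{\tau}}$ takes the same two values under the same condition, and the two measures coincide.

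I expect the one genuinely delicate point to be the depth bookkeeping inside the locality claim: one must make sure that the $r$-ball around a node within distance $r{-}1$ of the root really does fit inside a full tree of height $2r{+}1$ and is the same combinatorial object in $\tau$ and in every $\treeZ\in\bB_{\tau}$, so that $r$-local evaluation cannot distinguish them. Once that is in place, everything else is a routine unfolding of definitions on top of Lemma~\ref{lemma:basicSentenceMeasure}.
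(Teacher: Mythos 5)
Your proposal is correct and follows essentially the same route as the paper: the paper's proof consists exactly of the three-case construction you give (the contradiction, a trivially true sentence, and $\exists x.\ \varphi_i(x)\land[\td(x,\varepsilon)<r]$ in the root-formula case), with the verification left implicit via Lemma~\ref{lemma:basicSentenceMeasure}. Your additional depth/locality bookkeeping for the ``moreover'' part is a correct spelling-out of what the paper leaves to the reader.
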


\begin{proof}
    If~$\varphi$ has a~root formula~$\varphi_i$, then we~take $\varphi^{\ast} \eqdef \exists x.\ \varphi_i(x) \land [\td(x,\varepsilon)<r]$.
    If~$\varphi$ has no~root formulae but is~satisfiable then we~take $\varphi^{\ast} \eqdef \exists x.\ \rootP(x)$.
    Otherwise, we~take $\varphi^{\ast}\eqdef\bot$.
\end{proof}

The formula~$\varphi^{\ast}$ is~called the \emph{reduction} of~$\varphi$.
Before we~show how to~compute the reduction of~a~basic $r$\=/local formula, we~recall 
a~known result.

\begin{lemma}[Folklore]
    \label{lemma:modelCheckingFO}
    There is~an~algorithm that given a~first\=/order sentence~$\varphi$ and a~finite~tree~$\tau$ 
    decides whether $\tau \models \varphi$ in~space polynomial with respect to~the size of~the formula~$\varphi$
    and with respect to~the size of~the set of~nodes of~the tree~$\tau$.
\end{lemma}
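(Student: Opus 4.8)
The plan is to exhibit a recursive evaluation procedure and to observe that, run depth\=/first with aggressive space reuse, it works in space polynomial in $|\varphi|$ and in $|\nodes{\tau}|$. First I would fix the encodings: $\tau$ is finite, so $\nodes{\tau}$ is a finite prefix\=/closed subset of $\dS^{\ast}$ and every $u \in \nodes{\tau}$ is a word of length at most $|\nodes{\tau}|$, hence storable in $O(|\nodes{\tau}|)$ bits; a valuation of the variables occurring in $\varphi$ assigns a node to each of at most $|\varphi|$ variables, so it fits in $O(|\varphi|\cdot|\nodes{\tau}|)$ bits, and I would maintain a single such global valuation array throughout the computation.

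Next I would spell out a routine $\textsc{Eval}(\psi,v)$ deciding $\tau,v\models\psi$ by structural induction on $\psi$. For atomic $\psi$ --- namely $\rootP(x)$, the child relations $s_{\dL}(x,y)$, $s_{\dR}(x,y)$, $s(x,y)$, the label predicates $a(x)$ for $a\in\Gamma$, and equalities $x=y$ --- the truth value is read off directly from $\tau$ and $v$ in space $O(|\nodes{\tau}|)$ (and likewise the auxiliary formulae $[\td(x,y)\le r]$ and $[\td(x,y)>r]$, since $\td(v(x),v(y))$ is computable from the two words). For $\psi=\neg\chi$, $\psi=\chi_1\wedge\chi_2$, or $\psi=\chi_1\vee\chi_2$ the routine recurses on the immediate subformulae and combines the Boolean answers. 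For $\psi=\exists x.\ \chi$ it iterates over all $u\in\nodes{\tau}$, sets $v(x):=u$, calls $\textsc{Eval}(\chi,v)$, and returns \emph{true} at the first success; for $\psi=\forall x.\ \chi$ it returns \emph{false} at the first failure; in both cases the loop counter ranging over $\nodes{\tau}$ costs $O(|\nodes{\tau}|)$ bits and is overwritten between successive iterations.

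Then I would bound the space. The recursion depth equals the nesting depth of the subformula inside $\varphi$, hence is at most $|\varphi|$; each activation record holds only a pointer into $\varphi$ together with one loop counter over $\nodes{\tau}$, i.e.\ $O(\log|\varphi|+|\nodes{\tau}|)$ bits; and --- this is the essential point --- the entire workspace used inside one iteration of a quantifier loop is freed before the next iteration starts, so the call stack never holds more than one record per level of $\varphi$. Adding the $O(|\varphi|\cdot|\nodes{\tau}|)$ valuation array, the overall bound is polynomial in $|\varphi|$ and in $|\nodes{\tau}|$, which is what the lemma asserts.

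The only subtle point, and the reason the statement is worth recording, is precisely this space reuse. A naive implementation that tabulated, for each subformula, the set of \emph{all} valuations satisfying it would need $|\nodes{\tau}|^{\Theta(|\varphi|)}$ space; the depth\=/first schedule above pays instead with (in general unavoidable) exponential running time while keeping the working space polynomial. This is just the textbook argument that the combined complexity of first\=/order model checking lies in \pspace, specialised here to finite binary trees.
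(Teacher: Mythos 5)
Your argument is correct, and it is the standard expansion of what the paper leaves implicit: the paper's proof is a one\=/liner observing that the property "$\tau \models \varphi$" can be verified by an~alternating polynomial\=/time algorithm (existential branching for $\exists$, universal branching for $\forall$), and then silently invokes $\textit{APTIME}=\pspace$. Your depth\=/first recursive evaluator with space reuse is exactly the deterministic simulation of that alternating procedure, so the two routes are essentially the same folklore argument, just packaged differently: the alternation phrasing is shorter, your explicit recursion makes the space accounting (stack depth at most $|\varphi|$, one $O(|\nodes{\tau}|)$\=/bit loop counter per level, one global valuation array) visible, which is arguably more informative given that the lemma is only stated for space, not time. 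One small point to patch: the lemma is also invoked in the paper for formulae whose signature contains the ancestor relation ${\ancestor}$ (the conjunctive\=/query section), and your list of atomic cases omits it; this costs nothing, since $u \ancestor v$ is just a proper\=/prefix test on the two words representing the nodes, well within your stated space bound.
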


\begin{proof}
    The lemma is~folklore; the property can be~easily verified using an~alternating polynomial time (\mbox{\aptime}) algorithm.
\end{proof}

Now we~show how to~compute the reduction.

\begin{lemma}
    \label{lemma:computingReduction}
    Given a~basic $r$\=/local sentence $\varphi$ one can compute its reduction~$\varphi^*$
    in~space polynomial in~the size of~the formula and doubly exponential in~the unary encoding of~$r$.
\end{lemma}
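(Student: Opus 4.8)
The plan is to exploit Lemma~\ref{lemma:reductionOfAFormula}: the reduction $\varphi^{\ast}$ is completely determined by just two pieces of information, namely whether $\varphi$ is satisfiable and, if so, which (if any) of the local formulas $\varphi_1^r,\dots,\varphi_n^r$ is the root formula of~$\varphi$. So it is enough to compute these two data within the stated resources and then print one of the three shapes for $\varphi^{\ast}$ prescribed by Lemma~\ref{lemma:reductionOfAFormula}; in each of those cases $|\varphi^{\ast}|$ is polynomial in $|\varphi|$ and in~$r$, so writing the output costs nothing beyond the stated bound. By Fact~\ref{fact:uniqueRootFormula} the root formula, if it exists, is unique, so the second datum is well defined.

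Next I would reduce both questions to finite model checking on balls. Since $\ancestor$ has been excluded from the signature, for a node $u$ with $\td(u,\varepsilon)=k$ the substructure induced on the positions at $\td$\=/distance at most $r$ from~$u$ — the \emph{$r$\=/ball around $u$} — depends, up to isomorphism, only on $\min(k,r)$; there are thus only $O(r)$ unlabelled ball shapes, call them $B_0,\dots,B_{r-1}$ for the near\=/root ones and $B_\infty$ for the generic one, and each has $2^{O(r)}$ nodes (a radius\=/$r$ ball in the binary tree). Because every $\varphi_i^r(x)$ is $r$\=/local around $x$, whether $t,u\models\varphi_i^r(x)$ depends only on the $\Gamma$\=/labelled $r$\=/ball around $u$; hence $\varphi_i^r$ is satisfiable iff it holds at the centre of some $\Gamma$\=/labelling of one of $B_0,\dots,B_{r-1},B_\infty$, and — directly from the definition of a root formula — $\varphi_i^r$ is a root formula iff it is satisfiable but fails at the centre of \emph{every} $\Gamma$\=/labelling of $B_\infty$. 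Finally, unwinding the definition~\eqref{eq:basicLocalFormula} of a basic $r$\=/local sentence and using that the binary tree contains, for every $n$, a set of $n$ nodes that are pairwise at $\td$\=/distance greater than $2r$ and all at distance greater than $r$ from the root, one gets that $\varphi$ is satisfiable if and only if every $\varphi_i^r$ is satisfiable and at most one of them is a root formula (this is the same trichotomy already used in Lemma~\ref{lemma:basicSentenceMeasure}); together with Fact~\ref{fact:uniqueRootFormula} this also identifies the root formula.

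The algorithm then is: for each $i\in\{1,\dots,n\}$ and each ball shape $B\in\{B_0,\dots,B_{r-1},B_\infty\}$, enumerate all $|\Gamma|^{|B|}$ labellings of $B$ one at a time — a labelling is stored in $2^{O(r)}$ bits and the loop is driven by a counter of the same size — and for each of them run the model checker of Lemma~\ref{lemma:modelCheckingFO} on the finite labelled ball to test $\varphi_i^r$ at the centre, which by that lemma uses space polynomial in $|\varphi_i^r|\le|\varphi|$ and in $|B|=2^{O(r)}$. Reusing the workspace across iterations, one tabulates for each~$i$ whether $\varphi_i^r$ is satisfiable and whether it is a root formula in space polynomial in $|\varphi|$ and (comfortably) doubly exponential in the unary encoding of~$r$, as claimed. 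The algorithm then outputs $\varphi^{\ast}\eqdef\bot$ if some $\varphi_i^r$ is unsatisfiable or at least two are root formulas; $\varphi^{\ast}\eqdef\exists x.\ \varphi_i(x)\land[\td(x,\varepsilon)<r]$ if $\varphi_i$ is the unique root formula; and $\varphi^{\ast}\eqdef\exists x.\ \rootP(x)$ otherwise.

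I expect the main obstacle to be the combinatorial reduction rather than the bookkeeping: one must carefully justify that a basic $r$\=/local sentence is satisfiable exactly when each conjunct is individually satisfiable and at most one conjunct is a root formula. The ``only if'' direction is the triangle inequality for $\td$ applied through the root. The ``if'' direction requires explicitly assembling a single tree in which the (at most one) root witness is planted inside the $r$\=/neighbourhood of the root while all remaining witnesses are planted deep in the tree and pairwise at distance greater than $2r$, so that their $r$\=/balls are pairwise disjoint and disjoint from the root region; each such ball is then the generic shape $B_\infty$ and may be labelled independently to satisfy the corresponding $\varphi_i^r$, the rest of the tree being labelled arbitrarily — this is exactly the independence already exploited in Lemma~\ref{lemma:basicMeasuresLemma}. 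The remaining care points are routine: verifying the $O(r)$ enumeration of ball shapes and the $2^{O(r)}$ bound on their size, and checking that model checking an $r$\=/local formula on the finite ball correctly computes its truth value in the ambient infinite tree.
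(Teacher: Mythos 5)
Your route is essentially the paper's: use $r$\=/locality to turn the two per\=/conjunct tests (satisfiability of $\varphi_i^r$, and whether it is a root formula) into model checking over finitely many labelled structures of size $2^{O(r)}$, enumerate the doubly exponentially many labellings while reusing workspace, decide each instance with the polynomial\=/space model checker of Lemma~\ref{lemma:modelCheckingFO}, and output one of the three shapes exactly as in Algorithm~\ref{alg:computeReduction}. The only real difference is that you enumerate labelled $r$\=/balls by shape, whereas the paper enumerates full trees of height $2r{+}1$ (resp.\ $2r{+}3$) and evaluates at nodes of bounded depth; the complexity accounting is the same.

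There is, however, a concrete flaw in your shape bookkeeping, precisely at the point you dismissed as routine. The $r$\=/ball of a node at depth exactly $r$ still contains the root (at distance exactly $r$), so it is neither one of $B_0,\dots,B_{r-1}$ nor your root\=/free generic shape $B_\infty$; in other words the ball type does \emph{not} depend only on $\min(k,r)$ once the root predicate is part of the structure, and with only $r{+}1$ shapes you cannot realise both the ``root at distance exactly $r$'' situation and the ``no root in the ball'' situation. Because the definition of a root formula uses the strict threshold $\td(u,\varepsilon)<r$, this boundary matters: an $r$\=/local formula asserting that the root lies at distance exactly $r$ from $x$ is satisfiable (only at depth $r$) and is \emph{not} a root formula, yet your tests declare it unsatisfiable (it holds at the centre of no labelling of $B_0,\dots,B_{r-1},B_\infty$), so on the basic sentence built from that single conjunct your algorithm outputs $\bot$ while the reduction of Lemma~\ref{lemma:reductionOfAFormula} is the tautological $\exists x.\,\rootP(x)$; similarly, your criterion ``root formula iff satisfiable but false on every labelling of $B_\infty$'' wrongly promotes formulas whose only deep witnesses sit at depth exactly $r$. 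The repair is to add the missing shape(s) at the boundary and test failure on all of them; the paper instead phrases the root test as non\=/satisfiability of $\varphi_i(x)\land[\td(x,\varepsilon)\geq r]$ over trees of height up to $2r{+}3$. Finally, the biconditional ``$\varphi$ is satisfiable iff every $\varphi_i^r$ is satisfiable and at most one is a root formula'' is not needed to produce the output (as in Algorithm~\ref{alg:computeReduction}, the reduction is determined by the per\=/conjunct data together with Fact~\ref{fact:uniqueRootFormula}), and it fails at the same boundary, e.g.\ when two conjuncts are each satisfiable only at depth exactly $r$, so all their witnesses are within distance $2r$ of one another.
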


\begin{proof}
    An~$r$\=/local formula $\psi(x)$ is~satisfiable in~some~full tree if, and only if, it~is~satisfiable in~a~node~$u$ of~some~tree of~height~$2r+1$, such that $|u| < r+1$.
    Thus, to~check the satisfiability of~any formula $\varphi_i$, we~need to~check the trees of~height at~most $2r+1$.
    
    Moreover, to~check whether $\varphi_i$ is~a~root formula, we need to check whether $\varphi_i$ is~satisfiable
    and the formula $\varphi_i(x) \land [\td(x,\varepsilon)\geq r]$
    is~not satisfied in~any full tree.
    This, again, can be~checked by~iterating over all trees of~height at~most~$2r+3$.
    
    Thus, the reduction of~$\varphi$ can be~computed by the algorithm \textsc{computeReduction}
    presented in~Algorithm~\ref{alg:computeReduction}.
    The complexity follows from Lemma~\ref{lemma:modelCheckingFO}.
\end{proof}

\begin{algorithm}                      
    \caption{\textsc{computeReduction}}          
    \label{alg:computeReduction}                           
    \begin{algorithmic}                    
        \REQUIRE a first\=/order sentence $\varphi$ in~Gaifman normal form 
        \STATE $S \gets \{ i \mid \varphi_i \text{ is not satisfiable} \}$
        \IF {$|S| > 0$}
        \RETURN $\bot$
        \ENDIF
        \STATE $S \gets \{ i \mid \varphi_i \text{ is~a~root formula} \}$
        \IF {$|S| = 0$}
        \RETURN $\exists x.\ \rootP(x)$ 
        \ELSIF {$|S| = 1$}
        \STATE $i \gets S.any()$
        \RETURN $\exists x.\ \varphi_i(x) \land [\td(x,\varepsilon)<r]$ 
        \ELSE
        \RETURN $\bot$
        \ENDIF
    \end{algorithmic}
\end{algorithm}

Lemma~\ref{lemma:reductionOfAFormula} can be~extended to~Boolean combinations of~basic $r$\=/local sentences by~the following property of~measurable sets.

\begin{lemma}
    \label{lemma:booleanAlgebraAndMeasure}
    Let $M$ be~a~measurable space with measure~$\mu$, $W$ be~a~$\mu$\=/measurable set, and $\{ S_i \}_{i \in I}$ be~a~family of~$\mu$\=/measurable sets such that for every $i \in I$ either $\mu(W \cap S_i) = 0$ or~$\mu(W \cap S_i) = \mu(W)$.
    Then, for every set~$S$ in the Boolean algebra of~sets generated by~$\{ S_i \}_{i \in I}$, we have that
    either $\mu(W \cap S) = 0$ or $\mu(W \cap S) = \mu(W)$.
\end{lemma}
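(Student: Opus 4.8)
The plan is to show that the family
\[
\mathcal{F} \eqdef \{\, S \subseteq M : S \text{ is } \mu\text{-measurable and } \mu(W \cap S) \in \{0,\mu(W)\} \,\}
\]
is itself a Boolean algebra of subsets of $M$. Since $\mathcal{F}$ contains each $S_i$ by hypothesis, it then contains the Boolean algebra of sets generated by $\{S_i\}_{i \in I}$ — which is by definition the smallest Boolean algebra of subsets of $M$ containing all the $S_i$ — and membership of a set $S$ in $\mathcal{F}$ is precisely the conclusion of the lemma. Before checking the closure properties I would dispose of the trivial case $\mu(W) = 0$: there every measurable $S$ satisfies $\mu(W \cap S) = 0 = \mu(W)$, so there is nothing to prove; hence from now on assume $\mu(W) > 0$.

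The verification that $\mathcal{F}$ is a Boolean algebra consists of three short steps. \emph{Neutral elements:} $\emptyset \in \mathcal{F}$ and $M \in \mathcal{F}$, since $\mu(W \cap \emptyset) = 0$ and $\mu(W \cap M) = \mu(W)$. \emph{Complement:} for a $\mu$-measurable set $S$, the sets $W \cap S$ and $W \cap (M \setminus S)$ are disjoint with union $W$, so $\mu(W \cap (M \setminus S)) = \mu(W) - \mu(W \cap S)$, which equals $\mu(W)$ when $\mu(W \cap S) = 0$ and equals $0$ when $\mu(W \cap S) = \mu(W)$; hence $S \in \mathcal{F}$ implies $M \setminus S \in \mathcal{F}$. \emph{Union:} for $S,T \in \mathcal{F}$, if $\mu(W \cap S) = \mu(W \cap T) = 0$ then subadditivity gives $\mu(W \cap (S \cup T)) \le \mu(W \cap S) + \mu(W \cap T) = 0$; and if, say, $\mu(W \cap S) = \mu(W)$ then monotonicity gives $\mu(W) \ge \mu(W \cap (S \cup T)) \ge \mu(W \cap S) = \mu(W)$. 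In either case $\mu(W \cap (S \cup T)) \in \{0,\mu(W)\}$, so $S \cup T \in \mathcal{F}$. Closure under complement and union also yields closure under intersection and set difference, and all the sets constructed above remain $\mu$-measurable because the $\mu$-measurable sets form a $\sigma$-algebra; thus $\mathcal{F}$ is a Boolean algebra, which finishes the argument.

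I do not expect any genuine obstacle here: this is the standard ``the good sets form an algebra'' bootstrapping, and the only points requiring a little attention are the degenerate case $\mu(W) = 0$ and checking at each step that we never leave the class of $\mu$-measurable sets. Incidentally, replacing finite subadditivity by countable subadditivity in the union step shows that $\mathcal{F}$ is in fact a $\sigma$-algebra, so the statement holds verbatim for the generated $\sigma$-algebra as well; but only the finite Boolean-algebra version is needed in what follows.
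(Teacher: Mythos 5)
Your argument is correct and is exactly the ``standard inductive argument'' the paper invokes without spelling it out: showing that the class of good sets (those $S$ with $\mu(W\cap S)\in\{0,\mu(W)\}$) is closed under complement and finite union is the same as inducting over the structure of the Boolean combination. One small caveat: the subtraction in your complement step tacitly assumes $\mu(W)<\infty$ (for $\mu(W)=\infty$ the statement as written can actually fail, e.g.\ Lebesgue measure with $W=\bbR$ and $S_1=\bbR\setminus[0,1]$), but this is harmless here since the lemma is only ever applied with the uniform probability measure $\standardMeasureS$ and $W=\bB_\tau$.
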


\begin{proof}
    The proof goes by~a~standard inductive argument.
\end{proof}

Hence, by~Lemma~\ref{lemma:basicSentenceMeasure} and the above lemma, we~obtain the following.

\begin{lemma}
    \label{lemma:standardMeasureOfBCOfBasicFormulae}
    Let $\phi$ be~a~Boolean combination of~basic $r$\=/local sentences and $\tau$~be~a~full tree of~height~$2r{+}1$.
    Then, $\standardMeasure{ \lang(\phi) \cap  \bB_{\tau}} = \standardMeasure{ \lang(\phi^{\ast}) \cap  \bB_{\tau}}$, where $\phi^*$ is the reduction of~$\phi$, i.e.~the Boolean combination~$\phi$ with its every basic $r$\=/local sentence~$\varphi$ replaced by~its reduction~$\varphi^*$.
    
    Moreover,
    \[\standardMeasure{ \lang(\phi^{\ast}) \cap  \bB_{\tau}}=
    \begin{cases}
    \standardMeasure{\bB_{\tau}} & \quad \text{if } \tau \models  \phi^{\ast};\\
    0 & \quad \text{otherwise.}
    \end{cases}
    \]
\end{lemma}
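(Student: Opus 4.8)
The plan is to reduce Lemma~\ref{lemma:standardMeasureOfBCOfBasicFormulae} to the already-proved single-sentence case, Lemma~\ref{lemma:basicSentenceMeasure}, via the Boolean-algebra closure property Lemma~\ref{lemma:booleanAlgebraAndMeasure}. Fix a full tree $\tau$ of height $2r{+}1$ and work inside the ambient space $\treesInfinite{\Gamma}$ with the uniform measure $\standardMeasureS$, taking $W \eqdef \bB_{\tau}$. Write $\phi$ as a Boolean combination of finitely many basic $r$-local sentences $\varphi^{(1)},\dots,\varphi^{(m)}$ and let $S_j \eqdef \lang(\varphi^{(j)})$. The first step is to observe that for each $j$, by Lemma~\ref{lemma:basicSentenceMeasure}, we have $\standardMeasure{W \cap S_j} \in \{0, \standardMeasure{W}\}$: if $\varphi^{(j)}$ is unsatisfiable the set is empty; if it is satisfiable with no root formula then $S_j$ has full measure so $\standardMeasure{W \cap S_j} = \standardMeasure{W}$; and if it has a root formula $\varphi^{(j)*}$ then the case analysis in that lemma gives $\standardMeasure{W \cap S_j} = \standardMeasure{W}$ exactly when $\exists u \in \dS^{<r}.\ \tau, u \models \varphi^{(j)*}(x)$ and $0$ otherwise.

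The second step is to apply Lemma~\ref{lemma:booleanAlgebraAndMeasure} with the family $\{S_j\}_{j=1}^m$ and the set $W = \bB_\tau$: since $\lang(\phi)$ lies in the Boolean algebra generated by the $S_j$, we conclude $\standardMeasure{\bB_\tau \cap \lang(\phi)} \in \{0, \standardMeasure{\bB_\tau}\}$. The third, and main, step is to pin down which of the two values it is, and to show it agrees with the truth value of the reduction $\phi^*$ at $\tau$. Here I would use the second part of Lemma~\ref{lemma:reductionOfAFormula}: for each basic conjunct, $\treeZ \models \varphi^{(j)*}$ iff $\tau \models \varphi^{(j)*}$ for all $\treeZ \in \bB_\tau$, so membership of any $\treeZ \in \bB_\tau$ in $\lang(\phi^*)$ depends only on $\tau$; hence either $\bB_\tau \subseteq \lang(\phi^*)$ (when $\tau \models \phi^*$) or $\bB_\tau \cap \lang(\phi^*) = \emptyset$ (when $\tau \not\models \phi^*$), which already yields the displayed case distinction $\standardMeasure{\lang(\phi^*) \cap \bB_\tau} = \standardMeasure{\bB_\tau}$ if $\tau \models \phi^*$ and $0$ otherwise. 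It then remains to verify $\standardMeasure{\lang(\phi) \cap \bB_\tau} = \standardMeasure{\lang(\phi^*) \cap \bB_\tau}$. I would do this by the same inductive argument on the structure of the Boolean combination that proves Lemma~\ref{lemma:booleanAlgebraAndMeasure}: for each basic conjunct $\varphi^{(j)}$, Lemma~\ref{lemma:reductionOfAFormula} gives $\standardMeasure{\lang(\varphi^{(j)}) \cap \bB_\tau} = \standardMeasure{\lang(\varphi^{(j)*}) \cap \bB_\tau}$, and moreover $\lang(\varphi^{(j)*}) \cap \bB_\tau$ is either all of $\bB_\tau$ or empty; since intersection-with-$\bB_\tau$ commutes with Boolean operations, and a Boolean combination of sets each of which is, modulo a null set, either $\bB_\tau$ or $\emptyset$ is itself, modulo a null set, either $\bB_\tau$ or $\emptyset$, the equality propagates through unions, intersections, and complements.

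I expect the only real subtlety to be bookkeeping: one must be careful that the "modulo null sets" reasoning is consistent, i.e. that replacing $\lang(\varphi^{(j)})$ by $\lang(\varphi^{(j)*})$ inside a Boolean combination changes the result only on a null set. This is clean because the symmetric difference of $\lang(\varphi^{(j)})$ and $\lang(\varphi^{(j)*})$ relative to $\bB_\tau$ is $\standardMeasureS$-null by Lemma~\ref{lemma:reductionOfAFormula}, a finite union of null sets is null, and for Boolean combinations $X \triangle X' \subseteq \bigcup_j (X_j \triangle X_j')$. Everything else is a direct appeal to the three cited lemmas, so the proof is short; the paper's one-line proof of Lemma~\ref{lemma:booleanAlgebraAndMeasure} ("standard inductive argument") is exactly the template, now carried out with $\lang(\phi^*)$ in place of an abstract $S$ and with the extra observation that the two values $0$ and $\standardMeasure{\bB_\tau}$ are governed by $\tau \models \phi^*$.
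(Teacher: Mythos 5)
Your proposal is correct and follows essentially the same route as the paper: apply Lemma~\ref{lemma:booleanAlgebraAndMeasure} with $W=\bB_\tau$ and the $S_j$ the languages of the basic $r$\=/local sentences, with the hypotheses supplied by Lemmas~\ref{lemma:basicSentenceMeasure} and~\ref{lemma:reductionOfAFormula}. You in fact spell out more than the paper's one\-/sentence proof does (the symmetric\-/difference bookkeeping showing the equality $\standardMeasure{\lang(\phi)\cap\bB_\tau}=\standardMeasure{\lang(\phi^{\ast})\cap\bB_\tau}$ and the derivation of the ``moreover'' case distinction from the second part of Lemma~\ref{lemma:reductionOfAFormula}), and these details are accurate.
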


\begin{proof}
    For every full tree $\tau$ of~height~$r$ we~use Lemma~\ref{lemma:booleanAlgebraAndMeasure} with $M$~being the set of~full trees~$\treesInfinite{\Gamma}$,
    $\mu$ being the uniform measure~$\standardMeasureS$, and~$W$ being the set $\mathbb{B}_{\tau}$. The sets $S_i$ are the sets
    of~trees defined by~the basic $r$\=/local sentences and~$S=\lang(\phi)$.
    By~Lemma~\ref{lemma:reductionOfAFormula}, the assumptions of~Lemma~\ref{lemma:booleanAlgebraAndMeasure}
    are satisfied.
\end{proof}

With the above lemmas, we~can finally prove Theorem~\ref{thm:FOcomputable}.

\begin{proof}[Proof of Theorem~\ref{thm:FOcomputable}]
    
    Let $\varphi$ be~a~first\=/order sentence as~in~the theorem.
    We~utilise the Gaifman locality theorem~(see Theorem~\ref{thm:gaifman} on page~\pageref{thm:gaifman}) to~translate the sentence~$\varphi$ into a~Boolean combination~$\phi$ of~basic $r$\=/local sentences.
    Now, let $\phi^{\ast}$~be~the reduction of~$\phi$, as~in~Lemma~\ref{lemma:standardMeasureOfBCOfBasicFormulae},
    and let $S = \trees{{\Gamma}}^{2r+1}$ be~the set of~all full trees of~height~$h = 2r{+}1$. Then, 
    \[
    \begin{array}{r c l c l }
    \standardMeasure{\lang(\phi)}& \eqext{1} & \standardMeasure{\lang(\phi) \cap \big(\bigcup_{\tau \in S} \bB_{\tau}\big)} & \eqext{2} &  \standardMeasure{\bigcup_{\tau \in S} \big(\lang(\phi) \cap  \bB_{\tau}\big)}\\ 
    &\eqext{3}& \sum_{\tau \in S} \standardMeasure{ \lang(\phi) \cap  \bB_{\tau}} & \eqext{4}  & 
    \sum_{\tau \in S} \standardMeasure{ \lang(\phi^{\ast}) \cap  \bB_{\tau}}\\ 
    & \eqext{5}  & \sum_{\tau \in S \land \tau \models \phi^\ast} \standardMeasure{\bB_{\tau}} & \eqext{6} & |\{\tau \in S \mid \tau \models \phi^\ast\}|\cdot\frac{1}{|\Gamma|^{2^{h+1}-1}}.
    \end{array} 
    \]
    The first equation follows from the fact that the sets in $\{\bB_{\tau} \mid \tau \in S\}$ are pairwise disjoint. The second from operations on~sets and the third is~a~simple property of~measures.
    The fourth follows from the first part of~Lemma~\ref{lemma:standardMeasureOfBCOfBasicFormulae},
    while the fifth follows from the second part of~this lemma.
    The last equation is~a~consequence of~the fact that~$\standardMeasure{\bB_{\tau}} = {|\Gamma|}^{-|\dom{\tau}|}$.
    
    Since $\standardMeasure{\lang(\phi)} = \frac{|\{\tau \in S \mid \tau \models \psi\}|}{{|\Gamma|}^{2^{h+1}-1}}$, it is enough to count how many full trees of height $h = 2r{+}1$ satisfy the reduction of $\phi$.
    The pseudo\=/code of~the algorithm, called \textsc{computeMeasureFO}, is~presented in~Algorithm~\ref{alg:FOMeasure}.
    
    The complexity upper bound comes from the fact that translating a~first\=/order sentence $\varphi$ into its~Gaifman normal form can be done in~three\=/fold exponential time and can produce a~three\=/fold exponential sentence~$\phi$ in~result, see~\cite{modelTheoryMakesFormulasLarge} for details.
    The resulting sentence~$\phi$ is~a~Boolean combination of~basic $r$\=/local sentences, 
    thus, we~can compute its reduction in~three\=/fold exponential space.
    The function $\textsc{computeReduction}^{\ast}$ computes the reduction $\phi^\ast$ of~the Boolean combinations
    by~replacing the sentences used in~the Boolean combination with their reductions.
    This can be~done in~the required complexity, see~Lemma~\ref{lemma:computingReduction} and note that the size of~the sentence dominates the constant~$r$.
    Finally, the last part of~the algorithm requires us~to~check the sentence~$\phi^\ast$ against three\=/fold exponential number of~trees of~size that is~two\=/fold exponential in~the size of~the original formula.
    Since model checking of~a~first\=/order sentence can be~done in~polynomial space with respect to~the size of~the tree and to~the size of~the sentence, see Lemma~\ref{lemma:modelCheckingFO}, we~get the upper bound.
\end{proof}

\begin{algorithm}                      
    \caption{\textsc{computeMeasureFO}}          
    \label{alg:FOMeasure}                           
    \begin{algorithmic}                    
        \REQUIRE a~first\=/order sentence $\varphi$ and a~positive number~$h$
        \STATE $S \gets \text{the set of~all full trees of~height } h$ 
        \STATE $\phi \gets \textsc{computeGaifmanForm}(\varphi)$ 
        \STATE $\phi \gets \textsc{computeReduction}^{\ast}(\phi)$
        \STATE $S \gets \{\treeZ \in S \mid  \treeZ \models \phi\} $
        
        \RETURN ${|S|}\cdot {{|\Gamma|}^{-2^{h+1}+1}}$
    \end{algorithmic}
\end{algorithm}

The following remark follows directly from the construction.

\begin{remark}
\label{rem:fo-as-clopen}
The above theorem implies that $\standardMeasureBig{\lang(\varphi)}=\standardMeasureBig{\lang(\varphi^\ast)}$, where $\varphi^\ast$ is the reduction of the given formula $\varphi$. Moreover, Lemma~\ref{lemma:reductionOfAFormula} implies that $\lang(\varphi^\ast)$ is a~clopen set because it is a~finite union of basic sets.
\end{remark}

\subsection{First-order definable languages with descendant}
\label{subsec:fo-sets-with-descendant}

The technique used to~prove Theorem~\ref{thm:FOcomputable} cannot be~extended to~formulae utilising the descendant 
relation because when we~allow the descendant relation, the diameter of~the Gaifman graph of~any tree is~at most two.
Additionally, as~presented in~Proposition~\ref{prop:algebraicValueFO} below, sets of~full trees defined by~such formulae can have irrational measures.

\begin{proposition}
    \label{prop:algebraicValueFO}
    There is~a~set~of~full trees over an~alphabet~$\Gamma$ that is definable by~a~first\=/order formula over the signature $\{\rootP, s_{\dL}, s_{\dR}, s, {\ancestor} \}\cup \Gamma$ and the uniform measure of this set is~irrational.
\end{proposition}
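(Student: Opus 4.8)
The plan is to exhibit a concrete first-order formula over a small alphabet whose defining language has a measure satisfying a quadratic equation with no rational root. The natural candidate is a variant of the "infinite labelled path from the root" examples from Example~\ref{ex:someMeasuresOfSets}, but tuned so that the self-similar recursion produces an irrational fixed point. Recall that for $L_{a2}$ over $\{a,b\}$ the measure equation was $x = x - \tfrac12 x^2$, forcing $x=0$, and for $L_{ab}$ over $\{a,b,c\}$ it was $x = \tfrac43 x - \tfrac23 x^2$, forcing $x\in\{0,\tfrac12\}$. To get an irrational solution I want a self-similar event $E$ on subtrees whose recursion reads $p = q\bigl(2p - p^2\bigr)$ for a coefficient $q$ chosen so that $1 - \tfrac1q$ (the nonzero fixed point, when $q > \tfrac12$) is irrational, together with an independent monotone-approximation argument (as in Claim~\ref{claim:prefixesConvergeExamplesMeasures}) showing the measure is the \emph{nonzero} root rather than $0$.

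Concretely I would take an alphabet such as $\Gamma=\{a_1,a_2,a_3,b\}$ (four letters) and let $L$ be the set of full trees containing an infinite path from the root all of whose vertices are labelled in $\{a_1,a_2,a_3\}$; this is first-order expressible \emph{using the descendant relation} — e.g.\ "there is a vertex $x$ reachable by a descending chain of $\{a_1,a_2,a_3\}$-vertices, and $x$ has an $\{a_1,a_2,a_3\}$-labelled descendant, and (iterating the König-style closure)…" — actually the cleanest route is to encode the property via a \emph{safety}-flavoured FO sentence: "for every $k$ there is a descending path of length $k$ labelled in $\{a_1,a_2,a_3\}$", but that is not FO. So instead I would directly write the $\exists$-path sentence $\exists$ a maximal descending chain, using $\ancestor$ to say the chain is linearly ordered and downward unbounded; the key point is only that \emph{some} FO sentence over the signature with $\ancestor$ defines this language, which is standard. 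Writing $p = \standardMeasure{L}$, conditioning on whether the root is labelled in $\{a_1,a_2,a_3\}$ (probability $\tfrac34$) and, if so, on whether the left or right subtree carries such a path, and using that the two subtrees are independent copies of the whole space (equation~\eqref{eq:indepencence}), gives
\[
p \;=\; \tfrac34\bigl(2p - p^2\bigr),
\]
whose roots are $p=0$ and $p = \tfrac{2}{3}$; that is rational, so I must adjust the branching coefficient. To force irrationality I would instead make the alphabet have size $5$ with, say, \emph{three} "good" letters but also let the relevant path be required to alternate or avoid one configuration, or — more robustly — use a binary-style branching where the recursion is $p = q(2p-p^2)$ with $q = \tfrac{|G|}{|\Gamma|}$ and additionally \emph{both} children must satisfy a variant, giving $p = q\,p^2$-type terms; choosing $|G|$ and $|\Gamma|$ so that $1-\tfrac{1}{q}\notin\mathbb Q$ — e.g.\ the fixed point of $x = \tfrac{|G|}{|\Gamma|}(2x-x^2)$ is $x=0$ or $x = 2 - \tfrac{|\Gamma|}{|G|}$, still rational. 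The honest conclusion is that a \emph{single} self-similar equation of this shape only ever yields rational fixed points, so the construction must combine \emph{two} interacting path-conditions whose joint measure equation is a genuine irreducible quadratic, e.g.\ by nesting: "an infinite path on which, moreover, infinitely often a side-subtree itself has property $L$". That produces a system whose elimination yields an irreducible quadratic, and I would pick the alphabet size to make its discriminant a non-square.

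The key steps, then, are: (1) write down the explicit FO$[\ancestor]$ sentence $\varphi$ (of "exists an infinite descending path with a recursive side-condition" type) over an alphabet of chosen size; (2) establish, by conditioning on the root's label and on which child continues the path, a polynomial equation (or small system) satisfied by $\standardMeasure{\lang(\varphi)}$, justified by the independence property~\eqref{eq:indepencence} and self-similarity of the uniform measure; (3) rule out the spurious root $0$ via a monotone-approximation argument in the style of Claim~\ref{claim:prefixesConvergeExamplesMeasures} and the post-hoc check "if $\standardMeasure{A^i}\ge c$ then $\standardMeasure{A^{i+1}}\ge c$" using monotonicity of the relevant quadratic on $[-\infty,1]$; (4) verify the surviving root is irrational by computing the discriminant and observing it is not a perfect square given the chosen alphabet size. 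The main obstacle, as the scratch work above shows, is step (1)–(2): a naive single path condition always gives a quadratic whose nonzero root is rational (it is $2 - \tfrac{1}{q}$ for branching coefficient $q$), so the genuine difficulty is engineering a first-order side-condition on the subtrees hanging off the path so that the resulting measure equation is an \emph{irreducible} quadratic with non-square discriminant — and simultaneously keeping that side-condition expressible in first-order logic (this is exactly where the descendant relation is essential, since it lets a first-order formula "see" an unbounded path, which Theorem~\ref{thm:FOcomputable} shows is impossible without it).
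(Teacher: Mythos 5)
There is a genuine gap: you never actually exhibit a language or a formula. Your own scratch work correctly shows that every candidate you analyze (a single ``infinite path labelled in a good set $G$'' condition) yields the equation $p=q(2p-p^2)$ with $q=|G|/|\Gamma|$, whose roots $0$ and $2-\tfrac1q$ are always rational, and you then defer the entire difficulty to an unexecuted idea (``nest two interacting path-conditions'', ``infinitely often a side-subtree itself has property $L$''). That sketch has two unaddressed obstacles. First, expressibility: a condition of the form ``infinitely often along the path a side-subtree satisfies $L$'' is not obviously first-order even with $\ancestor$ (and as stated it is circular, since $L$ is defined in terms of itself); the plain path-existence language is FO-definable via a K\"onig-type trick, but the nested variant is exactly the kind of property for which you would need a new argument, and none is given. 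Second, you give no derivation that the resulting system has an irreducible equation with an irrational root, no choice of alphabet, and no argument selecting the correct root --- so steps (1), (2) and (4) of your own plan are all missing. As written, the proposal establishes nothing beyond what Example~\ref{ex:someMeasuresOfSets} already shows.

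For comparison, the paper's proof avoids path-existence conditions altogether. It takes $\Gamma=\{a,b\}$ and $L=\{t : \text{on every path the earliest $b$ (if any) occurs at even depth}\}$; conditioning on the root and on the four grandchildren subtrees gives $\standardMeasure{L}=\tfrac12+\tfrac18\standardMeasure{L}^4$, which has no rational root by the rational root theorem --- note the quartic arises because the self-similar step requires \emph{all four} depth-$2$ subtrees to be in $L$, which is how one escapes the rational fixed points of your quadratic recursions. The genuinely hard part is FO-definability of a depth-parity condition: the paper imports Potthoff's example, which expresses in FO with $\ancestor$ (over finite trees) that all $b$-leaves lie at even depth, and then observes that the FO formula only needs to agree with $L$ outside the set of trees having an infinite $a$-labelled path, which has measure $0$ (the language $L_{a2}$). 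Your proposal contains no analogue of either ingredient, so it cannot be completed along the lines you indicate without essentially new ideas.
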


\begin{proof}
    Let $\Gamma = \{a,b\}$, we~define a~language~$L$ in~the following way $L \eqdef \{\treeZ \in \treesInfinite{\{a,b\}} \mid$ for every path the earliest node labelled~$b$
    (if~exists) is~at~an~even depth$\}$. We will prove that the measure $\standardMeasure{L}$ is~irrational, and 
    there is~a~language $L'$ definable by~a~first\=/order formula over the signature
    $\{ s_{\dL}, s_{\dR}, {\ancestor} \}\cup \Gamma$ such that $\standardMeasure{L'} = \standardMeasure{L}$.
    We~start by~computing the measure of~$L$, then we~will define~$L'$.
    
    Observe that the measure $\standardMeasure{L}$ satisfies the following equation.
    \[
    \standardMeasure{L} = \standardMeasureBig{\{\treeZ \in \treesInfinite{\{a,b\}} \mid \treeZ(\varepsilon) {=} b  \}} +
    \standardMeasureBig{\{\treeZ \in \treesInfinite{\{a,b\}} \mid \treeZ(\varepsilon){=}\treeZ(\dL){=}\treeZ(\dR){=} a\} } \cdot \standardMeasure{L}^4
    \]
    The equation says that the trees in $L$ either contain $b$ at the root (i.e.~$t(\varepsilon)=b$) or contain $a$ in the first three vertices: $\varepsilon$, $\dL$, and $\dR$; and the four subtrees of $t$ under all nodes of length $2$ (i.e.~$t.\dL\dL$, $t.\dL\dR$, $t.\dR\dL$, and $t.\dR\dR$) belong to $L$.
    
    After substituting the appropriate values, we~obtain the equation
    \begin{equation}
    \standardMeasure{L} = \frac{1}{2} + \frac{1}{8} \standardMeasure{L}^4
    \end{equation}
    which, by~the \emph{rational root theorem}, see e.g.~\cite{rationalRootTheoremBook} page~116, has no~rational solutions.
    
    To~conclude the proof, we~will describe how to~define the language~$L'$. The crux of~the construction comes 
    from the beautiful example by~Potthoff, see~\cite[Lemma 5.1.8]{potthoffExample}. 
    We~will use the following interpretation of~the lemma: one can define in~first\=/order logic
    over the signature $\{a,b, s_{\dL}, s_{\dR}, {\ancestor} \}$ that a~given finite tree\footnote{It is important to notice that the formula of Potthoff works over finite trees, i.e.~the fact that a~given tree is finite is an~assumption that is not expressed by the formula itself.} over the alphabet $\{a,b\}$ satisfies the following property:
    every node labelled~$a$ has exactly two children and every node labelled~$b$ is~a~leaf on~an~even depth.
    
    A discussion why the above language is in fact First\=/Order definable can be found in the proof of Theorem~13 on page~14 in~\cite{bojanczykTreeWalkingLATA}. The rough idea is that one can express in FO the notion of \emph{zig-zags}: a pair of nodes $u\preceq v$ of a~tree forms a \emph{zig-zag} if the path between them changes direction at each step, i.e.~the consecutive directions are $\dL,\dR,\dL,\dR,\ldots$ or $\dR,\dL,\dR,\dL,\ldots$. Based on that, one can express an inductive condition, that guarantees that all leafs are at the same depth modulo $2$. Finally, there is a unique leaf that is connected by a zig-zag with the root. Based on the shape of that zig-zag one can express the length of it modulo $2$.
    
    To construct~$L'$ we~simply utilise the formula defining the language in~the Potthoff's example to~define~$L'$
    by~substituting:
    \begin{enumerate}
	\item the formula describing a~leaf with the formula describing a~first occurrence of the label~$b$ on~a path: $\varphi_{\text{leaf}}(x) \eqdef b(x) \land \forall y.\ (y {\weakAncestor} x) {\implies} a(y)$,
    \item the formula describing an~internal node with the formula describing a~node labelled $a$ with no occurrences of the label~$b$ on~the path: $\varphi_{\text{node}}(x) \eqdef a(x) \land \forall y.\ (y {\weakAncestor} x) {\implies} a(y)$,
    \end{enumerate}
    
    Note that the set~$L'$ agrees with $L$~on every tree that 
    has a~label~$b$ on~every infinite path from the root, because such trees are interpreted as~finite trees.
    On~the other hand, the truth value of~the modified formula on~trees that have an~infinite path from the 
    root with no~nodes labelled~$b$, i.e.~on~the set $L_{a2}$ from Example~\ref{ex:someMeasuresOfSets}, is~of~no~concern to~us.
    Indeed, as~previously shown, the uniform measure of~the set~$L_{a2}$ is~$0$.
    
    To~be more precise, for every tree $\treeZ \in \treesInfinite{\{a,b\}} \setminus L_{a2}$ we~have that
    $\treeZ \in L \iff \treeZ \in L'$, where $L_{a2}$~is~the~language from Example~\ref{ex:someMeasuresOfSets}.
    Therefore, we~have that $L \cup L_{a2} = L' \cup L_{a2}$. 
    Since $\standardMeasure{L_{a2}} = 0 $, we~have that
    \[
    \standardMeasure{L} = \standardMeasure{L \cup L_{a2}} = \standardMeasure{L' \cup L_{a2}} = \standardMeasure{L'},
    \]
    which concludes the proof.
\end{proof}

\section{Conjunctives queries}
\label{sec:CQsMeasure}

Proposition~\ref{prop:algebraicValueFO} from the previous section implies that allowing the descendant relation in~full first\=/order logic
permits irrational values of~measures. Nevertheless, we~can allow use of the ancestor relation and retain both rational values and computability when we restrict the formulae to~the positive existential fragment using only atomic formulae and conjunction, i.e.~to the \emph{conjunctive queries}.

Recall that introducing the ancestor/descendant relation to~the tree structure causes that every two nodes in~the Gaifman graph are in~distance at~most two from each other.
Thus, for the purpose of having a~relevant definition of the distance in the tree, we retain the child related notion of distance,
i.e.~in~this section, as before, the notion of the distance is induced by the child relations only.

\paragraph{Conjunctive queries}
A~\emph{conjunctive query} (CQ) over an~alphabet $\Gamma$ is a~formula of first\=/order logic, using only conjunction and existential quantification, over unary 
predicates $a(x)$, for $a \in \Gamma$, the root predicate $\rootP(x)$, and binary predicates $s_{\dL}(x,y)$, $s_{\dR}(x,y)$, $s(x,y)$, and $\ancestor(x,y)$.

An~alternative way of looking at conjunctive queries is via graphs and graph homomorphisms.
Intuitively, a~conjunctive query can be seen as a~graph (a relational structure) in which the variables of the query constitute the vertices, the unary relations of the query label the vertices, and the binary relations form and label the edges. We call such graphs \emph{patterns}.
More formally, a~pattern $\pi$ over $\Gamma$ is 
a relational structure $\pi = \langle V, V_{\rootP}, E_{\dL}, E_{\dR}, E_s, E_{\ancestor}, \lambda_{\pi} \rangle$, where $\parfun{\lambda_{\pi}}{V}{\Gamma}$
is a partial labelling, $V_{\rootP}$ is the set of root vertices, and $G_\pi = \langle V, E_{\dL} \cup E_{\dR} \cup E_s \cup E_{\ancestor} \rangle$ is a finite graph whose edges are split into
left child edges $E_{\dL}$, right child edges $E_{\dR}$, child edges $E_{s}$, and ancestor edges $E_{\ancestor}$. By $|\pi|$ we mean the size of the underlying graph.

We say that a tree $t = \langle \dom{t}, s_{\dL}, s_{\dR}, \ancestor, (a^{t})_{a\in\Gamma} \rangle$ satisfies a pattern $\pi = \langle V, V_{\rootP}, E_{\dL}, E_{\dR}, E_s, E_{\ancestor}, \lambda_{\pi} \rangle$, denoted $t \models \pi$, if there exists a homomorphism $\fun{h}{\pi}{t}$, that is a function $\fun{h}{V}{\dom{t}}$ such that
\begin{enumerate}
    \item \fun{h} {\langle V, E_{\dL}, E_{\dR}, E_s, E_{\ancestor}\rangle} {\langle \dom{t}, s_{\dL}, s_{\dR}, s_{\dL} \cup s_{\dR} , \ancestor \rangle} is a homomorphism of relational structures,
    \item for every $v \in V_{\rootP}$ we have that $h(v) = \varepsilon$,
    \item and for every $v \in \dom{\lambda_{\pi}}$ we have that $\lambda_{\pi}(v) = t(h(v))$.
\end{enumerate}

Observe that, by definition, every conjunctive query can be represented as a pattern. The reverse is also true.
To obtain a~query introduce a~variable for every vertex of the pattern, and then express every vertex label by an~unary atom and every edge label by a~binary atom.
Since every pattern can be seen as a conjunctive query and vice versa, we will use those terms interchangeably.
The class of conjunctive queries is denoted \text{CQ}, the class of formulae that are Boolean combinations of conjunctive queries is denoted \text{BCCQ}.

\begin{thm}
    \label{thm:CQsAreRational}
    Let $q$ be~a~conjunctive query over the signature $\{\rootP, s_\dL, s_\dR, s,\allowbreak {\ancestor}\}\cup \Gamma$.
    Then, the uniform measure of~the language\footnote{A conjunctive query is a special first\=/order formula, thus the set $\lang(q)$ is well\=/defined.}~$\tL(q)$    
    is~rational and computable in~exponential space.
\end{thm}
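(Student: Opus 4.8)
The plan is to reduce the computation of $\standardMeasure{\lang(q)}$ to a finite combinatorial problem by exploiting the locality of satisfaction of a conjunctive query. First I would observe that a tree $t$ satisfies a pattern $\pi$ iff there is a homomorphism $h\colon\pi\to t$, and that the image of $h$ is determined by a bounded amount of data: the placement of the root vertices at $\varepsilon$, and for each connected component of the ``child part'' of $\pi$ the choice of a node of $t$ where that component is anchored, together with the (finitely many) relative positions of the vertices of the component. The ancestor edges are the only source of non-locality, but an ancestor edge $u\ancestor v$ is witnessed simply by requiring that the anchor of the component of $v$ lies strictly below the anchor of the component of $u$. Thus a witness for $t\models q$ decomposes into finitely many ``local gadgets'' (finite labelled subtrees that must appear, each a prefix of some subtree of $t$) linked by ancestor constraints that only speak about the relative depth/position of the gadgets.

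The key steps, in order, are: (1) Normalise $q$ into a pattern $\pi$ and enumerate the (exponentially many, in $|\pi|$) possible ``shapes'' of a homomorphic image — i.e.\ which vertices are identified, how the child-components are laid out as finite trees, and how they are arranged relative to one another via the ancestor edges; drop inconsistent shapes. (2) For a fixed shape, express the event ``$t$ realises this shape'' as a finite Boolean combination of events of the form ``a fixed finite tree $\treeZ_i$ is a prefix of the subtree of $t$ at some node below a given position $v_i$'' (cf.\ Item~\ref{item:finiteSubtree} of Lemma~\ref{lemma:basicMeasuresLemma}, which gives such events measure $1$) and events ``a fixed finite prefix appears at a fixed position'' (Item~\ref{item:finitePrefix}, giving a measure of the form $|\Gamma|^{-k}$), handling the ancestor links by the same limiting/independence argument as in the proof of Lemma~\ref{lemma:basicMeasuresLemma}. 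This yields a rational number for each shape. (3) Combine the per-shape contributions by inclusion–exclusion over the (exponentially many) shapes — $\lang(q)$ is a finite union of measurable sets, so its measure is an alternating sum of measures of intersections, each of which is again rational by the same analysis. Because all the trees $\treeZ_i$ and gadget sizes are polynomial in $|q|$ and there are at most exponentially many shapes, every quantity is representable with exponentially many bits and the whole computation runs in exponential space; rationality is immediate since every building block is rational.

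The main obstacle I expect is step~(2): correctly accounting for the interaction between an ancestor edge and the uniform measure. Naively one wants to say ``the probability that the required gadget occurs somewhere below $v_i$ is $1$'', which is true in isolation (Item~\ref{item:finiteSubtree}) but one must be careful that the finitely many gadgets and the positions realising them can be chosen \emph{simultaneously}, in a way compatible with all the ancestor constraints and with the child-components being pairwise in correct relative position — essentially re-running the ``$F$ has measure one'' argument from Lemma~\ref{lemma:basicSentenceMeasure}, but now the anchors are not fixed in advance and must be produced along a single branch respecting a partial order induced by the ancestor edges. The clean way to handle this is to fix, once and for all, disjoint regions of the tree (e.g.\ along $\dL^i\dR$ branches as in Lemma~\ref{lemma:basicMeasuresLemma}) for the $\ancestor$-minimal components and then recurse into them for deeper components; independence across these regions (Equation~\eqref{eq:indepencence}) then makes the measure factor as a product, and a countable intersection of measure-one events still has measure one. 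A secondary subtlety is making sure the inclusion–exclusion in step~(3) stays within exponential space despite exponentially many terms, which is fine since each term is computed on the fly and the running sum is a rational with exponentially-bounded numerator and denominator.
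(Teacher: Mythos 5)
Your overall strategy is the right one (the measure is decided by a part of the query that is forced into a bounded neighbourhood of the root, while the remaining gadgets are almost surely realisable by the argument of Item~2 of Lemma~\ref{lemma:basicMeasuresLemma}), but there is a genuine gap in how you treat the ancestor edges, and it sits exactly where the paper invests its main technical effort. You assert that an edge $u\ancestor v$ between different child\-/connected components ``is witnessed simply by requiring that the anchor of the component of $v$ lies strictly below the anchor of the component of $u$'', and in step~(2) you only allow two kinds of events: measure\-/one events ``gadget somewhere below a given position'' and ``fixed prefix at a fixed position''. This misses the ancestor edges that point \emph{into} the root\-/anchored region. For example, in the query $\rootP(r)\land r\,s_\dL\,w\land z\ancestor w\land b(z)$ the vertex $z$ lies in its own child\-/component, yet it is forced to the single position $\varepsilon$, so its component contributes a genuine prefix constraint (here $t(\varepsilon)=b$) and not a measure\-/one event; the downward ``limiting/independence argument'' you invoke simply does not apply to it. Such upward constraints moreover propagate along chains mixing child and ancestor edges, so deciding \emph{which} components are position\-/constrained, and proving that they are confined to a prefix of depth bounded by the size of the query (so that your shape enumeration is finite and your fixed\-/position events make sense), requires a structural lemma your proposal does not contain. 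This is precisely the role of the firm sub\-/patterns in the paper: the strongly connected components of the graph of connections, Proposition~\ref{prop:firmPatternsShortDistance} (a firm pattern is mapped within diameter $|\pi|$, a rooted one within distance $|\pi|$ of the root), the acyclicity of the graph of firm sub\-/patterns with its unique root sub\-/pattern, and Lemma~\ref{lemma:rootPatternExtraction}, which reduces $\standardMeasure{\lang(q)}$ to $0$, $1$, or $\standardMeasure{\lang(p)}$ for the root sub\-/pattern $p$.

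Once that analysis is in place, the paper's endgame is also simpler than yours: since every model of the root sub\-/pattern $p$ is decided by its full prefix of height $|p|$, and the basic sets $\setBall{t'}$ over full trees $t'$ of that height partition the space, one just counts the prefixes satisfying $p$ --- no inclusion--exclusion over shapes is needed. Your inclusion--exclusion over exponentially many shapes is not wrong in principle, but it adds a doubly exponential number of terms whose space accounting you would still have to justify; more importantly, without the bound on where the upward\-/forced components can land, neither the shape enumeration nor the per\-/shape factorisation into ``prefix event times measure\-/one event'' can be carried out as described. Your final paragraph correctly identifies and repairs the simultaneity issue for the downward\-/floating gadgets (this mirrors the construction of the measure\-/one set $S$ in Lemma~\ref{lemma:rootPatternExtraction}), but the upward direction is the missing idea.
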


To~prove the theorem we~will modify the concept of \emph{firm} sub\=/patterns, used e.g.~in~\cite{murlakAcyclic}.
Intuitively, a~\emph{firm sub\=/pattern} is~a~maximal part of~a~conjunctive query that has to~be~mapped in~a~small neighbourhood.
The overall proof strategy is~similar to~the first\=/order case: we identify those parts (in the form of \emph{firm} sub\=/patterns) of the conjunctive query 
that have to be satisfied in~the small neighbourhood of the root and those parts that can be satisfied arbitrarily far from the root.
As previously, the former decide the value of the measure and the latter can be ignored.


A~sub\=/pattern~$\pi'$ is~\emph{firm} if~it~is~a~sub\=/pattern of~a~pattern~$\pi$ induced by~vertices belonging to~a~maximal strongly\=/connected 
component in~the \emph{graph of connections}~$C_{\pi} = \langle V , E \rangle$ such that $V$ is the set of vertices of $G_\pi$ and $\langle x,y \rangle \in E$ if~either $\rootP(x)$, $x s_{\dL} y$, $y s_{\dL} x$, $x s_{\dR} y$, $y s_{\dR} x$, $x s y$, $y s x$, or $x {\ancestor} y$.
In~particular, a~pattern is~firm if~it~has a~single strongly\=/connected component.
We~say that a~sub\=/pattern is~\emph{rooted} if~it~contains the~predicate~$\rootP$.

\begin{proposition}
    \label{prop:firmPatternsShortDistance}
    Let $\pi = \langle V, V_{\rootP}, \allowbreak E_{\dL}, \allowbreak E_{\dR}, \allowbreak E_s, E_{\ancestor}, \lambda_{\pi} \rangle$ be~a~firm pattern. Then, for every tree~$t$ such that $t \models \pi$, for every two vertices~$x,y$ in~$V$, and for every homomorphism~$\fun{h}{\pi}{t}$
    we~have that $\td(h(x),h(y)) < |\pi|$.
    Moreover, if~$\pi$~is~rooted then for every vertex~$x$ we~have that~$\td(h(x),\varepsilon) < |\pi|$.
\end{proposition}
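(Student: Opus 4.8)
The plan is to bound the relevant distances inside the subtree of $t$ spanned by the images $\{h(v)\mid v\in V\}$ (the case $|\pi|=1$ being trivial). The first ingredient is a \emph{comparability} observation: for every edge $\langle a,b\rangle$ of the graph of connections $C_\pi$ and every homomorphism \fun{h}{\pi}{t}, the positions $h(a)$ and $h(b)$ are comparable in the prefix order — one is a prefix of the other. This is a three-case check on the reason $\langle a,b\rangle\in E$: if it is a child atom then $h(a)$ and $h(b)$ are adjacent in $t$; if it is $a\ancestor b$ then $h(a)\ancestor h(b)$; and if it is $\rootP(a)$ then $h(a)=\varepsilon\weakAncestor h(b)$.

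Next I would let $T_S$ be the smallest subtree of $t$ — the smallest set of positions closed under taking connecting paths — that contains $\{h(v)\mid v\in V\}$. By the comparability observation every edge of $C_\pi$ is sent to a vertical segment of $t$, and, $C_\pi$ being connected, consecutive such segments share an endpoint, so $T_S$ is exactly the union of these segments. In particular $h(x),h(y)\in T_S$ and, a subtree being closed under connecting paths, $\td(h(x),h(y))$ equals the distance between $h(x)$ and $h(y)$ computed inside $T_S$. Hence it suffices to show $|T_S|\le|\pi|$: since any tree on $n$ nodes has diameter at most $n-1$, this gives $\td(h(x),h(y))\le|T_S|-1\le|\pi|-1<|\pi|$.

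The core of the argument — the step I expect to be the most delicate — is the claim that, \emph{when $\pi$ is firm, every position of $T_S$ is itself an image $h(v)$}; this yields $|T_S|\le|\{h(v)\mid v\in V\}|\le|V|=|\pi|$. To prove it I would fix $u\in T_S$. If $u$ is an endpoint of one of the segments it lies on it is an image, so assume $u$ lies strictly inside a segment $[h(a),h(b)]$ with $\langle a,b\rangle\in E$; since a child atom would leave no position strictly between its endpoints, $\langle a,b\rangle$ comes from $a\ancestor b$ or from $\rootP(a)$, and in either case $h(a)\ancestor u\ancestor h(b)$. Now firmness enters: $C_\pi$ is strongly connected, so fix a directed path $b=v_0\to v_1\to\dots\to v_L=a$ and let $i$ be the largest index with $u\weakAncestor h(v_i)$; it exists since $u\ancestor h(v_0)=h(b)$, and $i<L$ since $u$ is a proper descendant of $h(a)=h(v_L)$. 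Crossing the edge $v_i\to v_{i+1}$ the image leaves the subtree below $u$: inspecting the two cases of the comparability of $h(v_i)$ and $h(v_{i+1})$ and using the maximality of $i$, one obtains $h(v_{i+1})\ancestor u\weakAncestor h(v_i)$, so $|h(v_{i+1})|<|h(v_i)|$. This excludes the reason $v_i\ancestor v_{i+1}$ and a child atom making $h(v_{i+1})$ a child of $h(v_i)$ (the depth would not decrease), and it excludes $\rootP(v_i)$ as well (that would force $h(v_i)=\varepsilon$, hence $u=\varepsilon$, contradicting $h(a)\ancestor u$). So the edge is a child atom witnessing that $h(v_{i+1})$ is the parent of $h(v_i)$ in $t$, and then $h(v_{i+1})\ancestor u\weakAncestor h(v_i)$ forces $u=h(v_i)$, an image.

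Finally the ``moreover'' part is immediate: if $\pi$ is rooted then $\rootP(r)$ holds for some $r\in V$, so $h(r)=\varepsilon$, and applying the bound just proved to the pair $x,r$ gives $\td(h(x),\varepsilon)=\td(h(x),h(r))<|\pi|$.
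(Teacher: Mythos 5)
Your proof is correct, and it is built on the same combinatorial heart as the paper's argument --- along a directed path in the graph of connections $C_\pi$ the images of consecutive vertices are prefix\-/comparable, and the only way such a path can climb from weakly below a fixed node $u$ to strictly above it is through a child atom, an $\ancestor$\-/edge or a $\rootP$\-/edge at the crossing point being excluded by the depth inequality --- but you package that heart differently. The paper argues by contradiction: if $\td(h(x),h(y))\geq|\pi|$, the connecting path has more nodes than the image, a pigeonhole gives a non\-/image node $u$ on it with, say, $u\ancestor h(y)$, and the crossing\-/edge case analysis shows that no directed $C_\pi$\-/path from $y$ to $x$ can pass $u$, contradicting firmness. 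You instead prove a stronger structural statement: for a firm pattern, the minimal subtree $T_S$ of $t$ spanning the image consists exclusively of image nodes (your crossing\-/edge analysis ends positively, identifying the interior node $u$ as $h(v_i)$, rather than in a contradiction), and the distance bound follows by counting, $\td(h(x),h(y))\leq|T_S|-1\leq|V|-1<|\pi|$. What each approach buys: the paper's route needs less set\-/up --- no spanning subtree and no verification that it equals the union of the vertical segments, a step which silently uses $|V|\geq 2$ so that every image node lies on some segment (the one\-/vertex case being the triviality you set aside) --- while yours isolates a reusable structural lemma and in fact yields the slightly sharper bound $\td(h(x),h(y))\leq |V|-1$, depending only on the number of variables of the query rather than on the full size of the pattern.
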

\begin{figure}
	\centering
	\begin{tikzpicture}[scale=1.9]
	
	\draw[dotted, blue] 
	(0,0) -- (-3,-3)
	(0,0) -- (3,-3);
	
	\node (v1) at (0,0) {};
	\node (v2) at (-.5,-.5) {};
    \node (v22)at (0,-1) {}; 
	\node (v3) at (-0.5,-1.5) {$h(y_{j+1})$};
	\node (v4) at (0.5,-1.5) {};
	\node (v5) at (1,-2) {};
	\node (v6) at (1.5,-2.5) {};
	\node (v7) at (1,-3) {};
	\node (v8) at (1.5,-3.5) {$h(x)$};
    \node (v9) at (-1,-2) {$u$};
    \node (v10) at (-1.5,-2.5) {$h(y_{j})$};
    \node (v11) at (-1.5,-3) {};
    \node (v12) at (-1,-3.5) {$h(y)$};
	
	\draw[dotted, gray] 	 
    (v1.center) -- (v2.center)
	(v2.center) -- (v22.center)
	(v22.center) -- (v3.center)
	(v4.center) -- (v5.center)
	(v5.center) -- (v6.center)
	(v6.center) -- (v7.center)
	(v7.center) -- (v8.center)
	(v8.center) -- (1,-4)
    
  	(v3.center) -- (v9.center)
    (v9.center) -- (v10.center)
    (v10.center) -- (v11.center)
    (v11.center) -- (v12.center)
    (v12.center) -- (-.5,-4);
    
    \draw[dashed, black]
    (v22.center) -- (v3.center)
    (v22.center) -- (v4.center)
    (v4.center) -- (v5.center)
    (v5.center) -- (v6.center)
    (v6.center) -- (v7.center)
    (v7.center) -- (v8)
        
    (v3) -- (v9)
    (v9) -- (v10)
    (v10) -- (v11)
    (v11) -- (v12);

	\end{tikzpicture}
    \caption{A~possible placement of~nodes in~the proof of~Proposition~\ref{prop:firmPatternsShortDistance}.}
    \label{fig:firm-pattern-short-distance}
\end{figure}
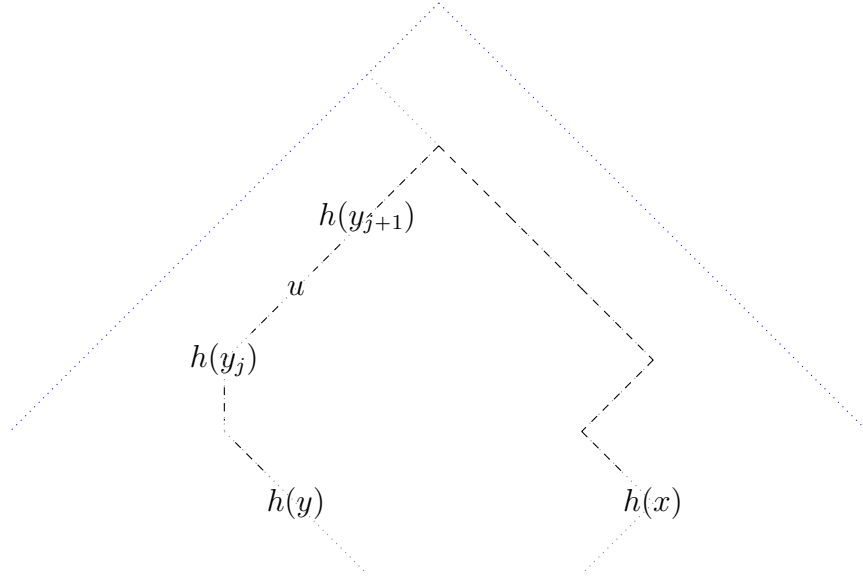

\begin{proof}
    Let us~assume otherwise and put $n = |\pi|$. Then, there is~a~tree $t$, a~homomorphism $h$, and two vertices $x$, $y$ such that $t \models \pi$ and $\td(h(x),h(y)) \ge n$.
    We claim that $x$ and $y$ cannot be~in~the same strongly\=/connected component.
    
    Since for some~$m$ we~have that $\td(h(x),h(y)) = m - 1 \ge n$, there is~a~simple path connecting $h(x)$ and $h(y)$. That is, there is~a~sequence of~distinct nodes~$u_1, u_2, \dots, u_m$ such that $u_1 = h(x)$,  $u_m = h(y)$ and for every~$i$,
    $u_i$ and $u_{i+1}$ are in a~child relation, i.e.~$s(u_{i}, u_{i+1})$ or~$s(u_{i+1}, u_{i})$.
    Notice that every node in the sequence is an~ancestor of one of the nodes $h(x)$ or $h(y)$.
    Moreover, since the path consists of $n+1$ nodes, there has to~be~a~node $u$ in~the sequence such that $u$ is~not in~the image of~the homomorphism~$h$.
    More precisely, there is~a~node~$u$ such that $u = u_i$ for some $1 \leq i \leq m$, $u \notin h(\pi)$, and
    one of~the nodes $h(x)$ or~$h(y)$ is~a~descendant of~$u$, i.e.~$u \ancestor h(x)$ or~$u \ancestor h(y)$.
    Without loss of generality, let us~say that~$u \ancestor h(y)$. Or, more precisely, that $u\dL \weakAncestor h(y)$.
    See Figure~\ref{fig:firm-pattern-short-distance} for a~possible placement of the nodes.
    
    If~$x$ and $y$ were in~the same strongly\=/connected component then there would be~a~path that connects $y$~to~$x$ in~the graph of~connections~$C_\pi$,
    i.e.~a~sequence of~vertices $y_1, y_2, \dots, y_k$, for some $k$, such that $y_1 = y$, $y_k = x$, and for every $i= 1, \dots, k-1$ there is~an edge between $y_i$ and $y_{i+1}$ in $C_{\pi}$.
    In particular, this would imply that for every~$i$ we~have that $h(y_i)$ and $h(y_{i+1})$ are $\weakAncestor$\=/comparable.
    Now, there would also exist an~index $j \in \{1,\dots, k-1 \}$ such that $h(y_{j+1}) \ancestor u \ancestor h(y_{j})$.
    Indeed, if~there would be~no~such index, then all the vertices $y_i$ would satisfy $u\dL \weakAncestor h({y_i})$, as $y_i$ and $y_{i+1}$ are $\weakAncestor$\=/comparable
    for every index~$i$.
    But this is~impossible because if~$u\dL \weakAncestor h(y_i)$ for all~$i$, then we~would have that $u\dL \weakAncestor h(y_k) = h(x)$.
    Now, since $u\dL \weakAncestor h(y)$ and $u\dL \weakAncestor h(x)$, then, by~the definition of~the distance, $u$~would not belong to~the sequence $u_1, \dots, u_m$.
    Which is~a~contradiction with our assumption.
    
    Therefore, there is~an~index~$j$ such that $h(y_{j+1}) \ancestor u \ancestor h(y_{j})$.
    Thus, by~the definition of~$C_{\pi}$ we~have that either $\rootP(y_{j})$,
    $y_{j} s_{\dL} y_{j+1}$, $y_{j+1} s_{\dL} y_{j}$, $y_{j} s_{\dR} y_{j+1}$, $y_{j+1} s_{\dR} y_{j}$, $y_{j} s y_{j+1}$, $y_{j+1} s y_{j}$, or $y_{j} {\ancestor} y_{j+1}$.
    Neither of~child relations is~possible because the distance between 
    $h(y_{j})$ and $h(y_{j+1})$ is~at~least two. Similarly, both $y_{j} {\ancestor} y_{j+1}$ and $\rootP(y_{j})$ are impossible because
    we~have that $u \ancestor h(y_{j})$.
    Hence, there can be~no~such sequence $y_1, \dots, y_k$ and we obtain a~contradiction. Thus $x$ and $y$ cannot belong to~the same strongly\=/connected component.
    This proves the first part of~the lemma.
    
    Now, if~$\pi$ is~rooted then there is~a~vertex~$y$ such that for every homomorphism~$h$ we~have that $h(y) = \varepsilon$.
    Hence, by~the first part of~the lemma, for all vertices $x\in \pi$ we~have that $\td(h(x),\varepsilon) = \td(h(x),h(y))< n$.
\end{proof}

\paragraph{Graph of firm sub\=/patterns}
Let $q$ be~a~conjunctive query.
Consider a~graph $\graph^F_q = \langle V, E \rangle $ where $V$ is~the set of~firm sub\=/patterns of~$q$ and there is~an~edge
$\langle v_1, v_2 \rangle \in E \subseteq V \times V$ 
between two vertices $v_1$, $v_2$ if~and only if~there is~an~$\ancestor$\=/labelled edge between some two vertices $w_1 \in v_1$, $w_2 \in v_2$.
We~call this graph the \emph{graph of~firm sub\=/patterns} of~the conjunctive query~$q$.

\begin{fact}
    The directed graph $G^F_q$ of~firm sub\=/patterns of~a~conjunctive query~$q$ is~acyclic and has at~most one rooted firm sub\=/pattern. We~call that sub\=/pattern the~\emph{root sub\=/pattern}.
\end{fact}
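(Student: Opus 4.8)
The plan is to deduce both assertions from one elementary fact: contracting each maximal strongly-connected component of a directed graph to a single vertex (forming its condensation) always yields a directed acyclic graph. We apply this to the graph of connections $C_{\pi}$ associated with the pattern $\pi$ of $q$. Recall that, by definition, the firm sub-patterns of $q$ are exactly the sub-patterns of $\pi$ induced by the maximal strongly-connected components of $C_{\pi}$; hence the vertex set of $G^F_q$ is in canonical bijection with the set of strongly-connected components of $C_{\pi}$, and each firm sub-pattern, viewed inside $C_{\pi}$, is itself strongly connected.

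First I would prove that $G^F_q$ is acyclic. By definition, each edge $\langle v_1, v_2\rangle$ of $G^F_q$ is witnessed by an $\ancestor$-labelled edge of $\pi$ from some vertex $w_1 \in v_1$ to some vertex $w_2 \in v_2$; inspecting the definition of $C_{\pi}$ (the clause ``$x \ancestor y$''), the ordered pair $\langle w_1, w_2\rangle$ is then also an edge of $C_{\pi}$. Suppose, towards a contradiction, that $G^F_q$ has a directed cycle $v_1 \to v_2 \to \dots \to v_k \to v_1$ through pairwise distinct firm sub-patterns. Chaining the witnessing $C_{\pi}$-edges $w_i \to w_i'$ (with $w_i \in v_i$ and $w_i' \in v_{i+1}$) with paths inside each strongly-connected $v_{i+1}$ from $w_i'$ to $w_{i+1}$ produces a closed walk in $C_{\pi}$ that meets every $v_j$. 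Consequently all the $v_j$ lie in a single strongly-connected component of $C_{\pi}$, so $v_1 = \dots = v_k$, a contradiction. Put differently, $G^F_q$ is a subgraph of the condensation of $C_{\pi}$, hence acyclic.

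Next I would show that $G^F_q$ has at most one rooted firm sub-pattern. Assume $v_1$ and $v_2$ are both rooted, witnessed by vertices $a_1 \in v_1$ and $a_2 \in v_2$ carrying the predicate $\rootP$. The defining clause ``$\langle x,y\rangle \in E$ if $\rootP(x)$'' of $C_{\pi}$ gives $\langle a_1, a_2\rangle \in C_{\pi}$, and symmetrically $\langle a_2, a_1\rangle \in C_{\pi}$; therefore $a_1$ and $a_2$ lie in the same strongly-connected component of $C_{\pi}$, i.e.\ in the same firm sub-pattern, so $v_1 = v_2$. (If $q$ contains no $\rootP$-atom, there is no rooted firm sub-pattern and the claim is vacuous.) This legitimises the name \emph{root sub-pattern} for the unique rooted firm sub-pattern, when it exists.

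Since the whole argument is bookkeeping, the points that need care are minor: one must check that the orientation of the $\ancestor$-labelled edges used to define $G^F_q$ agrees with the orientation of the corresponding edges of $C_{\pi}$, so that a cycle in $G^F_q$ genuinely lifts to a closed walk in $C_{\pi}$ rather than an incoherent sequence; and one should read ``an edge between two vertices'' of $G^F_q$ as connecting two \emph{distinct} firm sub-patterns, which excludes self-loops (a self-loop could in principle arise syntactically from an unsatisfiable $q$ placing an $\ancestor$-atom inside one strongly-connected component). No appeal to satisfiability of $q$ is required for either claim.
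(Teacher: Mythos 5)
Your proof is correct and is essentially the paper's own two-line argument made explicit: all vertices carrying $\rootP$ are pairwise joined in $C_{\pi}$ and hence lie in a single strongly-connected component, and acyclicity is the standard condensation argument, since every $\ancestor$-witnessed edge of $G^F_q$ between distinct firm sub-patterns lifts to an edge of $C_{\pi}$. One caveat on your parenthetical: a self-loop (an $\ancestor$-atom inside a single strongly-connected component) also arises for satisfiable queries, e.g.\ $s(x,y)\wedge x \ancestor y$, so excluding self-loops by reading the edges of $G^F_q$ as joining distinct firm sub-patterns is needed independently of satisfiability — which is indeed the intended reading and does not affect your argument.
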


\begin{proof}
    By the definition of~the firm sub\=/patterns, every node with the predicate~$\rootP$ ends up~in~the same maximal strongly\=/connected component.
    The acyclicity follows directly from the fact that firm sub\=/patterns are the maximal strongly\=/connected components.
\end{proof}

As~in~the case of~root formulae, the root pattern decides of~the behaviour of~a~satisfiable conjunctive query,
as~expressed by~the following lemma.

\begin{lemma}
    \label{lemma:rootPatternExtraction}
    Let $q$ be~a~conjunctive query over the signature $\{\rootP, s_\dL, s_\dR, s, {\ancestor} \}\cup \Gamma$.
    Then, either
    \begin{itemize}
        \item $q$ is~not satisfiable and~$\standardMeasure{\tL(q)} = 0,$
        \item $q$ is~satisfiable,  has no~root sub\=/pattern, and $\standardMeasure{\tL(q)} = 1,$
        \item or $q$~is~satisfiable,  has a~root sub\=/pattern $p$, and $\standardMeasure{\tL(q)} = \standardMeasure{\tL(p)}$.
    \end{itemize}
\end{lemma}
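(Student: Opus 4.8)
The plan is to treat the three cases separately; the first two are short and the third carries the content. If $q$ is not satisfiable then $\tL(q)=\emptyset$ and $\standardMeasure{\tL(q)}=0$. Suppose now that $q$ is satisfiable and has no root sub\=/pattern. Then $q$ contains no atom $\rootP(x)$ (an occurrence of $\rootP$ would put its vertex into some firm sub\=/pattern, which would then be rooted), so $\tL(q)$ is insensitive to shifting into a sub\=/tree: if $\subtreeAtNode{t}{v}\models q$ for some node $v$ then $t\models q$, because a homomorphism $h$ into $\subtreeAtNode{t}{v}$ turns into a homomorphism into $t$ upon replacing each $h(w)$ by $v\cdot h(w)$ — all child, ancestor and label atoms are preserved, and there is no $\rootP$ atom to satisfy. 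Fixing a finite tree $\hat t$ with $\hat t\models q$ (which exists by satisfiability) and applying Lemma~\ref{lemma:basicMeasuresLemma}(\ref{item:finiteSubtree}) with $u=\varepsilon$, almost every full tree $t$ has a node $v$ with $\hat t\weakAncestor\subtreeAtNode{t}{v}$; then $\subtreeAtNode{t}{v}\models q$, hence $t\models q$, and so $\standardMeasure{\tL(q)}=1$.

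It remains to handle the case where $q$ is satisfiable with root sub\=/pattern $p$. The inclusion $\tL(q)\subseteq\tL(p)$ is immediate since the restriction of a homomorphism $q\to t$ to $p$ is a homomorphism $p\to t$, so $\standardMeasure{\tL(q)}\le\standardMeasure{\tL(p)}$. For the reverse inequality I would slice $\tL(p)$ into cylinders. As $p$ is rooted and firm, Proposition~\ref{prop:firmPatternsShortDistance} forces every homomorphism $p\to t$ to map all of $p$ into the ball of radius $|p|$ around the root, so membership of $t$ in $\tL(p)$ depends only on the restriction of $t$ to nodes of depth below $|p|$; hence $\tL(p)=\bigsqcup_{\tau}\bB_{\tau}$, where $\tau$ ranges over the finitely many full trees of height $|p|$ satisfying $p$. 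Since $\tL(q)\subseteq\tL(p)$, it suffices to prove $\standardMeasure{\tL(q)\cap\bB_{\tau}}=\standardMeasure{\bB_{\tau}}$ for each such $\tau$; summing over $\tau$ then gives $\standardMeasure{\tL(q)}=\standardMeasure{\tL(p)}$.

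So fix such a $\tau$ together with a homomorphism $h_p\colon p\to\tau$. I claim that almost every $t\in\bB_{\tau}$ admits a homomorphism $q\to t$ extending $h_p$. To build it, process the non\=/root firm sub\=/patterns $p_1,\dots,p_m$ in a topological order of the acyclic graph $\graph^F_q$, so that whenever there is an $\ancestor$\=/edge from $p_i$ to $p_j$ the sub\=/pattern $p_i$ is handled before $p_j$. When $p_j$ is reached, every vertex that is required (through the $\ancestor$\=/edges of $q$) to lie strictly above the vertices of $p_j$ has already received an image, and I argue that these images all lie on a single branch, so there is a well\=/defined lowest such node $u_j$. Fix a finite tree $\hat t_j$ admitting a homomorphism from $p_j$. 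By Lemma~\ref{lemma:basicMeasuresLemma}(\ref{item:finiteSubtree}), for almost every $t$ there is a node $v_j$ with $u_j\ancestor v_j$ and $\hat t_j\weakAncestor\subtreeAtNode{t}{v_j}$; placing $p_j$ there puts all of its image strictly below $u_j$, hence strictly below each vertex required to be above it, and also makes every exit vertex of $p_j$ sit below the fresh node $v_j$, so the $\ancestor$\=/edges leaving $p_j$ remain satisfiable at later steps. Intersecting the finitely many almost\=/sure events obtained this way shows that almost every $t\in\bB_{\tau}$ lies in $\tL(q)$.

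The main obstacle is precisely the comparability statement invoked in the inductive step: one must show that, in a satisfiable conjunctive query, the images under the partial homomorphism built above of all vertices lying $\ancestor$\=/above a common firm sub\=/pattern are pairwise $\weakAncestor$\=/comparable, so that $u_j$ is well defined. I expect this to follow from an argument in the spirit of the proof of Proposition~\ref{prop:firmPatternsShortDistance}: two incomparable such images would separate the corresponding vertices of $p_j$ by a ``hole'' incompatible with $p_j$ being a single strongly\=/connected component of the graph of connections, contradicting satisfiability of $q$. Once this comparability (and its preservation as sub\=/patterns are placed at fresh deep nodes) is established, the remainder is routine bookkeeping with the acyclic structure of $\graph^F_q$, the entry/exit vertices of the firm sub\=/patterns, and Lemma~\ref{lemma:basicMeasuresLemma}.
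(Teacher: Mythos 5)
Cases 1 and 2 of your proposal, as well as the inclusion $\tL(q)\subseteq\tL(p)$ and the cylinder decomposition of $\tL(p)$, are correct and essentially follow the paper's route (the paper works with a single measure\-/one set $S$ of trees containing, below every node of depth $|q|{+}1$, a copy of a finite prefix of a fixed model of $q$, and asserts that on $S$ one has $t\models q$ iff $t\models p$; your per\-/cylinder extension argument is a refinement of the same idea). The problem is the step you explicitly defer: the claim that the images, under the partial homomorphism extending an arbitrary $h_p\colon p\to\tau$, of all vertices that must lie $\ancestor$\-/above a given non\-/root firm sub\-/pattern are pairwise $\weakAncestor$\-/comparable. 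This does \emph{not} follow from satisfiability of $q$ together with firmness. Satisfiability only guarantees that \emph{some} homomorphism of the whole query aligns these vertices on one branch; the aligning constraint travels through a vertex that lies \emph{outside} $p$, so it is invisible to $p$, and an arbitrary homomorphism $h_p$ of $p$ alone need not respect it.

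Concretely, over $\Gamma=\{a,b,c\}$ consider $q\,{=}\,\exists r,x,y,x',z.\ \rootP(r)\wedge s(r,x)\wedge s(r,y)\wedge s(y,x')\wedge a(x)\wedge b(x')\wedge c(z)\wedge (x\ancestor z)\wedge(x'\ancestor z)$. It is satisfiable (map $x,y\mapsto\dL$, $x'\mapsto\dL\dL$, $z\mapsto\dL\dL\dL$ in a suitable tree), and since $z$ has no outgoing edge in the graph of connections, the root sub\-/pattern $p$ is induced by $\{r,x,y,x'\}$, i.e.\ it only asks for an $a$\-/labelled child and a $b$\-/labelled grandchild of the root. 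Now take any full tree $\tau$ of height $|p|$ with $\tau(\dL)=a$, $\tau(\dR)=c$, $\tau(\dL\dL)=\tau(\dL\dR)=a$, $\tau(\dR\dL)=b$. Then $\tau\models p$, but in every $t\in\bB_{\tau}$ each homomorphism of $p$ is forced to send $x\mapsto\dL$ and $x'$ to a grandchild below $\dR$; these are incomparable, so no image for $z$ exists and $\tL(q)\cap\bB_{\tau}=\emptyset$. Hence your claim that almost every $t\in\bB_{\tau}$ lies in $\tL(q)$ fails, and no smarter choice of $h_p$ (indeed no homomorphism of $q$ at all) can repair it within this cylinder. Worse, with the definitions as stated the two sides of the target equality genuinely differ here: up to a null set, $t\models q$ requires an $a$\-/labelled child of the root that itself has a $b$\-/labelled child, whereas $t\models p$ only requires an $a$\-/labelled child and a $b$\-/labelled grandchild, so $\standardMeasure{\tL(q)}<\standardMeasure{\tL(p)}$. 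So the missing step is not routine bookkeeping: the comparability constraint on $x$ and $x'$ induced by the ancestor edges into the discarded sub\-/pattern $\{z\}$ is lost when passing from $q$ to $p$, and any complete argument must incorporate such induced constraints into the decomposition itself (for instance by first saturating $q$ with the implied comparability/order information and branching over the possible alignments). Note that this is exactly the point that the paper's own proof compresses into the single sentence asserting $t\models q\iff t\models p$ for all $t\in S$; your write\-/up makes the difficulty explicit but does not resolve it.
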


\begin{proof}
    If $q$ is~not satisfiable then $\tL(q) = \emptyset$ and so~$\standardMeasure{\tL(q)} = 0$.
    Let $q$ be~satisfiable, i.e.~there is~a~tree~$\treeZ^q$ and a~homomorphism~$\fun{h}{\graph_q}{\treeZ^q}$.
    Let $\treeZ^r$ be~a~finite tree such that $h(\graph_q) \subseteq \nodes{\treeZ^r}$ and let the set $S \subseteq \treesInfinite{{\Gamma}}$ be~the set of~all trees $\treeZ$ such that for every node $u \in \dS^{|q| +1}$ the tree $\treeZ^r$ is~a~prefix of~$\subtreeAtNode{\treeZ}{u'}$ for some node $u'$ such that $u\weakAncestor u'$. By~Lemma~\ref{lemma:basicMeasuresLemma}, we~have that~$\standardMeasure{S} = 1$.
    
    If $q$~has no~root firm sub\=/pattern, then $S \subseteq \tL(q)$ and we~have that 
    \[\standardMeasure{\tL(q)} \geq \standardMeasure{S} = 1.\]
    
    On~the other hand, if $q$ has a~root sub\=/pattern~$p$ then for every tree~$\treeZ \in S$
    we~have that $\treeZ \models q$ if~and only if~$\treeZ \models p$.
    Thus, $\tL(q) \cap S = \tL(p) \cap S$ and since $\standardMeasure{S} = 1$, we~have that 
    \[
    \standardMeasureBig{\lang(q)} = \standardMeasure{\tL(q) \cap S} = \standardMeasure{\tL(p) \cap S} = \standardMeasureBig{\lang(p)}.\qedhere
    \]
\end{proof}

In~other words, the problem of~computing the uniform measure of~a~set of~full trees defined by~a~conjunctive query reduces to~the following problem of~counting the models of~a~fixed height.

\decpro{Conjunctive queries counting}{problem:cqCounting}{A~conjunctive query $q$ and a natural number $n$ in unary.}{Number of full trees of height $n$ that satisfy $q$.}

\begin{proposition}
    Problem~\ref{problem:cqCounting} can be solved in~space exponential in~$n$ and 
    polynomial in~the size of~the query~$q$.
\end{proposition}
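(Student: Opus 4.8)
The plan is a direct brute-force enumeration, exploiting the fact that although there are doubly exponentially many candidate trees, only one of them ever needs to be held in memory at a time.

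First I would record that a full tree $t$ of height $n$ has node set $\setD^{\le n}$, hence $|\nodes{t}| = 2^{n+1}-1$; after fixing, say, the length-lexicographic ordering of $\setD^{\le n}$, such a tree is nothing but a word in $\Gamma^{2^{n+1}-1}$, and there are exactly $|\Gamma|^{2^{n+1}-1}$ of them. The algorithm keeps a counter $c$ initialised to $0$ and iterates over all words in $\Gamma^{2^{n+1}-1}$ in lexicographic order; at each step it holds only the current word $t$ in memory, tests whether $t \models q$, and increments $c$ whenever the test succeeds. At the end it outputs $c$.

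For the membership test I would simply invoke Lemma~\ref{lemma:modelCheckingFO}: a conjunctive query is a particular first-order sentence of size $O(|q|)$, and $t$ is a finite tree with $2^{n+1}-1$ nodes, so $t \models q$ can be decided in space polynomial in $|q|$ and polynomial in $2^{n+1}-1$, i.e.\ polynomial in $|q|$ and exponential in $n$. (If one prefers a self-contained argument: enumerate the candidate homomorphisms $\fun{h}{q}{t}$ one at a time; each of the $\le |q|$ variables is mapped to a node of $t$, i.e.\ to a word in $\setD^{\le n}$, which is stored in $O(|q|\cdot n)$ bits, and for a fixed assignment every atom of $q$ — a unary predicate, one of $s_\dL,s_\dR,s$, the root constant $\rootP$, or $\ancestor$ — is checked by inspecting the stored tree in space polynomial in $n$.)

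Finally I would account for the space: storing the current tree $t$ costs $2^{n+1}-1$ symbols; the membership test costs space polynomial in $|q|$ and exponential in $n$; and the counter $c$ never exceeds $|\Gamma|^{2^{n+1}-1}$, so it fits in $O\big(2^{n}\log|\Gamma|\big)$ bits. Summing these, the total working space is exponential in $n$ and polynomial in $|q|$, as claimed. There is no genuine obstacle here; the only point worth stressing is that the running time of this procedure is doubly exponential in $n$, but since the statement concerns space rather than time this is harmless — the whole content is that one tree together with one counter, each of size only singly exponential in $n$, already suffices.
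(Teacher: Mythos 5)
Your proposal is correct and follows essentially the same route as the paper: enumerate the full trees of height $n$ one at a time (each of size exponential in $n$), decide $t \models q$ via the polynomial-space first-order model-checking of Lemma~\ref{lemma:modelCheckingFO}, and maintain a counter, all within space exponential in $n$ and polynomial in $|q|$. Your explicit bookkeeping of the counter size and the remark about doubly exponential running time are fine refinements but do not change the argument.
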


\begin{proof}
    All we~need is~to enumerate all full trees of~height~$n$ and check whether they satisfy the query.
    The number of~such trees is~exponential in~$n$ and the model checking can be~done in~polynomial space, with respect to~both the query and the size of~the tree, see Lemma~\ref{lemma:modelCheckingFO}. 
\end{proof}

We~infer Theorem~\ref{thm:CQsAreRational} as~an~immediate consequence.

\begin{proof}[Proof of~Theorem~\ref{thm:CQsAreRational}]
    In~polynomial space we~can check whether the query is satisfiable, see e.g.~\cite{bjorklundCQscontainment},
    and in~polynomial time compute its root sub\=/pattern: it~is~folklore that one can compute all strongly\=/connected components of~a~directed graph in~polynomial time.
    
    Now, by~Lemma~\ref{lemma:rootPatternExtraction}, if~the query is~not satisfiable then the measure is~$0$;
    if~it~is~satisfiable, but there is~no~root sub\=/pattern then the measure is~$1$.
    Thus, the only case left is~when the query is~satisfiable and has a~root sub\=/pattern~$p$.
    
    If~it~is~the case, then $\standardMeasure{\tL(q)} = \standardMeasure{\tL(p)}$.
    Since $p$ is~a~root sub\=/pattern, then by~Proposition~\ref{prop:firmPatternsShortDistance} for any tree~$t$, any homomorphism $\fun{h}{p}{t}$, and any vertex $v$ of~the query we have that $|h(v)| < |p|$.
    Thus, for any full tree $t \in \treesInfinite{\Gamma}$ to~decide 
    whether $t \models p$ we~only need to~check the prefix of~$t$ that is~of~height $|p|$.

    Since the sets $\mathbb{B}_{t'}$, where $t'$ ranges over full trees of~height $|p|$, form a~partition of~$\treesInfinite{\Gamma}$, to~compute $\standardMeasure{\tL(p)}$ it~is~enough to~iterate
    over all such trees and compute how many of~them satisfy~$p$.
    
    Since every full tree of~height~$|p|$ is~of~exponential size with respect to~$p$ and 
    they can be~iterated in~exponential space, we~infer that the measure~$\standardMeasure{\tL(p)}$ can be~computed 
    in~exponential space.
    To conclude the proof we notice that $\standardMeasure{\tL(p)}$ is a~finite sum of~the measures of~sets of form~$\setBall{t}$, where  $t$ is~some finite tree. Since for a~finite tree $t$ the measure of~$\setBall{t}$ is~rational, then so is $\standardMeasure{\tL(p)}$.
\end{proof}

As~in the case of~first\=/order formulae, we~can lift Theorem~\ref{thm:CQsAreRational} to~Boolean combinations of~conjunctive queries.

\begin{corollary}
    \label{cor:booleanCombinationOfCQsAreRational}
    Let $\varphi$ be a Boolean combination of conjunctive queries. Then, the uniform measure $\standardMeasureS(\tL(\varphi))$ is~rational and can be~computed in~exponential space.
\end{corollary}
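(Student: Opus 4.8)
The plan is to mirror, one level up, the passage in Section~\ref{sec:fo-sets} from a single basic local sentence to Boolean combinations, taking the single conjunctive query case (Theorem~\ref{thm:CQsAreRational}) as the base case. Write $\varphi$ as a Boolean combination of conjunctive queries $q_1,\dots,q_m$. First I would run, for each $i$, the case analysis from the proof of Theorem~\ref{thm:CQsAreRational}: if $q_i$ is unsatisfiable set $C_i \eqdef \emptyset$ and $h_i\eqdef 0$; if $q_i$ is satisfiable with no root sub\=/pattern set $C_i \eqdef \treesInfinite{\Gamma}$ and $h_i\eqdef 0$; otherwise let $p_i$ be its root sub\=/pattern, put $h_i\eqdef|p_i|$ and $C_i \eqdef \tL(p_i)$. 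By Proposition~\ref{prop:firmPatternsShortDistance}, for a full tree $t$ the condition $t\models p_i$ depends only on the restriction of $t$ to positions of length $<h_i$, so each $C_i$ is a finite union of basic sets $\setBall{\tau}$ with $\tau$ a full tree of height $h_i$, in particular clopen. Moreover, Lemma~\ref{lemma:rootPatternExtraction} (more precisely, the measure\=/$1$ set $S_i$ built in its proof, on which $\tL(q_i)$ and $C_i$ coincide) gives that $\tL(q_i)$ and $C_i$ agree up to a set of measure zero.

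Next I would set $h\eqdef\max_i h_i$ and let $\tau$ range over the full trees of height $h$, so that the sets $\setBall{\tau}$ form a finite partition of $\treesInfinite{\Gamma}$. For a fixed $\tau$ and each $i$, the set $\setBall{\tau}\cap C_i$ equals $\setBall{\tau}$ or $\emptyset$, since $C_i$ is a union of basic sets of height $h_i\le h$ and $\tau$ already decides membership in $C_i$; because $\setBall{\tau}\cap\tL(q_i)$ differs from $\setBall{\tau}\cap C_i$ by a null set, we get $\standardMeasure{\setBall{\tau}\cap\tL(q_i)}\in\{0,\standardMeasure{\setBall{\tau}}\}$. Applying Lemma~\ref{lemma:booleanAlgebraAndMeasure} with the measure space $\treesInfinite{\Gamma}$, the measure $\standardMeasureS$, the distinguished set $W\eqdef\setBall{\tau}$, and the family $\{\tL(q_i)\}_i$, we conclude that the set $\tL(\varphi)$, which lies in the generated Boolean algebra, satisfies $\standardMeasure{\setBall{\tau}\cap\tL(\varphi)}\in\{0,\standardMeasure{\setBall{\tau}}\}$. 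Let $\varphi^{\ast}$ be obtained from $\varphi$ by replacing each $q_i$ with $\bot$, with $\exists x.\ \rootP(x)$, or with $p_i$ according to the three cases; since $\tL$ commutes with Boolean connectives and $t\models p_i\iff\tau\models p_i$ for $t\in\setBall{\tau}$, one checks that $\standardMeasure{\setBall{\tau}\cap\tL(\varphi)}=\standardMeasure{\setBall{\tau}}$ precisely when $\tau\models\varphi^{\ast}$, and is $0$ otherwise. Summing over the partition,
\[
  \standardMeasure{\tL(\varphi)} \;=\; \sum_{\tau\,\models\,\varphi^{\ast}}\standardMeasure{\setBall{\tau}}
  \;=\; \bigl|\{\tau \mid \tau\models\varphi^{\ast}\}\bigr|\cdot|\Gamma|^{-|\dom{\tau}|},
\]
which is rational. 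For the complexity bound: satisfiability of each $q_i$ is in polynomial space, and its strongly\=/connected components, hence $p_i$ and $h_i$, are computable in polynomial time, so $h$ and $\varphi^{\ast}$ have polynomial size; the full trees of height $h$ are exponentially many, each of exponential size, and can be enumerated in exponential space, while for each $\tau$ deciding $\tau\models\varphi^{\ast}$ is model checking a Boolean combination of conjunctive queries on a finite tree, which by Lemma~\ref{lemma:modelCheckingFO} takes space polynomial in $|\varphi^{\ast}|$ and $|\dom{\tau}|$; maintaining the running count also fits in exponential space. This yields the claimed exponential\=/space algorithm and proves Corollary~\ref{cor:booleanCombinationOfCQsAreRational}.

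The only genuinely delicate point is the interaction of the null sets with the Boolean structure: one must be sure that the finitely many measure\=/zero discrepancies between $\tL(q_i)$ and $C_i$ still contribute nothing after taking Boolean combinations, i.e.\ that within each $\setBall{\tau}$ one may replace $\tL(q_i)$ by the clopen $C_i$ without changing any measure under consideration. This is exactly what Lemma~\ref{lemma:booleanAlgebraAndMeasure} is designed to handle, so once the per\=/query reduction to a clopen set of polynomially bounded prefix depth is in place (Proposition~\ref{prop:firmPatternsShortDistance} and Lemma~\ref{lemma:rootPatternExtraction}), no new combinatorics on patterns is required and the rest is bookkeeping; accordingly I do not expect a serious obstacle here.
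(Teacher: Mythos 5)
Your proposal is correct and follows essentially the same route as the paper: replace each conjunctive query by its root sub\=/pattern (or by $\bot$, or a trivially true query), justify the replacement via Lemma~\ref{lemma:rootPatternExtraction} together with Lemma~\ref{lemma:booleanAlgebraAndMeasure}, and then, as in the proof of Theorem~\ref{thm:CQsAreRational}, count the full trees of bounded height satisfying the reduced combination. One cosmetic slip: the full trees of height $h$ are doubly exponentially many (each of exponential size), not exponentially many, but since they are enumerated one at a time and the running counter needs only exponentially many bits, your exponential\=/space bound still stands.
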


\begin{proof}
    This is~an~immediate consequence of~Lemma~\ref{lemma:booleanAlgebraAndMeasure} and Theorem~\ref{thm:CQsAreRational}.
    
    Indeed, by~Lemma~\ref{lemma:booleanAlgebraAndMeasure} and Theorem~\ref{thm:CQsAreRational}, patterns in~$\varphi$ can be~replaced by~their root patterns without the change of~measure, or~by the query~$\bot$ if they are unsatisfiable.    
    The rest of~the reasoning follows as~in~the proof of~Theorem~\ref{thm:CQsAreRational}.
\end{proof}

We can slightly strengthen the above result if we want to only decide whether the measure is positive.

\begin{proposition}
	\label{prop:upperPositiveBCCQ}
	The positive $\standardMeasure{\text{BCCQ}}$ problem is  in \nexp.
\end{proposition}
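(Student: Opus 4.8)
The plan is to sidestep computing the measure and instead observe that, after the reduction of Lemma~\ref{lemma:rootPatternExtraction} together with the Boolean\=/combination argument from the proof of Corollary~\ref{cor:booleanCombinationOfCQsAreRational}, deciding $\standardMeasure{\lang(\varphi)}>0$ becomes a nonemptiness question about a clopen set that depends only on a prefix of polynomial height. Consequently, the only object a nondeterministic machine must guess is a single finite tree of exponential size, after which everything can be verified deterministically in single\=/exponential time.

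In detail, I would write $\varphi$ as a Boolean combination of conjunctive queries $q_1,\dots,q_k$, viewed as patterns, and put $n=|\varphi|$, so $|q_i|\le n$ for every $i$. For each $i$ exactly one of three cases holds, and which one can be determined in polynomial space exactly as in the proof of Theorem~\ref{thm:CQsAreRational} (satisfiability is testable in polynomial space, see~\cite{bjorklundCQscontainment}, and computing the root sub\=/pattern reduces to finding strongly\=/connected components, doable in polynomial time): either $q_i$ is unsatisfiable, so $\lang(q_i)=\emptyset$; or $q_i$ is satisfiable with no root sub\=/pattern, so $\standardMeasure{\lang(q_i)}=1$; or $q_i$ is satisfiable with a root sub\=/pattern $p_i$, where $|p_i|\le n$ and, by the proof of Lemma~\ref{lemma:rootPatternExtraction}, $\lang(q_i)$ and $\lang(p_i)$ agree outside a null set. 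Replacing each $q_i$ in $\varphi$ by $\emptyset$, by $\treesInfinite{\Gamma}$, or by $\lang(p_i)$ respectively, produces a set $\lang(\varphi')$ that differs from $\lang(\varphi)$ by a null set --- since each replacement alters $\lang(q_i)$ only on a null set, and a finite Boolean combination of such alterations stays within a finite union of null sets --- so $\standardMeasure{\lang(\varphi)}=\standardMeasure{\lang(\varphi')}$. By Proposition~\ref{prop:firmPatternsShortDistance}, for every tree $t$ and every homomorphism $h$ witnessing $t\models p_i$ each image $h(x)$ lies at distance $<|p_i|\le n$ from the root; hence whether a full tree belongs to $\lang(p_i)$ is decided by its prefix of height $n$, and so $\lang(\varphi')$ is a finite union of basic clopen sets $\bB_{t'}$ with $t'$ ranging over full trees of height $n$. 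Since each such $\bB_{t'}$ has positive measure, $\standardMeasure{\lang(\varphi)}>0$ if and only if there is a full tree $t'$ of height $n$ with $\bB_{t'}\subseteq\lang(\varphi')$.

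The resulting \nexp\ procedure is: nondeterministically guess a full tree $t'$ of height $n$ (it has $2^{n+1}-1$ nodes, so it is described by exponentially many bits, which a \nexp\ machine may guess); then, deterministically, for each $i$ decide the case of $q_i$ and, in the third case, compute $p_i$ and test whether $t'\models p_i$ by searching for a homomorphism of $p_i$ into $t'$; finally evaluate the Boolean structure of $\varphi$, assigning $q_i$ the value \emph{false} in the unsatisfiable case, \emph{true} in the satisfiable\=/no\=/root case, and the truth value of ``$t'\models p_i$'' in the last case, and accept iff the outcome is \emph{true}. Correctness is precisely the equivalence established above. For the running time, the only step needing care is the homomorphism test, which ranges over at most $(2^{n+1})^{|p_i|}\le 2^{O(n^2)}$ candidate maps, each checkable in time polynomial in $|t'|$, i.e.\ $2^{O(n)}$; everything else runs in polynomial space or polynomial time in $n$. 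Hence the whole computation is in nondeterministic exponential time. I expect the main subtleties to be exactly these two bookkeeping points --- keeping the homomorphism test single\=/exponential rather than double\=/exponential despite the exponential size of $t'$, and verifying that the null\=/set estimate propagates through the Boolean combination --- both of which are handled above; the remainder reuses the machinery of Sections~\ref{sec:fo-sets} and~\ref{sec:CQsMeasure}.
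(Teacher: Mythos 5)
Your proof is correct and takes essentially the same route as the paper: replace each conjunctive query by its root sub\-/pattern (or by a trivially true/false query) without changing the measure, observe that the resulting language is determined by a prefix of exponential size, and let the \nexp{} machine guess that prefix and verify it deterministically. Your explicit symmetric\-/difference/null\-/set argument for the Boolean combination and the $2^{O(n^2)}$ brute\-/force homomorphism check are, if anything, slightly more careful than the paper's corresponding steps.
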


\begin{proof}

	Let $\phi$ be a Boolean combination of conjunctive queries. Let $m$ be the maximum over the sizes of the conjunctive queries in $\phi$.
	By Lemma~\ref{lemma:rootPatternExtraction} and Lemma~\ref{lemma:booleanAlgebraAndMeasure}, we can translate $\phi$ into a Boolean combination of firm, rooted conjunctive queries $\phi^{\ast}$ such that $\standardMeasure{\lang(\phi)} = \standardMeasure{\lang(\phi^{\ast})}$ and $\phi^{\ast}$ is of~polynomial size with respect to~$\phi$.
	This can be done in exponential time and requires verifying whether the patterns in $\phi$ are satisfiable.
	
	Now, since every conjunctive query in $\phi^{\ast}$ is firm and rooted, then either there is a~finite tree~$t$ of depth $2n$ such that $t \models \phi^\ast$ or~$\phi^{\ast}$ is~not satisfiable. If~there is~no such tree, then $\standardMeasure{\lang(\phi^{\ast})} = 0$. If~such a~tree $t$ exists, then $\setBall{t} \subseteq \lang(\phi^{\ast})$ and by~Lemma~\ref{lemma:basicMeasuresLemma} we~have that $\standardMeasure{\lang(\phi^{\ast})} > 0$.
	Hence, it is enough to guess the tree $t$ and verify that $t \models \phi^\ast$. Since $t$ is of exponential size in $\phi$ and model checking 
	of a conjunctive query can be done in polynomial time, we infer the upper bound.
\end{proof}

\section{Non\=/deterministic safety automata}
\label{sec:open-sets}

In this section we provide an~effective method of~computing the uniform measure of those regular sets of~infinite trees which are recognisable by non\=/deterministic safety automata. These automata, also known as automata \emph{with no acceptance condition} can be equivalently defined as parity automata that use only the parity $0$, i.e.~have Rabin--Mostowski index equal to $(0,0)$. The result of this section is expressed by the following theorem.

\begin{thm}
    \label{thm:openSetsUpperBound}
    The measure $m=\standardMeasureBig{\lang(\AA)}$ is an~algebraic number that can be effectively computed given a~safety automaton $\AA$. Moreover, this computation can be done in three\=/fold exponential time and the decision problems about $m$ (e.g. $m{>}0$, $m{>}0.5$, etc) can be solved in two\=/fold exponential space.
\end{thm}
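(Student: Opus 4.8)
The plan is to imitate the inductive approximation that already worked for $L_{ab}$ in Example~\ref{ex:someMeasuresOfSets}, but now carried out for an arbitrary non-deterministic safety automaton $\AA=\langle Q,\Gamma,q_I,\Delta\rangle$ with $\Delta\subseteq Q\times\Gamma\times Q\times Q$. The first step is to replace ``a tree is accepted from state $q$'' by a vector of probabilities. For each state $q$ let $x_q\eqdef\standardMeasureS(\lang(\AA_q))$, where $\AA_q$ is $\AA$ started in $q$; we want $x_{q_I}$. Because acceptance of a safety automaton from $q$ amounts to the existence of an infinite run, and such a run picks, for the root label $a$ (probability $1/|\Gamma|$ each) a transition $(q,a,q_L,q_R)\in\Delta$ and then accepts both subtrees, the measure $x_q$ is the probability that at least one such choice succeeds. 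This is a ``union over transitions of an intersection of two independent subtree events'', exactly as in the $L_{a3}/L_{ab}$ computations, so we obtain a system of polynomial (inclusion--exclusion) equations $x_q=F_q\big((x_{q'})_{q'\in Q}\big)$ with rational coefficients, where each $F_q$ is monotone and maps $[0,1]^Q$ into $[0,1]$.

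Next I would identify the correct solution. The right value is the \emph{greatest} fixpoint of $F=(F_q)_q$ in $[0,1]^Q$: define the approximants $\mathbf{x}^{(0)}=\mathbf{1}$ and $\mathbf{x}^{(k+1)}=F(\mathbf{x}^{(k)})$; monotonicity of $F$ (each $F_q$ is monotone because inclusion--exclusion over ``at least one transition works'' is monotone in the subtree-success probabilities) together with $\mathbf{x}^{(1)}\le\mathbf{x}^{(0)}$ gives a decreasing sequence converging to the greatest fixpoint $\mathbf{x}^{\ast}$. On the measure side, $x^{(k)}_q$ is the measure of the clopen set ``$\AA_q$ has a partial run on the full tree of height $k$'', and by König's lemma (the automaton is finitely branching) the intersection over all $k$ of these sets is exactly $\lang(\AA_q)$, so $\standardMeasureS(\lang(\AA_q))=\lim_k x^{(k)}_q=x^{\ast}_q$ by continuity of $\mu^{\ast}$ from above. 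This pins down the measure as a specific coordinate of the greatest fixpoint of an explicitly computable polynomial system.

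Then the computation itself: the system $\{x_q=F_q(\bar x)\}_{q\in Q}$ together with the constraints $0\le x_q\le 1$ and the maximality condition is a first-order formula over the ordered field of reals $(\bbR,+,\cdot,\le,0,1)$; maximality of a semialgebraic set in $[0,1]^Q$ is itself first-order expressible (``there is no $\bar y$ with $\bar y\ge\bar x$, $\bar y\neq\bar x$, satisfying the same equations''). The polynomials $F_q$ have degree $\le 2$ in each variable but the inclusion--exclusion over all transitions out of $q$ on all letters makes the formula exponentially large in $|\AA|$; still it is an exponential-size sentence in $\bbR$. By Tarski--Seidenberg / real quantifier elimination (and the fact that a definable element of $\bbR$ is algebraic) we get that $m=x^{\ast}_{q_I}$ is algebraic, and using the known complexity bounds for the first-order theory of the reals — singly exponential in the number of variables and polynomial in the formula size, hence applied to an exponential-size input it runs in doubly exponential space / triply exponential time — we obtain: $m$ is computable in three-fold exponential time, and any fixed decision query $m>c$ (for rational $c$) reduces to adding one atomic formula, solvable in two-fold exponential space.

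The main obstacle I expect is the \emph{correctness of the fixpoint selection}, i.e.\ proving rigorously that the measure equals the \emph{greatest} fixpoint rather than some smaller one: one must show both that each finite approximant $x^{(k)}_q$ really is the measure of the height-$k$ partial-run clopen set (a careful unfolding of $F_q$ using independence \eqref{eq:indepencence} and the exact inclusion--exclusion bookkeeping over transitions), and that the limit of those clopen sets is $\lang(\AA_q)$, which is where non-determinism forces the König's-lemma argument and where one must be careful that ``a partial run of every finite depth exists'' genuinely yields ``an infinite run exists''. Everything else — writing the polynomial system, phrasing maximality in $\bbR$, and invoking real quantifier-elimination complexity — is then routine given the results already available, and the translation to an $\bbR$-formula is exactly the promised reduction that yields the algebraicity of $m$ and the stated space/time bounds.
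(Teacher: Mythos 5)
Your overall architecture (approximate the language by finite-depth clopen sets, show the limiting probabilities form the greatest fixpoint of a monotone polynomial map, then express that fixpoint in the first-order theory of the reals and invoke quantifier elimination for algebraicity and the complexity bounds) is the same as the paper's. However, there is a genuine gap at the very first step: the system of equations you write down does not close over your chosen variables. You take one variable $x_q=\standardMeasureS(\lang(\AA_q))$ per state and claim that ``at least one transition out of $(q,a)$ succeeds'' can be expressed by inclusion--exclusion as a polynomial in the $x_{q'}$. But the intersection term for two transitions $(q,a,q_1,q_2)$ and $(q,a,q_3,q_4)$ is the event that the \emph{same} left subtree is accepted from both $q_1$ and $q_3$ (and similarly on the right), and $\standardMeasureS\big(\lang(\AA_{q_1})\cap\lang(\AA_{q_3})\big)$ is \emph{not} a function of $x_{q_1}$ and $x_{q_3}$: acceptance events from different states on the same random subtree are correlated, and only events concerning disjoint sets of nodes are independent (equation~\eqref{eq:indepencence} gives you nothing here). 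So the quantities $F_q\big((x_{q'})_{q'}\big)$ you need are simply not polynomials in your variables, and the whole fixpoint system is ill-defined for a genuinely non-deterministic $\AA$; you flagged the fixpoint-selection as the main risk, but the problem occurs one step earlier, in the equations themselves.

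The repair is exactly what the paper does: determinise bottom-up by tracking, for a random tree, its \emph{type} $\type(t)\subseteq Q$ (the set of states from which it is accepted), i.e.\ work with the power-set automaton $\widehat{\AA}$ and the space $\DD$ of probability distributions over $\powerset(Q)$, on which the map $\FF$ defined through $\widehat{\delta}_a$ \emph{is} a vector of polynomials (no inclusion--exclusion needed, since $\widehat{\delta}_a$ is a function). One then proves monotonicity of $\FF$ with respect to the probabilistic-powerdomain order $\preceq$ (defined via upward-closed subsets of $\powerset(Q)$, not coordinatewise), shows the depth-$d$ approximants are exactly the type distributions of random height-$d$ trees, and identifies the measure as $\sum_{R\cap I\neq\emptyset}\alpha_\infty(R)$ for the $\preceq$-greatest fixpoint $\alpha_\infty$; your König's-lemma/compactness step for ``partial runs at every depth yield an infinite run'' is correct and corresponds to the paper's Lemma~\ref{lem:safety-representation}. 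With variables indexed by subsets of $Q$ the resulting real-arithmetic formula is exponential in $|\AA|$ for the right reason, and your complexity accounting (three-fold exponential time for computing a semialgebraic representation, two-fold exponential space for decision queries via the exponential-space decision procedure for the reals) then goes through as stated.
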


Before providing a formal definition of the considered class of languages, we recall a~standard fact, that is often considered folklore, see~\cite[Section~6]{walukiewicz_low_levels}, \cite[Section~3.2]{cavallari_phd}, or~\cite{janin_closed}.

\begin{proposition}
\label{pro:closed-is-safety}
The following conditions are equivalent for a regular set of infinite trees $L$.
\begin{enumerate}
\item $L$ can be recognised by a non\=/deterministic safety automaton,
\item $L$ is closed as a subset of $\treesInfinite{\Gamma}$.
\end{enumerate}
\end{proposition}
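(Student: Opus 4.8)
The plan is to prove the two implications separately, using the standard correspondence between non-deterministic tree automata and their runs.

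\textbf{From (1) to (2).} I would argue that the set recognised by a safety automaton is topologically closed by the usual "compactness of runs" argument. Suppose $\AA$ is a non-deterministic safety automaton and let $t \notin \lang(\AA)$. Then no run of $\AA$ on $t$ is accepting; since the only parity used is $0$, every infinite run is winning for Eve, so "accepting" here means simply that a run exists that respects the transition relation everywhere. Thus $t \notin \lang(\AA)$ means there is \emph{no} legal run of $\AA$ on $t$ at all. By König's lemma (the automaton has finitely many states and the tree is binary, so the tree of partial legal runs is finitely branching), if there is no infinite legal run then there is a finite depth $k$ at which every attempt to build a partial run gets stuck. That depth depends only on $t$ restricted to $\setD^{\le k}$, hence the base set $\setBall{t\restriction \setD^{\le k}}$ is disjoint from $\lang(\AA)$ and contains $t$. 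Therefore the complement of $\lang(\AA)$ is open, i.e.~$\lang(\AA)$ is closed.

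\textbf{From (2) to (1).} Here I would start from the fact, already available, that $L$ is regular, hence recognised by some non-deterministic parity automaton $\AA$. The goal is to replace $\AA$ by a safety automaton recognising the same language, using that $L$ is closed. The cleanest route is via the "safety closure": given $\AA$ with states $Q$, define $\AA'$ to be the automaton with the same states and transitions but with \emph{no} acceptance condition (equivalently, all states accepting, index $(0,0)$). Then $\lang(\AA')$ is the set of trees admitting \emph{some} legal run of $\AA$, which is a closed superset of $\lang(\AA)$; in fact it is exactly the topological closure of $\lang(\AA)$ when $\AA$ is in a suitable normal form (e.g.~when every state is productive, i.e.~admits at least one legal run from it on some subtree, which can be ensured by pruning). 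Since $L = \lang(\AA)$ is already closed, $L$ equals its closure, so after this pruning we get $\lang(\AA') = \overline{\lang(\AA)} = L$, and $\AA'$ is a safety automaton. I would also mention the alternative of invoking the cited literature (\cite[Section~6]{walukiewicz_low_levels}, \cite[Section~3.2]{cavallari_phd}, \cite{janin_closed}) where this equivalence is established.

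\textbf{Main obstacle.} The genuinely delicate point is the $(2)\Rightarrow(1)$ direction, specifically making sure that the naive "forget the acceptance condition" construction yields \emph{exactly} $L$ and not a strictly larger closed set. This requires the pruning/normalisation step: one must discard from $\AA$ all states and transitions that cannot participate in any accepting run, so that the existence of a legal partial run really does extend to a legal total (accepting, since index $(0,0)$) run — this is where closedness of $L$ is used, via König's lemma again, to lift "arbitrarily deep partial runs" to an infinite run. The $(1)\Rightarrow(2)$ direction is routine compactness. Since the proposition is explicitly flagged as folklore with three references, I would keep the write-up brief, giving the compactness argument for one direction and the safety-closure-plus-pruning argument for the other, and deferring full details to the citations.
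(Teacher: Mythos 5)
The paper never proves this proposition: it is flagged as folklore and discharged entirely by the three citations, so there is no in-paper argument to compare with; your sketch supplies the standard proof, and in outline it is sound. Your direction $(1)\Rightarrow(2)$ is, up to one slip, exactly the prefix characterisation that the paper later proves as Lemma~\ref{lem:safety-representation} (there by a convergent-subsequence compactness argument rather than K\"onig's lemma): note only that $t\notin\lang(\AA)$ means there is no legal run \emph{whose root state lies in $I$}, not that there is no legal run at all; with that adjustment the argument that the complement is a union of base sets is fine.

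For $(2)\Rightarrow(1)$ the construction (forget the acceptance condition of a parity automaton for $L$, after pruning) is the right one, but two points in your write-up need repair. First, ``productive'' must mean that the state has an \emph{accepting} run of the parity automaton on some tree (non-empty language), not merely ``admits at least one legal run'': with the literal reading, a one-state automaton whose unique priority is odd is already ``pruned'', yet its safety closure accepts every tree while $L=\emptyset$. Your ``main obstacle'' paragraph states the correct criterion (discard what cannot occur in an accepting run), so this is a wording inconsistency, but it sits precisely at the crux. Second, the inclusion $\lang(\AA')\subseteq\overline{\lang(\AA)}$ after pruning is asserted rather than argued, and the hint you give for it is off target: given a legal $\AA'$-run on $t$ and a depth $d$, one replaces the subtrees of $t$ below depth $d$ by trees accepted from the corresponding run states (these exist by productivity); since parity acceptance depends only on tails of branches, the glued run is accepting, so every base neighbourhood of $t$ meets $\lang(\AA)$, and closedness of $L$ then yields $\lang(\AA')=\overline{L}=L$. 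K\"onig's lemma plays no role here --- an infinite legal run of the safety closure is not an accepting run of the original parity automaton, so ``lifting arbitrarily deep partial runs to an infinite run'' does not by itself place $t$ in $L$. With these two corrections the proposal is a complete and correct rendering of the folklore proof the paper delegates to the literature.
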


Thus, it can be said that the procedure from Theorem~\ref{thm:openSetsUpperBound} works for languages that are topologically simple.

The idea behind the procedure is as follows. If $L$ is a~regular closed set, then $L$ is an~intersection $\bigcap_{n\in\bbN} L_n$ of a~decreasing sequence of regular sets $L_n$ that are Boolean combinations of base sets. Moreover, that sequence can be effectively computed.
Intuitively, the language $L_n$ in the above intersection describes the trees in $L$ \emph{up to the depth $n$}.

Having found an~appropriate decomposition as above, we know that $\standardMeasureBig{L}=\lim\limits_{n\to\infty} \standardMeasureBig{L_n}$, i.e.~the sequence of sets $L_n$ approximates $L$ with respect to the measure.
It turns out that to compute the measure of $L$ it is enough to be able to track changes of the measures of the subsequent approximations. To do so, we introduce a~finitely dimensional space $\DD\subseteq\bR^k$ (for appropriately chosen $k$) and show that there exist: an~initial value $\alpha_0\in\DD$ and two computable functions $\fun{\FF}{\DD}{\DD}$ and $\fun{\MM}{\DD}{\bR}$, such that:
\begin{equation}
\standardMeasureS \big(L_n\big)=\MM\big(\FF^n(\alpha_0)\big).
\end{equation}
Moreover, $\FF$, $\MM$, and $\alpha_0$ can be chosen so that $\FF$ is continuous, monotone (w.r.t.~an appropriately chosen order on $\DD$), and defined by a~vector of polynomials; $\MM$ is continuous; and $\alpha_0$ is the greatest element of $\DD$. Therefore, repeating a~variant of Kleene's Fixpoint Theorem, we infer that $\lim_{n\to\infty} \standardMeasureBig{L_n}$ is equal to $\MM(\alpha_\infty)$, where $\alpha_\infty\in\DD$ is the greatest fixpoint of $\FF$. Now, since the function $\FF$ is given by a~vector of polynomials, the value of $\alpha_\infty$ can be effectively computed using Tarski's Quantifier Elimination.

\paragraph*{Notation}
For the sake of this section, we will use the following notation, for $d\in\bbN$ and $t\in\treesInfinite{\Gamma}$:
\[t^{\leq d}\eqdef t\!\!\upharpoonright\!\!_{\{\dL,\dR\}^{\leq d}}\in\trees{\Gamma}^{d},\]
i.e.~the tree $t^{\leq d}$ is the full tree of height $d$ obtained by restricting $t$ to the nodes of height at most $d$.

Additionally, we will consider the set $\trees{\Gamma}^{d}$ for $d\in\bbN$ as a~measurable space with the uniform discrete measure $\mu^{d}$, where the probability of each finite tree $\tau\in\trees{\Gamma}^{d}$ equals $|\Gamma|^{-|\dom{\tau}|}$. Notice that for each subset $S\subseteq \trees{\Gamma}^{d}$ we have
\begin{equation}
\mu^{d}(S)=\standardMeasureBig{\{t\in\treesInfinite{\Gamma}\mid t^{\leq d} \in S\}}=\standardMeasureBig{\bigcup_{\tau\in S} \setBall{\tau}}.
\label{eq:partial-measure}
\end{equation}

\subsection{Safety automata}
\label{sec:safety-auto}

We begin the proof by providing a formal definition of safety automata and their properties. A~\emph{safety automaton} is a~tuple $\AA=\langle \Gamma, Q, \delta, I\rangle$ where: $\Gamma$ is an~alphabet, $Q$ is a~finite set of \emph{states}, $\delta\subseteq Q\times \Gamma\times Q\times Q$ is a~\emph{transition relation}, and $I\subseteq Q$ is a~set of \emph{initial states}. Let $t$ be a~tree. A~\emph{run} of an~automaton $\AA$ over $t$ is a function $\fun{\rho}{\dom{t}}{Q}$ such that for each $v\in\dom{t}$ if both $v\dL$ and $v\dR$ belong to $\dom{t}$ then
\[\big(\rho(v), t(v), \rho(v\dL),\rho(v\dR)\big)\in\delta.\]
Notice that if $t$ is a~finite tree of height $0$, i.e.~$\dom{t}=\{\varepsilon\}$ then every function $\fun{\rho}{\{\epsilon\}}{Q}$ is a~run of $\AA$ over $t$. A~run $\rho$ is \emph{accepting} if $\rho(\epsilon)\in I$. We say that $\AA$ \emph{accepts} a~tree $t$ if there exists an~accepting run of $\AA$ over $t$. For $d\in\bbN$ by $\lang^{d}(\AA)\subseteq\trees{\Gamma}^{d}$ we denote the \emph{language} of the automaton $\AA$, that is the set of full trees of height $d$ that are accepted by $\AA$. Similarly, $\lang(\AA)\subseteq\treesInfinite{\Gamma}$ is the set of full trees accepted by $\AA$.

Fix a~safety automaton $\AA$ and let $t$ be a~tree over $\Gamma$. The \emph{type} of $t$, denoted $\type(t)\subseteq Q$ is defined as the set of states $q\in Q$ such that there exists a~run $\rho$ of $\AA$ over $t$ with $\rho(\varepsilon)=q$. Thus, $\AA$ accepts $t$ if and only if $\type(t)\cap I\neq\emptyset$.

The following lemma is a~standard application of a~compactness argument.

\begin{lemma}
    \label{lem:safety-representation}
    Let $t\in\treesInfinite{\Gamma}$ be a~full tree. Then the following conditions are equivalent
    \begin{enumerate}
        \item $t\in\lang(\AA)$,
        \item for every $d\in\bbN$ we have $t^{\leq d}\in\lang^d(\AA)$.
    \end{enumerate}
\end{lemma}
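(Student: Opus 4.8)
The plan is to establish the equivalence by a routine compactness (König's lemma) argument, exactly as the statement advertises. The implication from (1) to (2) is immediate: if $\rho$ is an accepting run of $\AA$ over $t$, then the restriction $\rho\!\!\upharpoonright\!\!_{\{\dL,\dR\}^{\leq d}}$ is a run of $\AA$ over $t^{\leq d}$ (the transition constraint is checked only at nodes both of whose children are present, so restricting to a prefix can only remove constraints), and it is accepting because $\rho(\varepsilon)\in I$ is unaffected by the restriction. Hence $t^{\leq d}\in\lang^d(\AA)$ for every $d$.

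The substantial direction is from (2) to (1). First I would fix, for each $d\in\bbN$, an accepting run $\rho_d$ of $\AA$ over $t^{\leq d}$, which exists by assumption. The goal is to extract from the family $(\rho_d)_{d\in\bbN}$ a single run over all of $\treesInfinite{\Gamma}$. I would organize this as a König's lemma argument on the tree whose nodes at level $d$ are the accepting runs of $\AA$ over $t^{\leq d}$, with $\rho$ at level $d+1$ a child of $\rho'$ at level $d$ iff $\rho\!\!\upharpoonright\!\!_{\{\dL,\dR\}^{\leq d}}=\rho'$. This tree is finitely branching, since there are only finitely many functions $\{\dL,\dR\}^{\leq d+1}\to Q$ extending a given one, and it is infinite because each $\rho_d$ restricts down to an accepting run over each shorter prefix (by the reasoning of the easy direction), so every level is nonempty. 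König's lemma then yields an infinite branch, i.e.~a sequence of accepting runs that are pairwise coherent; their union is a function $\fun{\rho}{\{\dL,\dR\}^{\ast}}{Q}$. One checks directly that $\rho$ is a run of $\AA$ over $t$: any node $v$ with both children present lies inside $\{\dL,\dR\}^{\leq d}$ for $d=|v|+1$, and there the transition constraint $\big(\rho(v),t(v),\rho(v\dL),\rho(v\dR)\big)\in\delta$ already holds because $\rho$ agrees with the accepting run $\rho_{d}$ of $\AA$ over $t^{\leq d}$ on that part. Finally $\rho(\varepsilon)\in I$ since it agrees with $\rho_1(\varepsilon)\in I$, so $\rho$ is accepting and $t\in\lang(\AA)$.

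The only point requiring a little care—and the place I would expect a careless reader to stumble—is verifying that the König tree has all levels nonempty, which relies on the observation that an accepting run over $t^{\leq d}$ restricts to an accepting run over $t^{\leq d'}$ for all $d'\leq d$; this is precisely why we may pick the witnesses $\rho_d$ first and then know the restriction tree is infinite. Everything else is bookkeeping. I would keep the write-up to a few lines, citing König's lemma and the restriction observation, since the paper itself flags the result as ``a standard application of a compactness argument.''
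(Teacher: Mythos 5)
Your proposal is correct and takes essentially the same approach as the paper: the easy direction by restriction, and the converse by compactness, which the paper implements by extracting a pointwise convergent subsequence of the runs $\rho_d$ and you implement as K\"onig's lemma on the finitely branching tree of coherent accepting runs over the prefixes $t^{\leq d}$ --- two standard packagings of the same argument. The only (harmless) slip is notational: the limit run agrees at level $d$ with the branch member of your K\"onig tree, which need not be the originally fixed witness $\rho_d$.
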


\begin{proof}
The implication $1.\Rightarrow 2.$ is clear from the definition. Consider the opposite implication. Let $t$ be a~full tree such that for all $d\in\bbN$ we have $t^{\leq d}\in\lang^d(\AA)$. Assume that $\rho_d\in \trees{Q}^{d}$ is a~run witnessing that. Then, there exists a sub\=/sequence of the sequence of runs $(\rho_d)_{d\in\bbN}$ that is point\=/wise convergent in $\trees{Q}$. Let $\rho_\infty$ be the limit of such a~sub\=/sequence. Then $\rho_\infty\in\trees{Q}^{\w}$ and it is a run of $\AA$ over $t$. Moreover, $\rho_\infty(\varepsilon)\in I$ because it is the case for each of the runs $\rho_d$. Therefore, $t\in\lang(\AA)$.
\end{proof}

We will now observe that each safety automaton $\AA$ can be lifted to another safety automaton (denoted $\widehat{\AA}$) that recognises the same language but its transition relation is bottom\=/up functional. More precisely, consider a~safety automaton $\AA$ and let $\widehat{\AA}\eqdef\left\langle \Gamma, \powerset(Q), \widehat{\delta},\widehat{I}\right\rangle$, where $\fun{\widehat{\delta}}{\Gamma\times \powerset(Q)\times \powerset(Q)}{\powerset(Q)}$ is defined as
\[\widehat{\delta}(a,R_\dL,R_\dR)\eqdef \big\{q\in Q\mid \exists (q,a,q_\dL,q_\dR)\in\delta.\ q_\dL\in R_\dL\land q_\dR\in R_\dR\big\},\]
and $\widehat{I}$ contains those $R\subseteq Q$ such that $R\cap I\neq\emptyset$. For the sake of simplicity, we will use $\fun{\widehat{\delta}_a}{\powerset(Q)^2}{\powerset(Q)}$ to denote the function $\widehat{\delta}$ with the first argument fixed to $a$.

\begin{remark}
Directly from the definition, we have $\lang(\AA)=\lang(\widehat{\AA})$.
\end{remark}

In the following constructions we will treat the set $\powerset(Q)$ as partially ordered by the inclusion ${\subseteq}$. The following fact follows directly from the definition of $\widehat{\delta}$.

\begin{fact}
    \label{ft:delta-mono}
    The function $\widehat{\delta}_a$ is monotone on both its arguments.
\end{fact}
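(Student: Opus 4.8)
The plan is simply to unfold the definition of~$\widehat{\delta}$ and observe that the condition defining $\widehat{\delta}_a(R_\dL,R_\dR)$ is a~positive (purely existential) statement about the membership facts $q_\dL\in R_\dL$ and $q_\dR\in R_\dR$; such statements are preserved when the sets $R_\dL$, $R_\dR$ grow. Before starting I would fix the convention that $\powerset(Q)^2$ carries the product of~the inclusion orders, so that ``monotone on both its arguments'' and ``monotone with respect to~this product order'' are the same requirement.

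Concretely, I would fix a~letter $a\in\Gamma$ and assume $R_\dL\subseteq R'_\dL$ and $R_\dR\subseteq R'_\dR$. Taking an~arbitrary $q\in\widehat{\delta}_a(R_\dL,R_\dR)$, the definition of~$\widehat{\delta}$ supplies a~transition $(q,a,q_\dL,q_\dR)\in\delta$ with $q_\dL\in R_\dL$ and $q_\dR\in R_\dR$. Since $R_\dL\subseteq R'_\dL$ and $R_\dR\subseteq R'_\dR$, the very same transition witnesses $q_\dL\in R'_\dL$ and $q_\dR\in R'_\dR$, hence $q\in\widehat{\delta}_a(R'_\dL,R'_\dR)$. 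This gives $\widehat{\delta}_a(R_\dL,R_\dR)\subseteq\widehat{\delta}_a(R'_\dL,R'_\dR)$, i.e.~joint monotonicity; monotonicity in~each argument separately is the special case obtained by~keeping the other argument fixed.

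There is essentially no obstacle: the claim is immediate from the shape of~$\widehat{\delta}$, which refers to~$R_\dL$ and $R_\dR$ only through the positive membership conditions $q_\dL\in R_\dL$, $q_\dR\in R_\dR$. The only point worth stating explicitly is the choice of~the product order on~$\powerset(Q)^2$, which I would mention in~passing so that the later use of this fact (in~establishing continuity and monotonicity of~$\FF$ for the fixpoint argument) is unambiguous.
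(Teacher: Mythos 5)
Your proof is correct and is exactly the argument the paper intends: the paper states this fact as following directly from the definition of $\widehat{\delta}$, and your unfolding of the existential condition with the same witnessing transition is precisely that observation.
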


\subsection{Fixpoints}

In this section we will develop tools allowing to define the previously mentioned space~$\DD$ and the functionals $\FF$ and $\MM$. Let us fix a~safety automaton $\AA$. Let $\DD$ be the space of probability distributions over $\powerset(Q)$. For each $R\subseteq Q$ let $\one_R\in\DD$ be the distribution concentrated in $R$, i.e.~$\one_R(R)=1$ and for each $R'\neq R$ we have $\one_R(R')=0$.

We will now lift the inclusion order ${\subseteq}$ on $\powerset(Q)$ to an~order over $\DD$. Recall that a~set $\Uu\subseteq\powerset(Q)$ is \emph{upward\=/closed} if whenever $R\in\Uu$ and $R\subseteq R'$ then $R'\in\Uu$ as~well. Consider two probability distributions $\alpha,\beta\in\DD$ over $\powerset(Q)$. Let $\alpha\preceq \beta$ hold if and only if, for every upward\=/closed set $\Uu\subseteq \powerset(Q)$, we have the following inequality
\begin{equation}
\label{eq:dist-order}
\sum_{R\in \Uu} \alpha(R)\leq \sum_{R\in \Uu} \beta(R).
\end{equation}
Intuitively, the partial order ${\preceq}$ over $\DD$ corresponds to the process of distributing a given distribution $\beta$ over smaller sets. Such orders are known as~\emph{probabilistic powerdomains}, see e.g.~\cite{probabilisticPowerdomains}.

Notice that $\one_R\preceq \one_S$ if and only if $R\subseteq S$. Also, it is easy to observe that $\one_Q$ is the greatest element of $\DD$.

Our aim is to simulate the process of randomly choosing letters $a\in \Gamma$ (as done in the probabilistic spaces $\trees{\Gamma}^{d}$) as a~monotone mapping $\FF$ over the distributions $\DD$. Fix a~letter $a\in \Gamma$. Recall that $\fun{\widehat{\delta}_a}{\powerset(Q)^2}{\powerset(Q)}$ is the transition function of the power\=/set automaton $\widehat{\AA}$. Define $\fun{\FF_a}{\DD}{\DD}$ as
\begin{equation}
\FF_a(\alpha)(R)=\sum_{(R_\dL,R_\dR)\in\widehat{\delta}_a^{-1}(R)}\ \alpha(R_\dL)\cdot \alpha(R_\dR).
\label{eq:def-of-F}
\end{equation}

In other words, $\FF_a(\alpha)$ is the distribution over $\powerset(Q)$ given by the values of~$\widehat{\delta}_a$ where its arguments are chosen independently according to the distribution $\alpha$. Finally, let $\fun{\FF}{\DD}{\DD}$ be defined as $\FF(\alpha)=|\Gamma|^{-1}\cdot \sum_{a\in \Gamma} \FF_a(\alpha)$ (where the sum is taken coordinate\=/wise). It is easy to verify that all the functions $\FF_a$ and $\FF$ in fact produce probabilistic distributions, i.e.~elements of $\DD$.

\begin{lemma}
    The function $\FF$ is monotone with respect to the order ${\preceq}$ on $\DD$.
\end{lemma}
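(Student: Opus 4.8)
The plan is to reduce first to the single\=/letter maps $\FF_a$ and then to exploit the bottom\=/up monotonicity of $\widehat{\delta}_a$ recorded in Fact~\ref{ft:delta-mono}. Note that the defining inequality~\eqref{eq:dist-order} of ${\preceq}$ is linear in the distribution, so ${\preceq}$ is preserved by coordinate\=/wise convex combinations: if $\alpha_a\preceq\beta_a$ for every $a\in\Gamma$, then, multiplying each inequality $\sum_{R\in\Uu}\alpha_a(R)\leq\sum_{R\in\Uu}\beta_a(R)$ by $|\Gamma|^{-1}$ and summing over $a$, we get $|\Gamma|^{-1}\sum_a\alpha_a\preceq|\Gamma|^{-1}\sum_a\beta_a$. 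Since $\FF(\gamma)=|\Gamma|^{-1}\sum_a\FF_a(\gamma)$, it therefore suffices to prove that each $\FF_a$ is monotone with respect to ${\preceq}$.

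The second ingredient is a~test\=/function reformulation of the order: $\alpha\preceq\beta$ if and only if $\sum_{R}\alpha(R)f(R)\leq\sum_{R}\beta(R)f(R)$ for every monotone $\fun{f}{\powerset(Q)}{\bbR}$ (monotone meaning $R\subseteq R'\implies f(R)\leq f(R')$). The ``if'' direction is immediate by taking $f=\one_{\Uu}$ for an~upward\=/closed $\Uu$. For ``only if'', a~monotone $f$ on the finite poset $\powerset(Q)$ admits the standard threshold (layer\=/cake) decomposition $f=c+\sum_i c_i\,\one_{\Uu_i}$ with $c\in\bbR$, $c_i\geq0$, and each $\Uu_i\subseteq\powerset(Q)$ upward\=/closed; applying~\eqref{eq:dist-order} to each $\Uu_i$ and using that $\alpha,\beta$ are probability distributions gives the inequality for $f$.

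Now fix $a\in\Gamma$. By~\eqref{eq:def-of-F}, $\FF_a(\alpha)$ is exactly the image of the product distribution $\alpha\otimes\alpha$ on $\powerset(Q)^2$ under $\widehat{\delta}_a$, so for any $\fun{f}{\powerset(Q)}{\bbR}$ we have
\[\sum_{R}\FF_a(\alpha)(R)\,f(R)\;=\;\sum_{R_\dL,R_\dR}\alpha(R_\dL)\,\alpha(R_\dR)\,g(R_\dL,R_\dR),\qquad g(R_\dL,R_\dR)\eqdef f\big(\widehat{\delta}_a(R_\dL,R_\dR)\big).\]
Assume $\alpha\preceq\beta$ and fix a~monotone $f$; by Fact~\ref{ft:delta-mono}, $g$ is monotone in each of its two arguments separately. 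Interpolating through $\alpha\otimes\beta$: for fixed $R_\dL$ the map $R_\dR\mapsto g(R_\dL,R_\dR)$ is monotone, so by the previous paragraph $\sum_{R_\dR}\alpha(R_\dR)g(R_\dL,R_\dR)\leq\sum_{R_\dR}\beta(R_\dR)g(R_\dL,R_\dR)$; averaging over $R_\dL\sim\alpha$ gives $\mathbb{E}_{\alpha\otimes\alpha}[g]\leq\mathbb{E}_{\alpha\otimes\beta}[g]$. Next, the map $R_\dL\mapsto\sum_{R_\dR}\beta(R_\dR)g(R_\dL,R_\dR)$ is an~average of monotone functions, hence monotone, so again $\mathbb{E}_{\alpha\otimes\beta}[g]\leq\mathbb{E}_{\beta\otimes\beta}[g]$. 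Chaining the two inequalities yields $\sum_R\FF_a(\alpha)(R)f(R)\leq\sum_R\FF_a(\beta)(R)f(R)$ for every monotone $f$, hence $\FF_a(\alpha)\preceq\FF_a(\beta)$ by the test\=/function characterisation. Combined with the reduction of the first paragraph, this proves $\FF(\alpha)\preceq\FF(\beta)$.

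The only delicate point is the interpolation step: after integrating out one coordinate against a~fixed distribution, the remaining function of the other coordinate must still be monotone, and this is precisely where the \emph{separate} monotonicity of $\widehat{\delta}_a$ (Fact~\ref{ft:delta-mono}) and the test\=/function reformulation pull their weight; arguing directly with upward\=/closed sets in $\powerset(Q)^2$ would be more awkward. An~equivalent route that bypasses the layer\=/cake decomposition is to invoke Strassen's coupling characterisation of ${\preceq}$ on a~finite poset, take the product of a~witnessing coupling with itself, and push it through $\widehat{\delta}_a$: monotonicity of $\widehat{\delta}_a$ keeps the resulting coupling of $\FF_a(\alpha)$ and $\FF_a(\beta)$ concentrated on $\{(R,S)\mid R\subseteq S\}$.
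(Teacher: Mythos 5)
Your proof is correct and follows essentially the same route as the paper: reduce to a single letter $\FF_a$ by linearity of ${\preceq}$, then chain $\mathbb{E}_{\alpha\otimes\alpha}\leq\mathbb{E}_{\alpha\otimes\beta}\leq\mathbb{E}_{\beta\otimes\beta}$ using the separate monotonicity of $\widehat{\delta}_a$ from Fact~\ref{ft:delta-mono}. The paper carries out exactly this interpolation directly with indicators of upward\=/closed sets (its steps (1)--(3) are your argument specialised to $f=\one_{\Uu}$), so your monotone test\=/function and layer\=/cake layer is a harmless repackaging rather than a genuinely different argument.
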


\begin{proof}
    Consider two distributions $\alpha\preceq \beta\in \DD$. Let $\Uu\subseteq \powerset(Q)$ be an~upward\=/closed set. Our aim is to prove the inequality~\eqref{eq:dist-order} for $\FF(\alpha)$ and $\FF(\beta)$. We will do it for each letter separately, what means that it is enough to show that for each $a \in \Gamma$ we have
    \[\sum_{R\in \Uu}\ \sum_{(R_\dL,R_\dR)\in\widehat{\delta}_a^{-1}(R)} \alpha(R_\dL)\cdot \alpha(R_\dR)\ \leq\ \sum_{R\in \Uu}\ \sum_{(R_\dL,R_\dR)\in\widehat{\delta}_a^{-1}(R)} \beta(R_\dL)\cdot \beta(R_\dR).\]
    Since $\widehat{\delta}_a$ is a function, the sums on both sides of the equality take the form
    \[\sum_{(R_\dL,R_\dR)\colon\widehat{\delta}_a(R_\dL,R_\dR)\in\Uu} \alpha(R_\dL)\cdot \alpha(R_\dR)\ \leq\ \sum_{(R_\dL,R_\dR)\colon\widehat{\delta}_a(R_\dL,R_\dR)\in\Uu} \beta(R_\dL)\cdot \beta(R_\dR).\]
    Which is equivalent to
\begin{equation}
\sum_{R_\dL\subseteq Q}\ \alpha(R_\dL)\cdot \sum_{R_\dR\colon\widehat{\delta}_a(R_\dL,R_\dR)\in\Uu} \alpha(R_\dR)\ \leq\ \sum_{R_\dR\subseteq Q}\ \beta(R_\dR)\cdot \sum_{R_\dL\colon\widehat{\delta}_a(R_\dL,R_\dR)\in\Uu} \beta(R_\dL).
\label{eq:enough-for-mono}
\end{equation}   
To obtain the last inequality, we observe that:
\begin{align*}
\sum_{R_\dL\subseteq Q}\ \alpha(R_\dL)\cdot \sum_{R_\dR\colon\widehat{\delta}_a(R_\dL,R_\dR)\in\Uu} \alpha(R_\dR)\ &\stackrel{(1)}{\leq}\
\sum_{R_\dL\subseteq Q}\ \alpha(R_\dL)\cdot \sum_{R_\dR\colon\widehat{\delta}_a(R_\dL,R_\dR)\in\Uu} \beta(R_\dR)\\
&\stackrel{(2)}{=} \sum_{R_\dR\subseteq Q}\ \beta(R_\dR)\cdot \sum_{R_\dL\colon\widehat{\delta}_a(R_\dL,R_\dR)\in\Uu} \alpha(R_\dL) \\
&\stackrel{(3)}{\leq} \sum_{R_\dR\subseteq Q}\ \beta(R_\dR)\cdot \sum_{R_\dL\colon\widehat{\delta}_a(R_\dL,R_\dR)\in\Uu} \beta(R_\dL),
\end{align*}
where the consecutive (in)equalities follow from:
\begin{enumerate}
\item[(1)] Fact~\ref{ft:delta-mono}: the family of sets $R_\dR$ is upward\=/closed and $\alpha\preceq \beta$;
\item[(2)] rearranging the sum;
\item[(3)] the fact that the family of sets $R_\dL$ is upward\=/closed and $\alpha \preceq \beta$ again.
\end{enumerate}
Thus, we have proved~\eqref{eq:enough-for-mono}. This concludes the proof of the lemma.
\end{proof}

Now we will see that, in a~sense, $\FF$ simulates the behaviour of $\AA$ over $\trees{\Gamma}^{d}$. Let $\alpha_0=\one_Q$ and $\alpha_{d+1}=\FF(\alpha_d)$ for $d \in \bbN$.

\begin{proposition}
    \label{pro:alpha-vs-types}
    For each $d\in\bbN$ the distribution $\alpha_d$ is the distribution of $\type(\tau)$ for a~random partial tree $\tau\in\trees{\Gamma}^{d}$, i.e.~for each $R\subseteq Q$ we have
    \[\alpha_d(R)=\mu^{d}\big(\{\tau\in\trees{\Gamma}^d\mid \type(\tau)=R\}\big).\]
\end{proposition}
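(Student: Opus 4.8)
The plan is to prove the identity by induction on $d\in\bbN$. Throughout, write $N_d(R)\eqdef\mu^{d}\bigl(\{\tau\in\trees{\Gamma}^{d}\mid\type(\tau)=R\}\bigr)$ for $R\subseteq Q$; the goal is to show $N_d=\alpha_d$ for every $d$. The structural fact underlying everything is that a~uniformly random full tree $\tau\in\trees{\Gamma}^{d+1}$ is generated by independently drawing a~root label $a=\tau(\varepsilon)$ uniformly in $\Gamma$ together with two subtrees $\tau_\dL\eqdef\subtreeAtNode{\tau}{\dL}$ and $\tau_\dR\eqdef\subtreeAtNode{\tau}{\dR}$, each with law $\mu^{d}$. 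Indeed, since $|\dom{\tau}|=1+|\dom{\tau_\dL}|+|\dom{\tau_\dR}|$ for a~full tree of height $d{+}1$, the formula $\mu^{d}(\{\sigma\})=|\Gamma|^{-|\dom{\sigma}|}$ shows that the bijection $\tau\mapsto\bigl(\tau(\varepsilon),\tau_\dL,\tau_\dR\bigr)$ is an~isomorphism of measure spaces between $\bigl(\trees{\Gamma}^{d+1},\mu^{d+1}\bigr)$ and the product of the uniform probability space on $\Gamma$ with two copies of $\bigl(\trees{\Gamma}^{d},\mu^{d}\bigr)$.

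For the base case $d=0$, the space $\trees{\Gamma}^{0}$ consists of the $|\Gamma|$ one\=/node trees, and for each of them every function $\fun{\rho}{\{\varepsilon\}}{Q}$ is vacuously a~run, so $\type(\tau)=Q$; hence $N_0(Q)=1$ and $N_0(R)=0$ for $R\neq Q$, which is exactly $\alpha_0=\one_Q$. The engine of the inductive step is the compositionality property of the type function: for every full tree $\tau$ of height $d{+}1$,
\[\type(\tau)=\widehat{\delta}_{\tau(\varepsilon)}\bigl(\type(\tau_\dL),\type(\tau_\dR)\bigr).\]
This is a~direct unfolding of definitions: a~run $\rho$ of $\AA$ over $\tau$ with $\rho(\varepsilon)=q$ restricts to runs over $\tau_\dL$ and $\tau_\dR$ whose roots $q_\dL,q_\dR$ satisfy $(q,\tau(\varepsilon),q_\dL,q_\dR)\in\delta$, and conversely any two such subtree runs glue with any $q$ satisfying $(q,\tau(\varepsilon),q_\dL,q_\dR)\in\delta$ into a~run over $\tau$ rooted in $q$; by the definition of $\widehat{\delta}$ this is precisely the displayed equality. (This bottom\=/up determinism is exactly why the power\=/set automaton $\widehat{\AA}$ was introduced.)

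For the inductive step, fix $R\subseteq Q$ and assume $N_d=\alpha_d$. Transporting the event $\{\tau\in\trees{\Gamma}^{d+1}\mid\type(\tau)=R\}$ through the measure isomorphism of the first paragraph and using compositionality, this event becomes the disjoint union, over $a\in\Gamma$ and over $(R_\dL,R_\dR)\in\widehat{\delta}_a^{-1}(R)$, of the rectangles $\{a\}\times\{\type=R_\dL\}\times\{\type=R_\dR\}$; these being pairwise disjoint, their measures add, so
\begin{align*}
N_{d+1}(R)&=\frac{1}{|\Gamma|}\sum_{a\in\Gamma}\ \sum_{(R_\dL,R_\dR)\in\widehat{\delta}_a^{-1}(R)} N_d(R_\dL)\cdot N_d(R_\dR)\\
&=\frac{1}{|\Gamma|}\sum_{a\in\Gamma}\ \sum_{(R_\dL,R_\dR)\in\widehat{\delta}_a^{-1}(R)} \alpha_d(R_\dL)\cdot\alpha_d(R_\dR)=\frac{1}{|\Gamma|}\sum_{a\in\Gamma}\FF_a(\alpha_d)(R)=\FF(\alpha_d)(R)=\alpha_{d+1}(R),
\end{align*}
the last three equalities being the definitions of $\FF_a$, $\FF$, and $\alpha_{d+1}$. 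This closes the induction and proves the proposition. I do not foresee a~serious obstacle here; the only delicate points are the verification of the compositionality equality for $\type$ and the measure\=/theoretic bookkeeping in the first line of the display — that conditioning on the root label makes the two subtrees independent with law $\mu^{d}$, and that distinct letters $a$ and distinct pairs $(R_\dL,R_\dR)$ index pairwise disjoint events.
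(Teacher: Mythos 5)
Your proof is correct and follows essentially the same route as the paper's: induction on $d$, conditioning on the root letter, the compositionality identity $\type(\tau)=\widehat{\delta}_{\tau(\varepsilon)}\bigl(\type(\tau_\dL),\type(\tau_\dR)\bigr)$, and independence of the two subtrees under $\mu^{d}\times\mu^{d}$. If anything, your bookkeeping is slightly more careful than the paper's, since you keep the factor $\tfrac{1}{|\Gamma|}$ for the root label explicit, which the paper's displayed decomposition glosses over before absorbing it into the definition of $\FF$.
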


\begin{proof}
    The proof is by induction. For $d=0$ the claim is obvious, as $\alpha_0=\one_Q$, and for each partial tree $\tau\in\trees{\Gamma}^0$ we have $\type(\tau)=Q$. Let us assume that the inductive hypothesis holds for $d$. Observe that
    \[\mu^{d+1}\big(\{\tau{\in}\trees{\Gamma}^{d+1}\mid \type(\tau){=}R\}\big)=\sum_{a\in \Gamma} \mu^{d+1}\big(\{\tau{\in}\trees{\Gamma}^{d+1}\mid \tau(\varepsilon){=}a\land\type(\tau){=}R\}\big).\]
    Now, as a~partial tree $\tau\in\trees{\Gamma}^{d+1}$ can be seen as a pair of partial trees $\tau_\dL,\tau_\dR\in\trees{\Gamma}^{d}$ merged by the letter $\tau(\varepsilon)$, we obtain that:
    \begin{align}
    &\mu^{d+1}\big(\{\tau\in\trees{\Gamma}^{d+1}\mid \tau(\varepsilon)=a\land\type(\tau)=R\}\big)=\nonumber\\ &\mu^{d}\times\mu^{d}\big(\{(\tau_\dL,\tau_\dR)\in\trees{\Gamma}^{d}\times\trees{\Gamma}^{d}\mid \widehat{\delta}_a\big(\type(\tau_\dL),\type(\tau_\dR)\big)=R\}\big),
    \label{eq:prob-sum}
    \end{align}
    where the latter probability is taken in the product space. Thus,~\eqref{eq:prob-sum} equals
    \[\sum_{(R_\dL,R_\dR)\in\widehat{\delta}_a^{-1}(R)} \mu^{d}\big(\{\tau_\dL\in\trees{\Gamma}^{d}\mid\type(\tau_\dL)=R_\dL\}\big)\cdot \mu^{d}\big(\{\tau_\dR\in\trees{\Gamma}^{d}\mid\type(\tau_\dR)=R_\dR\}\big),\]
    which, by the inductive assumption, equals
    \[\sum_{(R_\dL,R_\dR)\in\widehat{\delta}_a^{-1}(R)} \alpha_d(R_\dL)\cdot\alpha_d(R_\dR).\]
    The last term is equal to $\FF_a(\alpha_n)(R)$ by the definition, see~\eqref{eq:def-of-F}. Therefore, the inductive hypothesis holds for $d+1$.
\end{proof}

The next three statements follow the standard way of proving Kleene's Fixpoint Theorem. However, instead of simply invoking that result, we prove these claims by hand, as it seems to be much more direct.

\begin{fact}
    The sequence $(\alpha_d)_{d\in\bbN}$ is descending in the order ${\preceq}$ on $\DD$.
\end{fact}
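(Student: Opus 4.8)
The plan is to prove by induction on $d$ that $\alpha_{d+1}\preceq\alpha_d$ for every $d\in\bbN$; since $\alpha_{d+1}=\FF(\alpha_d)$ by definition, the two ingredients already established in this section — namely that $\one_Q$ is the greatest element of $\DD$ and that $\FF$ is $\preceq$\=/monotone — are exactly what is needed.

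First I would settle the base case $d=0$. Here $\alpha_0=\one_Q$ and $\alpha_1=\FF(\alpha_0)$. As noted just after the definition of the order, $\one_Q$ is the greatest element of $\DD$, so in particular $\alpha_1\preceq\one_Q=\alpha_0$; no computation is required.

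Next I would carry out the inductive step. Assume $\alpha_{d+1}\preceq\alpha_d$. Applying the monotonicity lemma for $\FF$ to this inequality gives $\FF(\alpha_{d+1})\preceq\FF(\alpha_d)$, which by the defining recurrence $\alpha_{e+1}=\FF(\alpha_e)$ is precisely $\alpha_{d+2}\preceq\alpha_{d+1}$. This closes the induction and shows the whole sequence $(\alpha_d)_{d\in\bbN}$ is descending.

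I do not expect any genuine obstacle here: the substantive work (the proof that $\FF$ is monotone, and the observation that $\one_Q$ is the top of $\DD$) has already been done. The only point that deserves a word of care is that the base case rests entirely on $\alpha_0$ being the \emph{greatest} element — this is what lets us start the descent without inspecting $\FF$ at all — after which monotonicity of $\FF$ propagates the inequality upward through all of $\bbN$.
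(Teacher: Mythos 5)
Your proposal is correct and follows exactly the paper's argument: the paper's proof is the one-line ``by monotonicity of $\FF$ and the fact that $\alpha_0$ is the greatest element of~$\DD$,'' which is precisely the induction you spell out. Nothing is missing; you have simply written out the standard Kleene-style descent in full detail.
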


\begin{proof}
    By monotonicity of $\FF$ and the fact that $\alpha_0$ is the greatest element of~$\DD$.
\end{proof}

\begin{lemma}
    There exists a~probabilistic distribution $\alpha_\infty\in\DD$ such that for each $R\subseteq Q$ we have
    \[\alpha_\infty(R)=\lim_{d\to\infty} \alpha_d(R).\]
\end{lemma}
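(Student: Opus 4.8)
The plan is to extract coordinate\=/wise convergence of the sequence $(\alpha_d)_{d\in\bbN}$ from the fact, just established, that this sequence is descending with respect to ${\preceq}$, and then to check that the resulting pointwise limit is again a~probability distribution over $\powerset(Q)$. The key observation is the following: if $\Uu\subseteq\powerset(Q)$ is upward\=/closed, then by the definition of ${\preceq}$, see~\eqref{eq:dist-order}, together with the descending property, the real sequence $d\mapsto \sum_{R\in \Uu}\alpha_d(R)$ is non\=/increasing; being bounded below by $0$, it converges.

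First I would apply this to the principal up\=/sets $\Uu_R\eqdef\{R'\subseteq Q\mid R\subseteq R'\}$, which are upward\=/closed, to conclude that for every $R\subseteq Q$ the limit $\lim_{d\to\infty}\sum_{R'\supseteq R}\alpha_d(R')$ exists. Then I would prove by downward induction on $|R|$ that $\lim_{d\to\infty}\alpha_d(R)$ exists for every $R\subseteq Q$: writing
\[\alpha_d(R)=\Big(\sum_{R'\supseteq R}\alpha_d(R')\Big)-\sum_{R'\supsetneq R}\alpha_d(R'),\]
the first term converges by the previous paragraph, while every summand of the second term converges by the inductive hypothesis (note $R'\supsetneq R$ implies $|R'|>|R|$, and the base case $R=Q$ is covered by the first term alone, since there is no $R'\supsetneq Q$). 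We then define $\alpha_\infty(R)\eqdef\lim_{d\to\infty}\alpha_d(R)$, which is the statement's required pointwise limit.

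It remains to verify that $\alpha_\infty\in\DD$, i.e.~that it is a~genuine probability distribution. Non\=/negativity is immediate, as $\alpha_\infty(R)$ is a~limit of the non\=/negative reals $\alpha_d(R)$. For the normalisation, since $\powerset(Q)$ is finite we may interchange the finite sum with the limit, obtaining $\sum_{R\subseteq Q}\alpha_\infty(R)=\lim_{d\to\infty}\sum_{R\subseteq Q}\alpha_d(R)=\lim_{d\to\infty}1=1$, because each $\alpha_d$ is a~probability distribution by Proposition~\ref{pro:alpha-vs-types}.

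As for the main obstacle, there is essentially none of substance: the argument is routine once the right upward\=/closed sets are singled out. The only point requiring a~little care is that membership of a~single set $R$ in $\powerset(Q)$ is not an~upward\=/closed event, which is precisely why the downward induction peeling off strictly larger sets is needed rather than a~direct appeal to~\eqref{eq:dist-order}.
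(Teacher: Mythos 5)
Your proof is correct and follows essentially the same route as the paper: both rest on the observation that for any upward\-/closed $\Uu$ the sequence $d\mapsto\sum_{R\in\Uu}\alpha_d(R)$ is non\-/increasing and bounded, and then recover $\alpha_d(R)$ as a difference involving the principal up\-/set at $R$. The only difference is cosmetic: your downward induction is unnecessary, since the set $\{R'\subseteq Q\mid R\subsetneq R'\}$ is itself upward\-/closed, so $\alpha_d(R)$ is directly a difference of two convergent sequences, which is exactly how the paper phrases it.
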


\begin{proof}
    First notice that for each upward\=/closed set $\Uu\subseteq\powerset(Q)$ the sequence of $\sum_{R\in\Uu}\alpha_d(R)$ is non\=/increasing and bounded for $d\to\infty$. Thus, that sequence has a~limit. Now, for each $R\subseteq Q$ the value $\alpha_d(R)$ can be written as a~difference of sums as above for two upward\=/closed sets. Therefore, $\lim_{d\to\infty}\alpha_d(R)$ exists and we can define $\alpha_\infty(R)$ as that limit. Since $\sum_{R\subseteq Q}\alpha_d(R)$ is always~$1$, the limit values also satisfy that property and thus $\alpha_\infty$ is a probabilistic distribution.
\end{proof}

\begin{lemma}
    \label{lem:greatest}
    The distribution $\alpha_\infty$ is the greatest fixpoint of $\FF$, i.e.~$\FF(\alpha_\infty)=\alpha_\infty$ and if $\FF(\alpha')=\alpha'$ for some $\alpha'\in\DD$ then $\alpha'\preceq \alpha_\infty$.
\end{lemma}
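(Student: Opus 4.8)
The plan is to replay the proof of Kleene's Fixpoint Theorem using the two ingredients already at hand: the monotonicity of $\FF$ with respect to ${\preceq}$, and the existence of the pointwise limit $\alpha_\infty(R)=\lim_{d}\alpha_d(R)$ for every $R\subseteq Q$ (the previous lemma).

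\emph{$\alpha_\infty$ is a fixpoint.} By the definition~\eqref{eq:def-of-F} of $\FF_a$ and the formula $\FF=|\Gamma|^{-1}\sum_{a\in\Gamma}\FF_a$, each coordinate $\FF(\alpha)(R)$ is a polynomial in the finitely many coordinates $\alpha(S)$, $S\subseteq Q$. Since finite sums and products commute with limits of real sequences, for every $R\subseteq Q$ one has
\[
\FF(\alpha_\infty)(R)=\lim_{d\to\infty}\FF(\alpha_d)(R)=\lim_{d\to\infty}\alpha_{d+1}(R)=\alpha_\infty(R),
\]
where the middle equality is the recurrence $\alpha_{d+1}=\FF(\alpha_d)$ and the last one uses that $(\alpha_{d+1})_d$ and $(\alpha_d)_d$ share the limit $\alpha_\infty$. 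Hence $\FF(\alpha_\infty)=\alpha_\infty$.

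\emph{$\alpha_\infty$ is the greatest fixpoint.} Let $\alpha'\in\DD$ satisfy $\FF(\alpha')=\alpha'$. Since $\alpha_0=\one_Q$ is the greatest element of $\DD$, we have $\alpha'\preceq\alpha_0$. Applying the monotone map $\FF$ $d$ times gives $\alpha'=\FF^d(\alpha')\preceq\FF^d(\alpha_0)=\alpha_d$ for every $d\in\bbN$. Now fix an upward\=/closed set $\Uu\subseteq\powerset(Q)$; by the definition~\eqref{eq:dist-order} of ${\preceq}$ we get $\sum_{R\in\Uu}\alpha'(R)\leq\sum_{R\in\Uu}\alpha_d(R)$ for all $d$, and letting $d\to\infty$ (the right\=/hand side converges, as established in the preceding lemma) yields $\sum_{R\in\Uu}\alpha'(R)\leq\sum_{R\in\Uu}\alpha_\infty(R)$. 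As $\Uu$ was arbitrary, $\alpha'\preceq\alpha_\infty$.

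I do not foresee a genuine obstacle here: the only points requiring care are that $\FF$, being a polynomial map on a finite\=/dimensional space, commutes with the pointwise limit defining $\alpha_\infty$ (so that $\alpha_\infty$ is a true fixpoint, not merely a post\=/fixpoint), and that non\=/strict inequalities survive passage to the limit (so that the coordinatewise domination $\alpha'\preceq\alpha_d$ for all $d$ upgrades to $\alpha'\preceq\alpha_\infty$). All the substantive work — monotonicity of $\FF$ and convergence of the $\alpha_d$ — was done in the lemmas above, and this statement merely packages them into the greatest\=/fixpoint characterisation needed for the subsequent Tarski quantifier\=/elimination step.
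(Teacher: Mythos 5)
Your proof is correct and follows essentially the same route as the paper: continuity of the polynomial map $\FF$ to show $\alpha_\infty$ is a fixpoint, then monotonicity from $\alpha_0=\one_Q$ plus the definition of ${\preceq}$ via sums over upward\=/closed sets to pass the inequality $\alpha'\preceq\alpha_d$ to the limit. Your write\-up merely spells out these steps in a little more detail than the paper does.
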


\begin{proof}
    First notice that $\FF$ is continuous as a~function from $\bR^{\powerset(Q)}$ to $\bR^{\powerset(Q)}$. Therefore, \[\FF(\alpha_\infty)=\FF\left(\lim_{d\to\infty}\alpha_d\right)=\lim_{d\to\infty}\FF(\alpha_d)=\lim_{d\to\infty}\alpha_{d+1}=\alpha_\infty.\]
    Now consider any fixpoint $\alpha'\in\DD$ of $\FF$. By monotonicity of $\FF$ and the fact that $\alpha_0=\one_Q$, we know that for each $d\in\bbN$ we have $\alpha'\preceq \alpha_d$. As the order ${\preceq}$ is defined by sums of values and $\alpha_\infty$ is a point\=/wise limit of $\alpha_d$, it also holds that $\alpha'\preceq\alpha_\infty$.
\end{proof}

Recall that $\widehat{I}=\{R\subseteq Q\mid R\cap I\neq\emptyset\}$ is the set of accepting states of the power\=/set automaton. Consider the functional $\fun{\MM}{\DD}{\bR}$ defined as 
\begin{equation}
\MM(\alpha)=\sum_{R\in\widehat{I}} \alpha(R).
\end{equation}
\label{eq:defMM}
Clearly $\MM$ is monotone (by the definition of ${\preceq}$ and the fact that $\widehat{I}$ is upward\=/closed) and continuous. Thus, the following corollary holds.

\begin{fact}
    \label{ft:sums}
    The sequence $M_d\eqdef \MM(\alpha_d)$ is a~decreasing sequence of numbers with the limit $M_\infty\eqdef\MM(\alpha_\infty)$.
\end{fact}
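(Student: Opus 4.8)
The plan is to obtain this statement as an immediate corollary of the order-theoretic facts already in place for $\MM$, the descending sequence $(\alpha_d)_{d\in\bbN}$, and the distribution $\alpha_\infty$. First I would show that $(M_d)_{d\in\bbN}$ is non-increasing (which is all that ``decreasing'' can mean here): the sequence $(\alpha_d)_{d\in\bbN}$ is descending in ${\preceq}$, so $\alpha_{d+1}\preceq\alpha_d$ for every $d$, and since $\widehat{I}=\{R\subseteq Q\mid R\cap I\neq\emptyset\}$ is upward\=/closed, the defining inequality~\eqref{eq:dist-order} of ${\preceq}$ instantiated at $\Uu=\widehat{I}$ yields exactly
\[
M_{d+1}=\MM(\alpha_{d+1})=\sum_{R\in\widehat{I}}\alpha_{d+1}(R)\ \leq\ \sum_{R\in\widehat{I}}\alpha_{d}(R)=\MM(\alpha_{d})=M_{d}.
\]
Equivalently, this is just the monotonicity of $\MM$ applied to $\alpha_{d+1}\preceq\alpha_d$.

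Second, for the limit I would use that $\MM$ is a finite sum of coordinate evaluations, $\MM(\alpha)=\sum_{R\in\widehat{I}}\alpha(R)$, hence trivially continuous as a map $\bR^{\powerset(Q)}\to\bR$. The lemma establishing $\alpha_\infty(R)=\lim_{d\to\infty}\alpha_d(R)$ for every $R\subseteq Q$ then gives
\[
M_\infty\eqdef\MM(\alpha_\infty)=\sum_{R\in\widehat{I}}\alpha_\infty(R)=\sum_{R\in\widehat{I}}\lim_{d\to\infty}\alpha_d(R)=\lim_{d\to\infty}\sum_{R\in\widehat{I}}\alpha_d(R)=\lim_{d\to\infty}M_d,
\]
the interchange of a finite sum with a limit being immediate. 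Together with the first paragraph, this shows $(M_d)_{d\in\bbN}$ is a decreasing sequence of reals with limit $M_\infty$.

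There is no real obstacle here: the statement is pure bookkeeping over the fixpoint machinery developed above, and the only points worth spelling out are (i) that $\widehat{I}$ is upward\=/closed so that~\eqref{eq:dist-order} applies verbatim, and (ii) that $\MM$'s continuity and monotonicity — both already recorded right after its definition — are exactly what lets us pass the order and the limit through $\MM$.
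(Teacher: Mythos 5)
Your proof is correct and matches the paper's argument: the paper derives this fact immediately from the monotonicity of $\MM$ (via upward\-/closedness of $\widehat{I}$ and the definition of ${\preceq}$) applied to the descending sequence $(\alpha_d)$, together with the continuity of $\MM$ and the pointwise convergence $\alpha_d\to\alpha_\infty$, which is exactly what you spell out.
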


Moreover, Proposition~\ref{pro:alpha-vs-types} implies the following property of the values $M_d$.

\begin{remark}
\label{rem:M-d-and-lang}
For each $d\in\bbN$ we have $M_d=\mu^{d}\big(\lang^d(\AA)\big)$.
\end{remark}

\subsection{The measure of the language}

We are now in position to provide a way of computing the measure $\standardMeasureBig{\lang(\AA)}$ of the considered language $\lang(\AA)$.

\begin{proposition}
    \label{pro:measure-is-fixpoint}
    Using the value $M_\infty$ as defined in Fact~\ref{ft:sums}, we have
    $\standardMeasureBig{\lang(\AA)}=M_\infty$.
\end{proposition}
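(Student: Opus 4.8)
The plan is to identify the approximating sequence $L_n$ promised at the beginning of this section with the pullbacks of the finite languages $\lang^d(\AA)$, and then invoke continuity of the measure from above. Concretely, for each $d\in\bbN$ put
\[L_d\eqdef\{t\in\treesInfinite{\Gamma}\mid t^{\leq d}\in\lang^d(\AA)\}.\]
By the characterisation~\eqref{eq:partial-measure} together with Remark~\ref{rem:M-d-and-lang} we get at once $\standardMeasure{L_d}=\mu^{d}\big(\lang^d(\AA)\big)=M_d$.

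Next I would record two elementary facts about the sets $L_d$. First, Lemma~\ref{lem:safety-representation} states precisely that a full tree $t$ lies in $\lang(\AA)$ if and only if $t^{\leq d}\in\lang^d(\AA)$ for every $d\in\bbN$; hence $\lang(\AA)=\bigcap_{d\in\bbN}L_d$. Second, the sequence $(L_d)_{d\in\bbN}$ is descending: if $\rho$ is an~accepting run of $\AA$ over $t^{\leq d+1}$, then its restriction to the nodes of height at most $d$ is still a~run of $\AA$ over $t^{\leq d}$ (the transition constraint is imposed only at nodes having both children in the domain, and all such nodes of $t^{\leq d}$ already occur in $t^{\leq d+1}$), and it is still accepting since acceptance depends only on the state at the root. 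Thus $L_{d+1}\subseteq L_d$.

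Finally, since $\standardMeasureS$ is a~(finite) probability measure and $(L_d)_{d\in\bbN}$ is a~decreasing sequence of measurable sets, continuity of measure from above gives
\[\standardMeasureBig{\lang(\AA)}=\standardMeasureBig{\bigcap_{d\in\bbN}L_d}=\lim_{d\to\infty}\standardMeasure{L_d}=\lim_{d\to\infty}M_d=M_\infty,\]
the last equality being Fact~\ref{ft:sums}. Measurability of $\lang(\AA)$, needed for the left\=/hand side to make sense, is clear because each $L_d$ is a~finite Boolean combination of base sets and hence the intersection is Borel (alternatively one may invoke universal measurability of regular languages). This proves the proposition.

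There is essentially no difficult step here: the substance is carried by the earlier Lemma~\ref{lem:safety-representation} (a~compactness/König argument) and by Proposition~\ref{pro:alpha-vs-types} together with Fact~\ref{ft:sums}, which compute the numbers $M_d$ and their limit as a~greatest fixpoint. The only place demanding a~moment's care is the monotonicity $L_{d+1}\subseteq L_d$, i.e.~that truncating an~accepting run of the automaton again yields an~accepting run — this is exactly the point where the absence of an~acceptance condition (the safety hypothesis) is used, and it would fail for general parity automata.
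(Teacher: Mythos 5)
Your argument is correct and follows essentially the same route as the paper's own proof: the same decomposition $\lang(\AA)=\bigcap_{d}L_d$ via Lemma~\ref{lem:safety-representation}, the same identification $\standardMeasure{L_d}=\mu^d\big(\lang^d(\AA)\big)=M_d$ via~\eqref{eq:partial-measure} and Remark~\ref{rem:M-d-and-lang}, and continuity of the measure on the decreasing sequence. Your explicit verification that $L_{d+1}\subseteq L_d$ (truncating an accepting run) is a detail the paper leaves as immediate from the definition of $\lang^d(\AA)$, so it is a welcome but not substantively different addition.
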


\begin{proof}
    Consider a~depth $d\in\bbN$ and let
    \[L_d\eqdef\{t\in\treesInfinite{\Gamma}\mid t^{\leq d}\in\lang^d(\AA)\}.\]
    
    Observe that by Lemma~\ref{lem:safety-representation} we know that $\lang(\AA)=\bigcap_{d\in\bbN} L_d$. Moreover, from the definition of $\lang^d(\AA)$ we know that for $d\leq d'$ we have $L_d\supseteq L_{d'}$. Therefore, $\standardMeasureBig{\lang(\AA)}=\lim_{d\to\infty} \standardMeasureBig{L_d}$. However, by the definition of the measure on $\treesInfinite{\Gamma}$, see~\eqref{eq:partial-measure} we know that $\standardMeasureBig{L_d}=\mu^{d}\big(\lang^d(\AA)\big)$ where the latter probability is taken in $\trees{\Gamma}^d$. Now, by Remark~\ref{rem:M-d-and-lang} we know that $\mu^d\big(\lang^d(\AA)\big)=M_d$. Therefore, the thesis holds.
\end{proof}

We are now in place to prove the main theorem of this section.

\begin{proof}[Proof of Theorem~\ref{thm:openSetsUpperBound}]
    By Proposition~\ref{pro:measure-is-fixpoint} we know that $\standardMeasureBig{\lang(\AA)}$ equals $\sum_{R\in\widehat{I}} \alpha_\infty(R)$, where~$\alpha_\infty$ is the greatest fixpoint of the operator $\fun{\FF}{\DD}{\DD}$ (see Lemma~\ref{lem:greatest}). Since the operator $\FF$ is given by a~vector of polynomials in $\bR^{\powerset(Q)}$, one can express in first\=/order logic over $\langle \bR, {+}, {\cdot}\rangle$ that $\alpha_\infty$ is the greatest fixpoint of $\FF$. For the sake of completeness (and complexity analysis) we will provide a~more precise construction here.
    
    Fix a~non\=/deterministic safety automaton $\AA=\langle \Gamma, Q, \delta, I\rangle$. Let $R_1,\ldots,R_N$ be an~enumeration of all the subsets of $Q$. First, consider the formulae describing: the elements of $\DD$ (represented as sequences of reals), upward\=/closed subsets of $\powerset(Q)$ (represented as sequences of zeros and ones), and the order~${\preceq}$:
    \begin{align*}
    \varphi_{\DD}\big(x_1,\ldots,x_N\big) &\eqdef\ \sum_{i=1}^{N} x_i{=}1 \wedge \bigwedge_{i=1,\ldots,N} 0{\leq} x_i{\leq} 1,\\
    \varphi_{\powerset(Q)}\big(\iota_1,\ldots,\iota_N\big) &\eqdef \bigwedge_{i=1,\ldots,N} \iota_i^2 = \iota_i\wedge\bigwedge_{R_i\subseteq R_j} \iota_i\leq \iota_j,\\
    \varphi_{\preceq}\big(\vec{x}, \vec{y}\big) &\eqdef \varphi_{\DD}\big(\vec{x}\big)\wedge \varphi_{\DD}\big(\vec{y}\big) \wedge \forall \vec{\iota}.\ \varphi_{\powerset(Q)}\big(\vec{\iota}\big) \Rightarrow \sum_{i=1}^{N} x_i\iota_i \leq \sum_{i=1}^{N} y_i\iota_i.
    \end{align*}
    As observed above, the operator $\FF$ is just a~vector of polynomials, which means that there exists a~formula $\varphi_{\FF}\big(\vec{x},\vec{f}\big)$ that holds for a~given tuples, if $\vec{x},\vec{f}\in\DD$ and $\FF(\vec{x})=\vec{f}$. Using these, one can write the following formula expressing the measure of the set $\lang(\AA)$:
    \begin{align*}
    \varphi_{\standardMeasureS(\lang(\AA))}(m)\eqdef \exists \vec{x}.\ &\varphi_{\FF}\big(\vec{x},\vec{x}\big)\wedge\\
    &  \left(\forall \vec{y}.\ \varphi_{\FF}\big(\vec{y},\vec{y}\big)\Rightarrow \varphi_{\preceq}\big(\vec{y},\vec{x}\big)\right)\wedge\\
    & \sum_{R_i\in\widehat{I}} x_i = m.
    \end{align*}
    
    Notice that the above formula is of size polynomial in $N$, i.e.~exponential in the number of states and transitions of $\AA$. The following claim follows now directly from Lemma~\ref{lem:greatest}.
    \begin{claim}
    The measure $\standardMeasureBig{\lang(\AA)}$ is the unique real number $m$ satisfying the formula $\varphi_{\standardMeasureS}(m)$.
    \end{claim}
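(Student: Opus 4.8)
The plan is to unfold the abbreviations $\varphi_{\DD}$, $\varphi_{\powerset(Q)}$, $\varphi_{\preceq}$, and $\varphi_{\FF}$ occurring in $\varphi_{\standardMeasureS(\lang(\AA))}(m)$, match the objects they describe to the ones constructed earlier in this section, and then read off the claim from Lemma~\ref{lem:greatest} and Proposition~\ref{pro:measure-is-fixpoint}.

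First I would check that each auxiliary formula means what it should. The formula $\varphi_{\DD}(\vec x)$ plainly describes exactly the tuples $\vec x\in\bR^N$ representing probability distributions on $\powerset(Q)$. In $\varphi_{\powerset(Q)}(\vec\iota)$ the equations $\iota_i^2=\iota_i$ pin each $\iota_i$ to $\{0,1\}$, and $\bigwedge_{R_i\subseteq R_j}\iota_i\leq\iota_j$ forces the set $\{R_i\mid\iota_i=1\}$ to be upward\=/closed; hence $\vec\iota$ ranges precisely over indicator vectors of upward\=/closed subsets of $\powerset(Q)$. Consequently $\varphi_{\preceq}(\vec x,\vec y)$ holds iff $\vec x,\vec y\in\DD$ and for every upward\=/closed $\Uu$ we have $\sum_{R\in\Uu}x(R)\leq\sum_{R\in\Uu}y(R)$, i.e.~iff $\vec x\preceq\vec y$ in the sense of~\eqref{eq:dist-order}. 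Finally, since each coordinate of $\FF$ is the polynomial of~\eqref{eq:def-of-F} averaged over $a\in\Gamma$, the formula $\varphi_{\FF}(\vec x,\vec f)$ --- the conjunction of $\varphi_{\DD}(\vec x)$, $\varphi_{\DD}(\vec f)$ and the polynomial identities $f_j=\FF(\vec x)(R_j)$ --- holds iff $\vec x,\vec f\in\DD$ and $\FF(\vec x)=\vec f$.

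With these identifications the two halves of the claim are short. For \emph{existence}, I would take $\vec x:=\alpha_\infty$: by Lemma~\ref{lem:greatest} it is a fixpoint of $\FF$ in $\DD$ and dominates every fixpoint of $\FF$, so it satisfies the first two conjuncts of $\varphi_{\standardMeasureS(\lang(\AA))}$, and the third conjunct then forces $m=\sum_{R_i\in\widehat{I}}\alpha_\infty(R_i)=\MM(\alpha_\infty)=M_\infty$, which equals $\standardMeasureBig{\lang(\AA)}$ by Proposition~\ref{pro:measure-is-fixpoint}. For \emph{uniqueness}, suppose $m$ satisfies $\varphi_{\standardMeasureS(\lang(\AA))}$ with witness $\vec x$. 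Then $\vec x$ is a fixpoint of $\FF$ lying in $\DD$, so $\vec x\preceq\alpha_\infty$ by Lemma~\ref{lem:greatest}; and, instantiating the universally quantified conjunct at $\vec y=\alpha_\infty$ (a fixpoint of $\FF$), also $\alpha_\infty\preceq\vec x$. Because $\widehat{I}$ is upward\=/closed, the functional $\MM$ is monotone for $\preceq$, hence $\MM(\vec x)=\MM(\alpha_\infty)=M_\infty$; and the last conjunct of $\varphi_{\standardMeasureS(\lang(\AA))}$ asserts precisely $m=\MM(\vec x)$, so $m=\standardMeasureBig{\lang(\AA)}$ again by Proposition~\ref{pro:measure-is-fixpoint}.

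The argument is bookkeeping on top of the earlier subsections, so I do not expect a genuine obstacle. The only subtle point --- and where I would be careful --- is that one should \emph{not} need antisymmetry of $\preceq$ to obtain uniqueness of the number $m$: even though $\preceq$ is in fact antisymmetric (by Möbius inversion over the subset lattice, from which one could conclude $\vec x=\alpha_\infty$ outright), it is cleaner to observe that monotonicity of the evaluation functional $\MM$ already forces $m$ to be unique, which is exactly what the claim asks for.
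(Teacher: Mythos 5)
Your proposal is correct and follows essentially the same route as the paper, which simply states that the claim follows directly from Lemma~\ref{lem:greatest} (together with Proposition~\ref{pro:measure-is-fixpoint}, identifying the measure with $\MM(\alpha_\infty)$); you merely spell out the routine unfolding of the defining formulae that the paper leaves implicit. Your observation that uniqueness of $m$ needs only monotonicity of $\MM$ on mutually $\preceq$\=/dominating fixpoints (rather than antisymmetry of $\preceq$) is a correct and slightly more careful rendering of the same argument.
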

            
    Therefore, by Tarski's Quantifier Elimination~\cite{tarskiDecision1951,collins_algebraic_decomposition}, we know that there exists a~semialgebraic set~\cite[Chapter~2]{bochnak_real_algebraic} containing a~single 
    real number~$\standardMeasureBig{\lang(\AA)}$. This set can be computed in three\=/fold exponential time and can be used as a~representation of $\standardMeasureBig{\lang(\AA)}$.
    
    The complexity comes from results of~\cite{benOrFOonRealsComplexity}: the theory of~real closed fields can be decided in deterministic exponential space. Therefore, the decision problems (i.e.~positivity) about $m=\standardMeasureS(\lang(\AA))$ can be solved in two\=/fold exponential space. 
\end{proof}

As the following remark implies, the use of algebraic numbers in the above procedure is unavoidable.

\begin{remark}
\label{rem:closed-algebraic}
There exists a~regular language recognisable by a~non\=/deterministic safety automaton such that the uniform measure of the language is irrational.
\end{remark}

\begin{proof}
It is enough to notice that the language $L$ from Proposition~\ref{prop:algebraicValueFO} is topologically closed and therefore recognisable by a~non\=/deterministic safety automaton.
\end{proof}

\subsection{Regular languages of finite trees}
\label{sec:finite-trees}

In this section we discuss how the above procedure for closed regular languages can be adjusted to the case of regular languages of finite trees. The main issue in that situation is the fact that for a~fixed alphabet $\Gamma$, the set of all finite trees $\treesFinite{\Gamma}$ is countably infinite. Therefore, there is no ``uniform'' measure over $\treesFinite{\Gamma}$. One of the possible solutions is~to~consider \emph{discounted} measures, i.e.~measures where the structure of~the tree is randomly generated top to bottom, and where at each stage of the process there is some \emph{extinction probability}, i.e.~the probability that we reach \emph{an~end of the structure}.
Such systems, called \emph{branching processes}, have been extensively studied, see e.g.~\cite{branchingProcessesHarris}.
For the study of~their reachability and extinction properties see e.g.~\cite{recursiveStochasticGamesEtessami} and its references.

In our setting an extinction of a~branching process implies generation of a~finite tree. Thus, branching processes that are extinct with probability $1$ generate, in the natural way, a~measure on the set of~finite trees.

For the sake of this section we will formalise that process as function from infinite trees into finite ones.
Let $\Gamma=\{a_1,\ldots,a_n\}$ be an~alphabet of $n$ letters and let $\Gamma'=\Gamma\cup\{\flat_1,\ldots,\flat_n\}$, where $\flat_i$ are distinct symbols not belonging to $\Gamma$. A~tree $t\in\treesInfinite{\Gamma'}$ is called \emph{bounded} if on each branch there is an~occurrence of a~symbol $\flat_i$ for some $i\in\{1,\ldots,n\}$. Notice that analogously to the language $L_{a2}$ from Item~3 of Example~\ref{ex:someMeasuresOfSets}, the measure of bounded trees is $1$. Now, if a~tree $t\in\treesInfinite{\Gamma'}$ is bounded, let $f(t)\in\treesFinite{\Gamma}$ be obtained in the following way:
\begin{align*}
\dom{f(t)}&\eqdef\{u\in\dom{t}\mid \forall v\ancestor u.\ t(v)\in\Gamma\},\\
f(t)(u)&\eqdef \begin{cases}
t(u) & \text{if $t(u)\in \Gamma$,}\\
a_i & \text{if $t(u)=\flat_i$.}
\end{cases}
\end{align*}
The measure on $\treesFinite{\Gamma}$ given by $f{\circ}\standardMeasureS$ will be denoted $\mu^{<\omega}$, i.e.~$\mu^{<\omega}(S)\eqdef \standardMeasureBig{f^{-1}(S)}$ --- notice that being bounded is a~regular property and therefore the set $f^{-1}(S)$ is $\standardMeasureS$\=/measurable.

\begin{thm}
\label{thm:finite-trees}
Given a~regular language of finite trees $L\subseteq\treesFinite{\Gamma}$, one can effectively compute the value $\mu^{<\omega}(L)$ in two\=/fold exponential space, and the value is algebraic.
\end{thm}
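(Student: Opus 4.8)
The plan is to reduce the finite-tree case to the infinite-tree case already handled by Theorem~\ref{thm:openSetsUpperBound}. The key observation is that the map $f$ defined just before the statement translates the uniform measure $\standardMeasureS$ on $\treesInfinite{\Gamma'}$ (restricted to bounded trees, which have full measure) precisely into the discounted measure $\mu^{<\omega}$ on $\treesFinite{\Gamma}$. So if $L\subseteq\treesFinite{\Gamma}$ is a~regular language of finite trees, one should produce a~regular language $\widehat{L}\subseteq\treesInfinite{\Gamma'}$ with $f^{-1}(L)=\widehat{L}$ up to a~null set, and then argue that $\widehat{L}$ --- or rather a~closed language with the same measure --- can be fed to the algorithm of Theorem~\ref{thm:openSetsUpperBound}.

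Concretely, first I would take a~standard bottom-up tree automaton $\BB$ recognising $L$ over $\Gamma$, with state set $Q_{\BB}$ and accepting states $F\subseteq Q_{\BB}$. From it I would build a~\emph{safety} automaton $\AA$ over the enlarged alphabet $\Gamma'$ that simulates $\BB$ along the part of the input tree that lies strictly above the first $\flat_i$ on each branch: on reading a~letter $a\in\Gamma$ the automaton uses the transitions of $\BB$; on reading $\flat_i$ it checks that the subtree below consists only of a~``don't care'' padding (any labels are allowed, since $f$ deletes that part) and guesses/verifies the $\BB$-state that a~leaf labelled $a_i$ would have; the root is accepting iff the guessed $\BB$-run lands in $F$. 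Being bounded is itself a~closed property? --- it is not, but it has measure $1$ (as noted in the statement, by the argument for $L_{a2}$ in Example~\ref{ex:someMeasuresOfSets}), so I only need $\AA$ to be correct on bounded trees. The point is to arrange $\AA$ so that $\lang(\AA)$ differs from $f^{-1}(L)$ only on a~set of non-bounded trees, hence on a~$\standardMeasureS$-null set; then
\[
\mu^{<\omega}(L)=\standardMeasureBig{f^{-1}(L)}=\standardMeasureBig{\lang(\AA)},
\]
and the latter is computed by Theorem~\ref{thm:openSetsUpperBound}, giving an~algebraic number computable in the stated complexity. Since $|\Gamma'|=2|\Gamma|$ and the size of $\AA$ is polynomial in $|\BB|$, and $|\BB|$ may already be exponential in the size of a~given description of $L$ (e.g.\ if $L$ is given by a~formula), the overall bound remains two-fold exponential space, inherited from Theorem~\ref{thm:openSetsUpperBound}.

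The main obstacle is the safety/closedness constraint. A~safety automaton has no acceptance condition, so it cannot \emph{require} that a~$\flat_i$ eventually occurs on every branch --- that is a~liveness, not a~safety, property --- which is exactly why $\lang(\AA)$ will contain spurious non-bounded trees. The argument that saves the day is measure-theoretic: restricted to the measure-$1$ set of bounded trees, $\lang(\AA)$ and $f^{-1}(L)$ agree, so their measures coincide even though the languages differ. I would need to state a~small lemma to this effect (``if $L_1\triangle L_2$ is $\standardMeasureS$-null then $\standardMeasureBig{L_1}=\standardMeasureBig{L_2}$'', immediate from completeness of $\standardMeasureS$) and check carefully that every non-bounded tree accepted by $\AA$ but not in $f^{-1}(L)$, and vice versa, indeed lies in the null set --- i.e.\ that on bounded inputs the simulation of $\BB$ by $\AA$ is faithful, using that $f$ discards precisely the subtrees hanging below the first $\flat_i$. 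A~secondary, routine point is verifying that $\AA$ can be made genuinely non-deterministic-safety (parity index $(0,0)$): all the ``guesses'' about $\BB$-states are propagated bottom-up and merely checked locally, with no infinitary acceptance condition, so this is straightforward, and one may even first build the bottom-up deterministic version $\widehat{\AA}$ as in Section~\ref{sec:safety-auto} directly.
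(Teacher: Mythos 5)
Your proposal is correct and follows essentially the paper's own route: the paper likewise lifts $L$ to the language $L'=\{t\in\treesInfinite{\Gamma'}\mid t \text{ bounded}\Rightarrow f(t)\in L\}$, which is exactly your ``$\lang(\AA)$ agreeing with $f^{-1}(L)$ up to the null set of non\=/bounded trees'', recognised by a polynomial\=/size non\=/deterministic safety automaton, and then invokes Theorem~\ref{thm:openSetsUpperBound} together with the fact that bounded trees have measure $1$. Your automaton construction and the null\=/symmetric\=/difference lemma are just an explicit spelling\=/out of the same argument.
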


\begin{proof}
Consider the following set of full trees:
\[L'\eqdef \big\{t\in\treesInfinite{\Gamma'}\mid \text{if $t$ is bounded then $f(t)\in L$}\big\}.\]
It is easy to observe that $L'$ is a~closed regular set of infinite trees. Moreover, given a~non\=/deterministic tree automaton $\AA$ for $L$ one can construct a~non\=/deterministic safety automaton $\AA'$ for $L'$, and the size of $\AA'$ is polynomial in the size of $\AA$. Since the set of bounded trees has measure $1$, the $\standardMeasureS$\=/measure of $L'$ equals the $\mu^{<\omega}$\=/measure of $L$. Therefore, the result follows from Theorem~\ref{thm:openSetsUpperBound}.
\end{proof}


\section{Complexity of computing the measure}
\label{sec:complexity}

In the previous sections, we have described the algorithms for computing the uniform measure of some classes of~simple sets of infinite trees. 
The upper bounds that follow from the devised algorithms are expressed in Theorems~\ref{thm:FOcomputable}, \ref{thm:CQsAreRational}, and~\ref{thm:openSetsUpperBound}.

\comment{
\begin{thm}[Upper bounds]
    There exists an~algorithm that computes the uniform measure of~a~regular set of~trees $\lang(\mathcal{X})$ 
    \begin{itemize}\itemsep=0pt
        \item in three\=/fold exponential space, if $\mathcal{X}$ is a~first\=/order formula over the signature $\{\rootP, s_{\dL}, s_{\dR}, s\} \cup \Gamma$ ;
        \item in exponential space, if $\mathcal{X}$ is a~Boolean combination of~conjunctive queries;
        \item in two\=/fold exponential space, if $\mathcal{X}$ is a~non\=/deterministic safety automaton.
    \end{itemize} 
\end{thm}

\begin{proof}
The first item is~Theorem~\ref{thm:FOcomputable}; the second Theorem~\ref{thm:CQsAreRational}, and the last one
Theorem~\ref{thm:openSetsUpperBound}.
\end{proof}
}

In the rest of~this section, we~derive lower bounds for the problem of positive measure.
In the first two cases, i.e.~for FO and BCCQ we will reduce directly from the problems of acceptance of Turing machines, while the lower bound in the case of~non\=/deterministic safety automata will be~more direct.

\paragraph*{Turing machines}
A~Turing machine $M$ is a~tuple $\langle \Gamma, Q, \delta, q_\init, q_{f} \rangle$, where $\Gamma$ is an~alphabet, $Q$ is a~finite set of states, $q_\init \in Q$ is an~initial state, $q_{f} \in Q$ is a~final state, and $\delta \subseteq Q \times \Gamma \times Q \times \Gamma \times \{-1,0,1\}$ is the transition relation.
A~run is a sequence of configurations; a~configuration is a~sequence of~length~$2^n$ of cells; and each cell contains either:~$\blank$ denoting an~empty cell,
$a\in \Gamma$ denoting a~non\=/empty cell without the head, or a~pair $\langle q,a\rangle \in Q \times (\Gamma \cup \{\blank\})$
denoting a~cell with the head of the machine over it.
A~run is~\emph{proper} if
\begin{itemize}
  \item every configuration is \emph{proper}, i.e.~there is exactly one head, and after empty cells there are empty cells;
  \item the first configuration is \emph{the initial configuration}, i.e.~the first letter of the configuration is $\langle q_\init, \blank \rangle$;
  \item and \emph{transitions between subsequent configurations are correct}, i.e.~labels of cells without the head do not change in transition and the lablel of the cell with the head of the machine and the new position of the head change accordingly to the transition relation.
\end{itemize}
A~run is accepting if~it is proper and contains a~configuration with the final state (a~\emph{final configuration}).

\subsection{Complexity for FO without descendant}

We begin by~giving a~lower bound in the case of~first\=/order formulae without descendant.

\begin{lemma}
\label{lemma:hardnessFO}
    The positive $\standardMeasure{\text{FO}}$ problem is \expspace\=/hard.
\end{lemma}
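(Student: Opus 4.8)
I would prove Lemma~\ref{lemma:hardnessFO} by a polynomial\=/time reduction from the acceptance problem of the $2^{n}$\=/space Turing machines introduced in the paragraph \emph{Turing machines}: given a machine $M$ and a number $n$ (in unary), decide whether $M$ has a proper accepting run whose configurations have length $2^{n}$; this problem is \expspace\=/complete by a standard padding argument. From $(M,n)$ I will construct, in polynomial time, a first\=/order sentence $\varphi_{M,n}$ over $\{\rootP, s_{\dL}, s_{\dR}, s\}\cup\Gamma$ (no descendant relation) of polynomial size such that $\standardMeasure{\lang(\varphi_{M,n})}>0$ if and only if $M$ accepts.

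\emph{The encoding.} Fix a constant $c$ (depending only on $M$) so that $M$ has fewer than $2^{N}$ configurations, where $N=c\cdot 2^{n}$; then a shortest accepting run, if one exists, has at most $2^{N}$ steps. Put $h=N+n$ and let $\Gamma$ be the set of possible contents of one cell of $M$ (a tape symbol, the blank $\blank$, or a head/state pair). A node $u$ of depth exactly $h$ has a path that factorises uniquely as $u=\sigma_{t}\cdot\sigma_{i}$ with $\sigma_{t}\in\{\dL,\dR\}^{N}$ and $\sigma_{i}\in\{\dL,\dR\}^{n}$; reading $\sigma_{t}$ as a time index $t<2^{N}$ and $\sigma_{i}$ as a cell index $i<2^{n}$, the label of $u$ is meant to hold the contents of cell $i$ of configuration $C_{t}$. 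Labels of nodes of depth $\neq h$ carry no information. The sentence $\varphi_{M,n}$ will assert: \emph{there is a time $t^{\ast}<2^{N}$ such that $C_{0}$ is the initial configuration, every $C_{j}$ with $1\le j\le t^{\ast}$ is a proper configuration obtained from $C_{j-1}$ by a transition of $M$, and $C_{t^{\ast}}$ contains the final state}; moreover every quantifier of $\varphi_{M,n}$ will be relativised to the set of nodes of depth at most $h$ (the ball of radius $h$ around $\rootP$), so that $\varphi_{M,n}$ says nothing about the tree below depth $h$.

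\emph{Fitting this into a polynomial\=/size sentence.} The only delicate point is that $h$ and the relevant distances are exponential in $n$ while only the child relations are available; this is the standard succinct\=/addressing machinery. The key observation is that a bounded formula can be ``squared'' using a single copy of itself:
\[[\td(x,y)\le 2^{k}]\ \eqdef\ \exists z.\ \forall x',y'.\ \big((x'{=}x\wedge y'{=}z)\vee(x'{=}z\wedge y'{=}y)\big)\to[\td(x',y')\le 2^{k-1}],\]
and likewise for the relation ``$z$ is an ancestor of $x$ at distance $\le 2^{k}$''; both therefore have size $O(k)=O(n)$ for the values of $k$ we need. From these one expresses, in polynomial size: ``$x$ has depth $h$''; ``$z$ is the depth\=/$d$ ancestor of $x$'' (and hence the value of the $d$\=/th bit of the address of a depth\=/$h$ node, by inspecting two consecutive ancestors of $x$ and the $s_{\dL}/s_{\dR}$ edge between them); ``the time parts of $x$ and $y$ are equal / the time part of $y$ is that of $x$ plus one'' (binary comparison and increment over the $N$ bit positions, quantified over ancestor positions); and the analogous $\pm1$ operations on the $n$\=/bit cell part. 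Using these gadgets, the four requirements --- initiality of $C_{0}$ (cell $0$ holds $\langle q_{\init},\blank\rangle$, every other cell holds $\blank$), properness of each $C_{j}$ (exactly one head, no non\=/blank cell after a blank one), correctness of every transition (the contents of cell $i$ of $C_{j}$ is the $\delta$\=/update of cells $i{-}1,i,i{+}1$ of $C_{j-1}$, whose representing depth\=/$h$ nodes are located by the address gadgets, with the obvious treatment of the two boundary cells), and reaching the final state --- are each finite Boolean conditions over the gadgets, and ``$\exists t^{\ast}$'' becomes one relativised existential quantifier over a depth\=/$h$ node. Hence $|\varphi_{M,n}|$ is polynomial in $|M|+n$ and the whole construction is polynomial\=/time.

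\emph{Correctness, and the main obstacle.} Since every quantifier of $\varphi_{M,n}$ ranges only over nodes of depth $\le h$, whether a full tree $t$ satisfies $\varphi_{M,n}$ depends only on $t^{\le h}$; therefore $\lang(\varphi_{M,n})=\bigcup\{\setBall{\tau}\mid \tau\in\trees{\Gamma}^{h},\ \tau\models\varphi_{M,n}\}$, a finite union of base sets. If $M$ has a proper accepting run $C_{0},\dots,C_{m}$ with $m<2^{N}$, pick any $\tau\in\trees{\Gamma}^{h}$ encoding $C_{0},\dots,C_{m}$ in time slots $0,\dots,m$ and arbitrary elsewhere; then $\tau\models\varphi_{M,n}$ with witness $t^{\ast}=m$, so $\setBall{\tau}\subseteq\lang(\varphi_{M,n})$ and $\standardMeasure{\lang(\varphi_{M,n})}\ge|\Gamma|^{-|\nodes{\tau}|}>0$. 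Conversely, if $\standardMeasure{\lang(\varphi_{M,n})}>0$ then $\lang(\varphi_{M,n})\neq\emptyset$, so some $\tau$ satisfies $\varphi_{M,n}$, and reading the configurations off the depth\=/$h$ frontier of $\tau$ yields a proper accepting run of $M$. Thus $\standardMeasure{\lang(\varphi_{M,n})}>0$ iff $M$ accepts, which gives \expspace\=/hardness. The main obstacle is not conceptual but the bookkeeping in the third paragraph: squeezing exponential\=/length address arithmetic and the locality of the transition relation into a polynomial\=/size sentence that uses only the child relations (the ``single\=/copy squaring'' trick for distances and ancestors is what makes this possible). The step from satisfiability to positivity of the measure is then immediate, precisely because the relativisation turns $\lang(\varphi_{M,n})$ into a finite union of base sets.
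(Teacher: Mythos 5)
Your overall strategy coincides with the paper's: reduce from acceptance of a nondeterministic Turing machine in space $2^{n}$ (an \expspace\=/complete problem), encode an accepting run into the top exponentially many levels of the tree by a polynomial\=/size FO sentence built from the distance\=/squaring formulas, and note that since the sentence only constrains a prefix of fixed exponential height, $\lang(\varphi_{M,n})$ is a finite union of base sets, so positive measure is equivalent to satisfiability, which holds iff $M$ accepts. That framing, and your correctness paragraph, are fine.

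The gap is in your third paragraph, which is exactly the non\-trivial content of the lemma. Your time coordinate is an $N$\=/bit address with $N=c\cdot 2^{n}$, and you appeal to gadgets such as ``$z$ is the depth\=/$d$ ancestor of $x$'', ``the $d$\=/th bit of the address'', and ``binary comparison and increment over the $N$ bit positions, quantified over ancestor positions''. These are not well\=/formed polynomial\=/size FO devices: $d$ ranges over exponentially many positions, FO cannot quantify over numbers, and writing one formula per position is exponentially large. Any bit\=/wise comparison or increment of the time parts of $x$ and $y$ requires pairing ancestors of $x$ and of $y$ \emph{at the same depth}, and ``same depth'' for depths up to $2^{n}$ is not delivered by the two\=/variable formula $[\td(x,y)\leq 2^{k}]$ you wrote (it expresses a fixed threshold, not equality of two unknown distances); you would need an extra four\=/variable equidistance gadget, defined by a similar single\=/copy squaring, which you never state. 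Alternatively --- and this is what the paper does --- one avoids exponential address arithmetic altogether: equality of time parts is just ``$x$ and $y$ share their depth\=/$N$ ancestor'', and ``time ${+}1$'' says that these depth\=/$N$ ancestors are consecutive in left\=/to\=/right order, a single ``diamond'' pattern expressible with the directed variants $\varphi^{\td<2^{k}}_{s_{\dL}}$, $\varphi^{\td<2^{k}}_{s_{\dR}}$ of the squared formulas; the paper places each configuration as a height\=/$n$ subtree below the depth\=/$2^{n}$ frontier, so only the polynomially many cell bits are ever compared by explicitly quantified ancestor sequences. The same problem recurs in your clause ``for all $j\le t^{\ast}$'', which is a lexicographic comparison of exponentially long addresses. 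As written, the reduction therefore does not compile into a polynomial\=/size sentence; once the diamond (or equidistance) machinery is supplied, it does, and then your argument becomes essentially the paper's proof.
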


To prove this lemma we~reduce the problem whether a~given Turing machine accepts the empty input in exponential space.

\begin{proof}
The following problem is $\expspace$\=/complete. Given a~non\=/deterministic Turing machine $M$ and a~number $n$ in unary, decide whether $M$ accepts the empty input using at~most $2^n$ memory cells.

For a~given Turing machine $M$ and number $n$ we will describe how to construct a~first\=/order formula~$\phi$, such that $\standardMeasure{\lang(\phi)} > 0$
if and only if $M$ accepts the empty input using~at most $2^n$ memory cells.
The idea is that the trees that satisfy the formula~$\phi$ encode all accepting runs of $M$ of the desired length. The size of the formula $\phi$ will be polynomial in the value of $n$ and the size of $M$.
Before we proceed, recall that if~$M$ accepts the empty input using no~more than $2^n$ memory cells, then there exists an~accepting run of~$M$
that uses no~more than~$2^{2^n}$ steps.

A~run of~$M$ will be~encoded as a~set of trees $\setBall{t}$, where $t$ is a~full tree of~height $d= 2^n + n$, such that
every node up~to~the depth $2^{n-1}$ is labelled with a~fresh letter $r \not \in \Gamma$, every node 
at depth $2^n$ is labelled with a~fresh letter $c$, every node between depths $2^{n} + 1$ and $2^{n} + n - 1$, inclusive, is labelled with a~fresh letter $c'$, and every node at~depth $2^{n} + n$ is labelled with a~memory cell content. For an~illustration of such a~tree $t$ see Figure~\ref{fig:FO-lower-bound}.

\begin{figure}
\centering
\newcommand{\spod}{-6.5}
\begin{tikzpicture}[scale=0.62]

	\newcommand{\smalltree}[1]{
		\draw (#1)-- +(-1.9,-1.9)
		(#1) -- +(1,-1.9);
		
	}

    \node (root) at (0,0) {$r$};
    \node (lend) at (-8,-8) {};
    \node (rend) at (8,-8) {};
    
    \node (x1) at (-1,-2) {$r$};
    \node (x2) at (-.5,-3) {$r$};
    \node (x3) at (2.5, -3) {$r$};
    \node (xn) at (-3, -3) {$r$};
    \node (v1) at (-2.5, -4) {$r$};
    \node (v2) at (1.5, -4) {$r$};
    \node (v3) at (2.5, -4) {$r$};
    \node (vn) at (-4,-4) {$r$};
    \node (t1) at (-2,-5) {$c$}; 
    \node (t2) at (1, -5) {$c$};
    \node (t3) at (4.5, -5) {$c$};
    \node (tn) at (-5, -5) {$c$};
	
	\draw (root) -- (xn) -- (vn) -- (tn) --  (lend);
	\draw (root) -- (rend);
	\draw[dashed, thick]
		(root) -- (x1)
		(x1) -- (x2)
		(root) -- (x3)
		(x3) -- (v3)
		(x3) -- +(1,-1)
		(v1) -- (t1)
		(v2) -- (t2)
		(v3) -- (t3);
	\draw[solid, thick]
		(x2) -- (v1)
		(x2) -- (v2);

	\smalltree{t1}
	\smalltree{t2}
	\smalltree{t3}
	\smalltree{tn}

	\draw[dotted] (-10,0) -- (root) -- (10,0);
	\draw[dotted] (-10,-5) -- (tn) -- (t1) -- (t2) -- (t3) -- (10,-5);
  	\draw[dotted] (-10,\spod) -- (10,\spod);
	\draw[<->] (-10,0) -- (-10, -5) node[midway,left] {$2^n$}; 
   	\draw[dotted] (-10,\spod) -- (10,\spod);
  	\draw[<->] (-10,-5) -- (-10, \spod) node[midway,left] {$n$}; 
\end{tikzpicture}
\caption{Encoding using first-order formulae. The bold solid lines denote some of the child relations, the dashed lines denote the induced ancestor relations.
The $c$\=/labelled nodes are roots of encodings of configurations; the $r$\=/labelled nodes span the sequence of configurations. The ``diamond'' between second and third drawn configuration assures that they are consecutive.}
\label{fig:FO-lower-bound}
\end{figure}


Before we describe the encoding, let us recall a~standard construction in which a~polynomial formula can select two nodes that are in~a~distance that is~exponential in~the size of~the formula.%
\begin{align}
    \varphi_{s}^{0}(x,z) &\eqdef x\ s\ z,\nonumber\\
    \varphi_{s}^{i+1}(x,z) &\eqdef \exists y.\ \forall x',y'.\ (x'{=}x \land y'{=}y){\lor}(x'{=}y\land y'{=}z)\Rightarrow \varphi_{s}^{i}(x',y')\label{eq:fo-exponential-distance}
\end{align}

Intuitively, the formula $\varphi^{n}_{s}(x,z)$ states that $x$ is an~ancestor of~$z$ such that $\td(x,z) = 2^n$.
Let $\varphi^{\td=2^n}_{s}$ denote that formula.
In a similar fashion we can define a~formula $\varphi^{\td<n}_{s}(x,z)$ stating that $x$ is an~ancestor of~$z$ and $\td(x,z) < 2^n$:
we simply replace the inductive construction with%
\begin{align}
    \psi_{s}^{0}(x,z) &\eqdef x\ s\ z \lor x = y,\nonumber\\
    \psi_{s}^{i+1}(x,z) &\eqdef \exists y.\ \forall x',y'.\ (x'{=}x \land y'{=}y){\lor}(x'{=}y\land y'{=}z)\Rightarrow \psi_{s}^{i}(x',y')\label{eq:fo-subexponential-distance}
\end{align}
and write $\varphi^{\td<2^n}_{s}(x,z) \eqdef \exists y.\ \psi^{n}_{s}(y,z) \land y s x \land \lnot z s x$.
In the same manner we define $\varphi^{\td=2^n}_{s_{\dL}}$, $\varphi^{\td<2^n}_{s_{\dL}}$ and $\varphi^{\td=2^n}_{s_{\dR}}$, $\varphi^{\td<2^n}_{s_{\dR}}$.

Now we can return to the encoding.
The sub\=/trees of $t$ in nodes labelled $c$ denote configurations. Since every such sub\=/tree is~of~height $n$, every encoding 
has exactly $2^n$ leafs encoding memory cells. Moreover, in a~tree $t$ there are exactly $2^{2^n}$ sub\=/trees encoding configurations.

The only non\=/trivial part of the first\=/order formula $\varphi_{frame}$ defining the above set of trees is the one talking about the nodes at depth~$2^n$. This is~done using $\varphi^{\td=2^n}_{s}(x,y)$ that expresses the fact that $x\ancestor y$ and $\td(x,y)=2^n$.
Similarly, using $\varphi^{\td<2^n}_{s}$ we~can check the labels of the intermediate nodes and thus verify that a~given tree has the shape as~described above.

Now, all we need is to describe how to write a~formula encoding an~accepting run.
We start with a~formula $\varphi_{proper}$ stating that every configuration is proper, i.e.~there is exactly one head, 
and after empty cells there are empty cells.
Then a~formula $\varphi_{first}$ defining initial configuration: it~simply states that 
in the left\=/most configuration (defined using $\varphi^{\td=2^n}_{s_{\dL}}$) head is over the first memory cell and that the first memory cell is empty.
Next is $\varphi_{accepting}$ defining an~accepting configuration: it simply states that
there is node labelled by a~letter from the set $\{q_{f}\} \times (\Gamma \cup \{\blank\})$.

Now we define a~formula $\varphi_{transition}$ stating that transitions between subsequent configurations are correct.
To do that, we define $\varphi_\textit{next\_conf}(x,y)$, stating that $x$ and $y$ are $c$\=/labelled roots
of two consecutive configurations, and $\varphi_{same\_cell}(x,y)$ stating that nodes $x$ and $y$ encode the same cell in the two 
consecutive configurations. More formally, the formula $\varphi_{same\_cell}(x,y)$ states that there are two sequences of nodes $x_0, \dots, x_n$ and $y_0,\dots, y_n$ such that $y_n = y, x_n = x$, $\varphi_\textit{next\_conf}(x_0,y_0)$, and for $i>0$ $x_i$ is a~left child if and only if $y_i$
is a~left child. Given the formula $\varphi_{same\_cell}(x,y)$, the formula $\varphi_{transition}$ simply states that the labels of $x$ and $y$ are consistent with the transition relation.

What remains is to define the formula $\varphi_\textit{next\_conf}(x,y)$. It simply states that $x$ and $y$ are labelled $c$ and that the path from $x$ to $y$ ``forms a~diamond'', i.e.~that there are three nodes $x_a$, $y_a$, and $z$ such that $x_a$ is an~ancestor of $x$ reachable by $s_{\dR}$ only, $y_a$ is an~ancestor of $y$ reachable by $s_{\dL}$ only, and $x_a$, $y_a$ are respectively the left and the right child of~$z$. To avoid speaking about the ancestors explicitly, we use the formulae $\varphi^{\td<2^n}_{s_{\dL}}$ and $\varphi^{\td<2^n}_{s_{\dR}}$.
 
The resulting formula is the conjunction of the formulae $\varphi_{frame}$, $\varphi_{proper}$, $\varphi_{first}$, $\varphi_{accepting}$, $\varphi_{transition}$.
\end{proof}

\subsection{Complexity for conjunctive queries}

In the case of a~conjunctive queries, we~obtain exact bounds.
More precisely, we observe that deciding whether the measure of~a~language defined by~a~single conjunctive query has a~positive measure is~\np\=/complete.
\begin{proposition}
    The positive $\standardMeasure{\text{CQ}}$ problem is \np\=/complete.
\end{proposition}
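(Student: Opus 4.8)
The plan is to reduce the positive $\standardMeasure{\text{CQ}}$ problem to satisfiability of conjunctive queries over full binary trees and then to pin down the latter as \np\=/complete. The first step is to observe that $\standardMeasure{\lang(q)}>0$ holds precisely when $q$ is satisfiable, which follows by inspecting the three cases of Lemma~\ref{lemma:rootPatternExtraction} exactly as in the proof of Theorem~\ref{thm:CQsAreRational}: if $q$ is not satisfiable then $\lang(q)=\emptyset$; if $q$ is satisfiable with no root sub\=/pattern then $\standardMeasure{\lang(q)}=1>0$; and if $q$ is satisfiable with a~root sub\=/pattern~$p$, then $p$ is satisfiable as well, since any homomorphism witnessing satisfiability of~$q$ restricts to one for~$p$; now, $p$ being firm and rooted, Proposition~\ref{prop:firmPatternsShortDistance} bounds the depth of the image of any homomorphism from~$p$ by~$|p|$, which lets one pick a~finite full prefix tree~$\tau$ of height below~$|p|$ with $\setBall{\tau}\subseteq\lang(p)$, whence $\standardMeasure{\lang(q)}=\standardMeasure{\lang(p)}\geq\standardMeasure{\setBall{\tau}}>0$ by Lemma~\ref{lemma:basicMeasuresLemma}.

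For the \np upper bound I would establish a~polynomial\=/size model property for satisfiable conjunctive queries over full binary trees. Given a~homomorphism~$h$ from~$q$ into a~full binary tree~$t$, the set consisting of the image of~$h$, the root, and all pairwise least common ancestors of image nodes has size $O(|q|)$; the paths in~$t$ joining consecutive such nodes carry only descendant constraints between their endpoints — a~child, left\=/child, or right\=/child atom would force the two endpoints to be adjacent — so each such path may be contracted to a~single parent\=/child link, which still witnesses any descendant atom because the ancestor relation is the transitive closure of the child relation. The contracted tree has $O(|q|)$ nodes and hence positions of length $O(|q|)$, and it extends to a~full binary tree on which~$q$ still holds, since the query is positive. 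Therefore it suffices to guess a~finite tree~$\tau$ with polynomially many nodes together with a~map from the variables of~$q$ to~$\nodes{\tau}$ and to verify in polynomial time that~$\tau$ is a~well\=/formed finite tree and that every atom of~$q$ holds under that map; hence the positive $\standardMeasure{\text{CQ}}$ problem lies in~\np.

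For \np\=/hardness I would reduce from a~suitable \np\=/complete problem, for instance $3$\=/satisfiability, or else invoke the known \np\=/hardness of conjunctive query satisfiability over trees. A~truth assignment would be encoded by branching choices laid out along the tree, each clause would carry an~existentially quantified witness variable whose placement selects a~satisfying literal, and consistency of the assignment across clauses would be enforced by forcing the relevant witness nodes onto a~common branch via the descendant relation; the bound of two children per node, together with the inequalities forced by distinct unary labels, would supply the rigidity needed to make the encoding faithful. The main obstacle is designing this encoding correctly — in particular, simulating the disjunction inside a~clause and the coherence of the assignment while using only conjunctive queries, which offer neither negation nor disjunction. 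The equivalence with satisfiability and the \np membership, by contrast, are routine given the apparatus of Section~\ref{sec:CQsMeasure}.
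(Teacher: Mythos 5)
Your core equivalence --- $\standardMeasure{\lang(q)}>0$ if and only if $q$ is satisfiable --- is exactly the paper's argument, and your justification via Lemma~\ref{lemma:rootPatternExtraction}, Proposition~\ref{prop:firmPatternsShortDistance} and Lemma~\ref{lemma:basicMeasuresLemma} is correct (in fact more detailed than the paper's one\-/line version). Where you diverge is in how you settle the complexity of satisfiability itself: the paper simply cites the known \np\=/completeness of conjunctive\-/query satisfiability over trees~\cite{bjorklundCQscontainment} for both directions, whereas you prove \np membership directly through a linear\-/size model property, taking the closure of the homomorphic image under least common ancestors and contracting the connecting paths. That argument is sound --- each closure node has at most one maximal special node per subtree, so the contracted structure is again binary; child, left\-/child and right\-/child atoms only relate variables with adjacent images and thus survive; ancestor atoms are preserved under contraction; and positivity of the query lets you extend to an arbitrary full tree --- and it buys you a self\-/contained membership proof the paper does not spell out (it needs only the routine care of keeping the original direction on length\-/one links). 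For hardness, however, your in\-/house $3$\-/SAT encoding is not a proof: as you yourself note, conjunctive queries offer no disjunction, and the gadget you sketch (clause witnesses forced onto a common branch) does not obviously simulate the choice of a satisfying literal, nor is this how the known hardness arguments for CQ satisfiability over trees proceed. You should therefore take your own fallback and invoke the cited \np\=/hardness of CQ satisfiability, exactly as the paper does; with that substitution your proof is complete and coincides with the paper's in its overall structure.
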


\begin{proof}
    Let $q$ be a conjunctive query. Then, either $q$ is not satisfiable and $\standardMeasure{q} = 0$, or
    $q$ is satisfiable and has a~positive measure.
    That is, $\standardMeasure{q} > 0$ if~and only if $q$ is satisfiable.
    Deciding whether a conjunctive query is satisfiable is $\np$\=/complete, cf. e.g.~\cite{bjorklundCQscontainment}.
\end{proof}

If~we~allow Boolean combinations of~conjunctive queries the complexity increases exponentially.

\begin{lemma}
	\label{lemma:hardnessPositiveBCCQ}
    The positive $\standardMeasure{\text{BCCQ}}$ problem is \nexp\=/hard.
\end{lemma}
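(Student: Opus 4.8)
The plan is to prove the lower bound by a~polynomial\=/time reduction from an~\nexp\=/complete problem, most conveniently the \emph{bounded exponential tiling problem}: given a~finite tile set with a~horizontal and a~vertical compatibility relation and a~number~$n$ in~unary, decide whether the $2^n{\times}2^n$ grid can be~tiled with a~prescribed tile in~the corner (equivalently, one could reduce from acceptance of~a~non\=/deterministic Turing machine within $2^n$ steps, paralleling the proof of~Lemma~\ref{lemma:hardnessFO}). From such an~instance I~would build, in~polynomial time, an~alphabet~$\Gamma$ and a~Boolean combination of~conjunctive queries~$\phi$ over $\{\rootP, s_\dL, s_\dR, s, {\ancestor}\}\cup\Gamma$, both of~polynomial size, such that $\standardMeasure{\lang(\phi)}>0$ if~and only if~the tiling instance is~positive. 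Since every conjunctive query occurring in~$\phi$ will be~rooted (it mentions~$\rootP$) and will refer only to~a~prefix of~depth polynomial in~$n$, the root\=/pattern analysis of~Lemma~\ref{lemma:rootPatternExtraction}, together with Lemma~\ref{lemma:booleanAlgebraAndMeasure} and the argument behind Proposition~\ref{prop:upperPositiveBCCQ}, guarantees that $\lang(\phi)$ is~a~finite union of~base sets, and hence that $\standardMeasure{\lang(\phi)}$ is~positive exactly when $\phi$ is~satisfiable. Combined with Proposition~\ref{prop:upperPositiveBCCQ}, this establishes that the positive $\standardMeasure{\text{BCCQ}}$ problem is~\nexp\=/complete.

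The encoding reads a~tiling off a~finite prefix of~a~full tree: the node reached from the root along a~path $\langle i\rangle\langle j\rangle$, where $\langle i\rangle,\langle j\rangle\in\{\dL,\dR\}^{n}$ are~interpreted as~the $n$\=/bit binary expansions of~a~row index~$i$ and a~column index~$j$, carries the tile of~cell~$(i,j)$, which is~then propagated to~all its descendants so~that it remains available deeper in~the tree. Because the relevant depth is~polynomial, all navigation — descending a~fixed number of~levels, or~locating a~common ancestor — is~done with chains of~$s_\dL$/$s_\dR$ atoms of~length $O(n)$ together with~$\ancestor$, hence is~expressible by~conjunctive queries of~polynomial size. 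A~crucial gadget is~a~small ``comparison tree'' hung below a~node whose root\=/path encodes \emph{two} $n$\=/bit values as~disjoint blocks of~directions: reading the $\ell$\=/th bit of~each block off the appropriate ancestor edge, one can force a~fresh node to~carry their exclusive~or (each such constraint being a~conjunction of~negations of~the finitely many ``wrong'' pattern queries), so~that ``the two values coincide'' becomes ``all the exclusive\=/or bits are~$0$'', a~genuine Boolean combination of~conjunctive queries. Using this, the binary successor of~an~index and equality of~two co\=/located indices are~expressible; consequently, for~the \emph{horizontal} adjacency constraint, ``there exist two horizontally adjacent cells carrying incompatible tiles'' is~a~Boolean combination of~conjunctive queries (the two cells share their depth\=/$n$ ancestor and sit at~a~zig\=/zag distance inside that subtree), and so~are~the corner and boundary conditions.

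The step I~expect to~be~the main obstacle is~the \emph{vertical} adjacency constraint. Relating cell~$(i,j)$ to~$(i{+}1,j)$ requires comparing their row indices (a~prefix comparison, handled by~a~zig\=/zag near the common ancestor) \emph{and} their column indices; but the two cells live at~incomparable leaves, so~there is~no~single node whose root\=/path carries both column indices, the comparison\=/tree trick does not apply directly, and re\=/ordering the path merely transfers the difficulty to~the horizontal side. One approach is~to~enrich the encoding so~that each ``cell together with its vertical neighbour'' is~witnessed along one root\=/to\=/node path of~still polynomial depth — routing the pair through a~common deeper node from which the tiles and coordinates of~both cells become accessible and comparable by~the gadget above — and then writing the vertical constraint, like the horizontal one, as~a~Boolean combination of~conjunctive queries. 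Carrying this out while keeping every query rooted and of~polynomial size, so~that the reduction plugs cleanly into~Lemma~\ref{lemma:rootPatternExtraction} and Proposition~\ref{prop:upperPositiveBCCQ}, is~the technically delicate heart of~the argument.
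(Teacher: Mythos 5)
Your overall strategy matches the paper's: reduce from an \nexp\=/complete problem (the paper uses acceptance of a~non\=/deterministic Turing machine in at most $2^n$ steps, which you mention as an~alternative to tiling), encode the exponential\=/size computation in a~polynomial\=/depth prefix of the tree, keep every conjunctive query rooted and firm so that, via Lemma~\ref{lemma:rootPatternExtraction}, Lemma~\ref{lemma:booleanAlgebraAndMeasure} and the argument of Proposition~\ref{prop:upperPositiveBCCQ}, positivity of the measure coincides with satisfiability by a~finite prefix. (One small point you gloss: the paper enforces firmness explicitly by relativising every variable with a~formula $\varphi_{\text{limited\_depth}_d}(x)$ asserting a~descendant at depth $d{+}1$; without some such device your queries need not be firm.)

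However, there is a~genuine gap, and you have located it yourself: the comparison of coordinates carried by two nodes at \emph{incomparable} positions. Your comparison\=/tree gadget only works when both $n$\=/bit values lie on a~single root\=/to\=/node path, so the vertical constraint (equality of column indices of cells in different row subtrees) is left unproved, and you only sketch a~hope of ``enriching the encoding''. This is not a~routine detail; it is precisely the heart of the paper's reduction. The paper solves it by building the comparison machinery into the \emph{frame} of the tree: below every branching point of a~configuration encoding it places $h$\=/labelled children and asymmetric $0/1$ markers (the node $u\dL\dR$ labelled $0$, $u\dL\dR\dR$ labelled $1$, $u\dR\dL$ labelled $1$), so that the formula $\varphi_{\text{same\_turn}_j}(x,y)$ can assert ``$x$ and $y$ are both left children or both right children'' by exhibiting two child\=/chains of the \emph{same} length that start at a~common node and end in $1$\=/labelled descendants of $x$ and of $y$ respectively; iterating this gives $\varphi_{\text{same\_cell}}$, i.e.~equality of addresses across two consecutive configurations, which is exactly the cross\=/subtree equality you could not express. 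Without such a~gadget (or an~equivalent restructuring of the encoding), the reduction does not go through — note also that your horizontal case quietly relies on the special common\=/ancestor shape of binary successor, whereas genuine equality of two addresses in incomparable subtrees is the obstruction, so the difficulty cannot be shifted away by reordering coordinates, as you observe. In short: correct plan, same approach as the paper, but the key technical construction is missing.
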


The proof is a~very simple adaptation of~the reduction in~\cite{murlakAcyclic}, which shows that the satisfiability of~Boolean combinations of~conjunctive queries with respect to~recursion\=/free \emph{DTD}s is \nexp\=/hard.
The two differences are the lack of \emph{DTD}, which is~replaced by a~Boolean combination of~patterns, and the forced use of the root patterns.

\begin{proof}
To show the lower bound we reduce the following $\nexp$\=/complete problem: given a non\=/deterministic Turing machine $M$ and a~number $n$ in unary, decide whether $M$ accepts the empty input in~at most $2^n$ steps.

More precisely, for a~Turing machine $M$ and number $n$ we will describe how to construct a~Boolean combination of rooted, firm conjunctive queries~$\phi$, such that $\standardMeasure{\lang(\phi)} > 0$
if and only if $M$ accepts the empty input in~at most $2^n$ steps.
The idea is that the trees that satisfy the formula~$\phi$ encode all accepting runs of $M$ of the desired length.

The construction will be similar to the one in the used in the proof of~Lemma~\ref{lemma:hardnessFO} with appropriate
changes to the formulae $\varphi_{frame}$, $\varphi_{proper}$, $\varphi_{first}$, $\varphi_{accepting}$, $\varphi_{transition}$.

Since a~run should accept in $2^n$ steps, the machine will not use more than $2^n$ memory cells, thus every configuration under consideration has length at~most $2^n$.

\begin{figure}

\centering
\newcommand{\spod}{-10}
\begin{tikzpicture}[scale=0.62]

\tikzstyle{innerNod} = [scale=0.7]

\newcommand{\smalltree}[1]{
    \draw 
     (#1)-- +(-2,-4)
     (#1) -- +(2,-4);
     
     \node[innerNod] (c) at ($(#1) + (0,-2)$) {g};
     \node[innerNod] (cL) at ($(c) + (-.5,-.5)$) {h};
     \node[innerNod] (cR) at ($(c) + (.5,-.5)$) {h};     
     
     \node[innerNod] (cLL) at ($(cL) + (-.25,-.5)$) {g};
     \node[innerNod] (cLR) at ($(cL) + (.25,-.5)$) {0};     
     \node[innerNod] (cLRR) at ($(cLR) + (.25,-.5)$) {1};    
     
     \node[innerNod] (cRL) at ($(cR) + (-.25,-.5)$) {1};
     \node[innerNod] (cRR) at ($(cR) + (.25,-.5)$) {g};           
     
     \draw
     (c) -- (cL)
     (c) -- (cR)
     (cL) -- (cLL)
     (cL) -- (cLR)
     (cLR) -- (cLRR)
     (cR) -- (cRL)
     (cR) -- (cRR);
     
     \draw[dashed] (#1) -- (c);         
}

\node (root) at (0,0) {$r$};
\node (lend) at (-10,-10) {};
\node (rend) at (10,-10) {};

\node (x1) at (-1,-1) {$r$};
\node (x2) at (-.5,-3) {$r$};
\node (v1) at (-3.5, -4) {$r$};
\node (v2) at (2.5, -4) {$r$};
\node (t1) at (-3,-5) {$c$}; 
\node (t2) at (2, -5) {$c$};

\draw (root) -- (x1) -- (lend);
\draw (root) -- (rend);
\draw[dashed, thick]
(x1) -- (x2)
(v1) -- (t1)
(v2) -- (t2);
\draw[solid, thick]
(root) -- (x1)
(x2) -- (v1)
(x2) -- (v2);

\smalltree{t1}
\smalltree{t2}

\draw[dotted] (-10,0) -- (root) -- (10,0);
\draw[dotted] (-10,-5) -- (t1) -- (t2) -- (10,-5);
\draw[dotted] (-10,\spod) -- (10,\spod);
\draw[<->] (-10,0) -- (-10, -5) node[midway,left] {$n$}; 
\draw[dotted] (-10,\spod) -- (10,\spod);
\draw[<->] (-10,-5) -- (-10, \spod) node[midway,left] {$2n$}; 

\draw[dotted] (-7,-4) -- (v1) -- (v2) -- (10,-4);
\draw[<->] (-7,-4) -- (-7, -5) node[midway,left] {$i{-}1$}; 

\draw[dotted] (-7,-8.5) -- (-3, -8.5) -- (2, -8.5) -- (10,-8.5);
\draw[<->] (7,0) -- (7, -8.5) node[midway,right] {$n + 2j + 2 $}; 
\end{tikzpicture}
\caption{The structure of the encoding using conjunctive queries. The figure depicts two consecutive configurations, with an~ancestor $i$ levels above. The formula $\varphi_{\text{same\_turn}_j}(x,y)$ chooses either both left $h$\=/labelled nodes or both right $h$\=/labelled nodes.}

\label{fig:CQ-lower-bound}
\end{figure}

A~run will be encoded as follows, see Figure~\ref{fig:CQ-lower-bound}. Let $t'$ be a~full tree of~height~$d+1$.
The root of $t'$ is labelled $r\notin\Gamma$ and the tree $t'.\dR$ has every node labelled $w\notin\Gamma$.
Now we describe the tree $t = t'.\dL$. The root of $t$ is labelled $r$. And every node of $t$ at depth $l$ such that $1 \leq l \leq n-1$ is labelled $r$.
Every node of $t$ at depth $n$ is labelled $c\notin\Gamma$. The trees rooted in the $c$-labelled nodes encode configurations.
Children of the $c$-labelled nodes are labelled with $g$. Every $g$-labelled node $u$ at depth at most $d-3$ has $h$-labelled children ($h\notin\Gamma$).
The nodes $u\dL\dL$ and $u\dR\dR$ are labelled $g$; the node $u\dL\dR$ is labelled $0$; the node $u\dL\dR\dR$ is labelled~$1$; 
and the node $u\dR\dL$ is labelled $1$.
Nodes $u$ at depth $d-2$ labelled $g$ have similar setup with the difference that the nodes $u\dL\dL$ and $u\dR\dR$ are labelled 
by a letter from the alphabet $\Gamma^{t}$: they encode the memory cells. The unspecified nodes are labelled with $w$.

Notice that for $d=3n$, we have $2^n$ configurations, nodes labelled $c$, with $2^n$ memory cells each, and that
all those requirements can be easily forced by a~polynomial in size Boolean combination of patterns.
Indeed, the above frame can be expressed by a Boolean combination $\varphi_{frame}$, in which the used patterns
forbid ill-labelled children and enforce good labelling of the roots.
Thus, all that is left to define is~a~Boolean combination of patterns that will enforce that a~tree as above 
encodes an accepting run. We~will do this by stating that the configurations are proper ($\varphi_{\textit{proper}}$); that there is a configuration with an accepting state $\varphi_{\textit{accepting}}$; that
the first configuration is the initial configuration ($\varphi_{\textit{first}}$); and that each transition between two consecutive configurations 
is consistent with the transition relation $\delta$ ($\varphi_{\textit{transition}}$). The first two requirements are easily expressible, it is also easy to write a~formula saying that every configuration is proper, i.e.~that there is only one head and after blanks there are blanks.

Now, we describe how to express that the transitions are proper.
We start with family of formulae $\varphi_{\text{next\_conf}_{i}}(x,y)$, $i\in\{1,2,\dots, n\}$, where $\varphi_{\text{next\_conf}_{i}}(x,y)$ holds for two nodes
that are roots of two consecutive configurations with a common ancestor $i$ levels above.
The formula $\varphi_{{\text{next\_conf}_i}(x,y)}$ says that $x$ and $y$ are $c$-labelled and 
there are two sequences $x_1, \dots, x_{i-1}, x$ and $y_1, \dots, y_{i-1}, y$, and a node $z$ such that
$x_1$ is the left child of $z$, $y_1$ is the right child of $z$, $x_1, \dots, x_{i-1}, x$ are right-child connected 
$y_1, \dots, y_{i-1}, y$ are left-child connected, they ``form a diamond'' as on~Figure~\ref{fig:CQ-lower-bound}.
In a similar way, we can define $\varphi_{\text{next\_cell}_i(x,y)}$ choosing nodes that are two consecutive memory cells in the same configuration.

Now, we will describe how to construct a family of formulae $\varphi_{\text{same\_cell}_i}(x,y)$, $j\in\{1,2,\dots, n\}$, that chooses the same memory cell in two consecutive $\varphi_{\text{next\_conf}_i}$ configurations.
We start by observing that if $x_c$ and $y_c$ are the $c$-labelled nodes of the appropriate configurations then $x$ and $y$ are the same memory cell if at each level of the path from $x_c$ to $x$ we choose the same turns as on the appropriate levels of the path from $y_c$ to $y$.
In other words, if $x_c, x_1, \dots, x$ is child connected path from $x_c$ to $x$ and $y_c, x_y, \dots, y$ is child connected path from $y_c$ to $y$, then $x_i$ is the left child of its parent if and only if $y_i$ is the left child of its parent. This can be achieved
by a family of formulae $\varphi_{\text{same\_turn}_j}(x,y)$.

The formula $\varphi_{\text{same\_turn}_j}(x,y)$ states that $x$ and $y$ are at depth $2j+n$, are $h$-labelled, and 
are both left or both right children. We achieve the last one by~stating that there are two sequences of nodes $x_1, \dots x_{n+2j+1}$ and $y_1, \dots y_{n+2j+1}$ such that for $j=1,\dots, n+2j+1$ we have that
$x_j s x_{j+1}$,$y_j s y_{j+1}$, $x$ is an ancestor of $x_{n+2j+1}$, $y$ is an ancestor of $y_{n+2j+1}$,
$x_{1}= y_{1}$ and both $y_{n+2j+1}$, $x_{n+2j+1}$ are labelled with $1$. Note, that $\varphi_{\text{same\_turn}_i}(x,y)$ if and only if $x$ and $y$ are on the same depth and both are left or both are right children.

Finally, we define $\varphi_{\text{same\_cell}_i}(x,y)$ by stating that $x$ and $y$ are at depth $d$ and
the paths from their respective $c$-labelled ancestors take the same turns.

With $\varphi_{\text{same\_cell}_i}(x,y)$ defined, for every transition $\sigma$ that is not in $\delta$
we can write a~pattern $\varphi_{\sigma}$ stating that a~transition between two configurations is incorrect. 
The pattern states that for some $i,j,k$ there are two nodes $x,y$ such that $\varphi_{\text{same\_cell}_i}(x,y)$ and there also are two nodes $y', y''$ such that 
$\varphi_{\text{next\_cell}_j}(y,y')$, $\varphi_{\text{next\_cell}_k}(y',y'')$ and the labels of $x,y,y',y''$ encode $\sigma$.
Hence, $\varphi_{transition}$ is simply the conjunction of the negations of the formulae $\varphi_{\sigma}$.

The final formula is the conjunction of the formulae $\varphi_{frame}$, $\varphi_{proper}$, $\varphi_{first}$, $\varphi_{accepting}$, $\varphi_{transition}$, 
modified so that the used patterns are firm. We do that by quantifying every variable by formula $\varphi_{\text{limited\_depth}_d}(x)$ stating
that there is a~node $y$ at depth $d+1$ such that $x$ is an ancestor of $y$.
\end{proof}

Since the lower bound from Lemma~\ref{lemma:hardnessPositiveBCCQ} matches the upper bound obtained in Proposition~\ref{prop:upperPositiveBCCQ} we infer the following.

\begin{thm}
    \label{thm:completenessBCCQ}
    The positive $\standardMeasure{\text{BCCQ}}$ problem is \nexp\=/complete.
\end{thm}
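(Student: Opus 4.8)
The statement is an immediate corollary of two results already established in the excerpt, so the plan is simply to assemble them. On the one hand, Proposition~\ref{prop:upperPositiveBCCQ} gives the containment: the positive $\standardMeasure{\text{BCCQ}}$ problem lies in \nexp. On the other hand, Lemma~\ref{lemma:hardnessPositiveBCCQ} gives \nexp\=/hardness of the same problem, via the reduction from acceptance of a~non\=/deterministic Turing machine in $2^n$ steps. Combining the two yields \nexp\=/completeness, which is exactly the claim.

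Concretely, I would proceed as follows. First, recall that by Lemma~\ref{lemma:rootPatternExtraction} together with Lemma~\ref{lemma:booleanAlgebraAndMeasure} (the Boolean\=/algebra closure property), every Boolean combination $\phi$ of conjunctive queries can be rewritten, without changing $\standardMeasure{\lang(\phi)}$, into a~Boolean combination of firm, rooted queries of polynomial size; positivity of the measure is then witnessed by a~single finite tree of depth $2n$ satisfying the rewritten formula (this is the content of the proof of Proposition~\ref{prop:upperPositiveBCCQ}, giving the \nexp\ upper bound by guessing that exponential\=/size tree and model checking it in polynomial time). Second, invoke Lemma~\ref{lemma:hardnessPositiveBCCQ}: the encoding of runs of a~$2^n$\=/step non\=/deterministic Turing machine as trees, with the frame, properness, initiality, acceptance, and transition conditions expressed by Boolean combinations of (firm, rooted) patterns, shows that deciding $\standardMeasure{\lang(\phi)}>0$ is \nexp\=/hard. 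Putting both directions together gives the theorem.

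\textbf{Main obstacle.}
There is essentially no new obstacle in this final step itself: the genuine work lies in the two cited results, namely the polynomial\=/size reduction to firm rooted patterns plus the ``guess a~tree'' argument for the upper bound, and the careful pattern encoding of Turing\=/machine runs (in particular the family of formulae $\varphi_{\text{same\_turn}_j}$ and $\varphi_{\text{same\_cell}_i}$ that match cells across consecutive configurations while keeping all patterns firm) for the lower bound. Once those are in place, the proof of the theorem is a~one\=/line combination, so the only thing to be careful about is that the two bounds really concern the \emph{same} decision problem --- positivity of the uniform measure of a~BCCQ\=/definable language --- which is immediate from the statements of Proposition~\ref{prop:upperPositiveBCCQ} and Lemma~\ref{lemma:hardnessPositiveBCCQ}.
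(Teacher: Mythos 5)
Your proposal is correct and matches the paper's own argument exactly: the theorem is obtained by combining the \nexp\ upper bound of Proposition~\ref{prop:upperPositiveBCCQ} with the \nexp\=/hardness of Lemma~\ref{lemma:hardnessPositiveBCCQ}, both concerning the positive $\standardMeasure{\text{BCCQ}}$ problem. No further argument is needed.
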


\paragraph*{Counting complexity}

We will now provide a~short explanation how the above results can be read in terms of counting complexity classes. However, as the main scope of the article is expressed in terms of classical decision procedures, the provided explanation is brief, only indicating how this approach can be~taken.

A~careful inspection of the proof of~hardness, i.e.~the proof of~Lemma~\ref{lemma:hardnessPositiveBCCQ}, reveals that the described reduction in fact reduces the problem of~computing the number of~runs of~a~non\=/deterministic Turing machine running in~exponential time to~the problem of computing the standard measure of~a~language defined by a~Boolean combination of~conjunctive queries. The crucial observation is that under the above defined reduction, every accepting run of~the machine defines an~unique prefix of fixed depth, say~$k$. Moreover, each such run corresponds to exactly the set of trees that conform to that prefix. Since for every run the size of the corresponding prefix is $2^{k}$, there exists a~computable constant~$c$ such that the number of~accepting runs, say~$a$, and the measure, say~$\mu$, are related by the equation $c\cdot a = \mu$. Thus, it shows that the quantitative $\standardMeasure{\text{BCCQ}}$ problem is $\sharp \exptime$\=/hard.

On the other hand, a~careful analysis of the proofs of Theorem~\ref{thm:CQsAreRational} and Corollary~\ref{cor:booleanCombinationOfCQsAreRational} 
allow us~to~claim that computing the measure belongs to~the class $\sharp \exptime$.
Let $k$ be~the size of the input query. A~simple algorithm guesses a~full binary tree of height~$k$ that is~the prefix
witnessing the satisfiability of the root patterns and then verifies it~in deterministic exponential time.
Since the size of every such prefix is exponential in~$k$ and the prefixes define a~partition into sets of equal measure, instead of counting the prefixes, it~is~enough to~count the accepting runs.

Thus, we state the following.

\begin{proposition}
\label{thm:complexityCountingBCCQ}
The problem of computing the measure of $\text{BCCQ}$ (i.e.~Boolean combinations of conjunctive queries) is $\sharp \exptime$\=/complete.
\end{proposition}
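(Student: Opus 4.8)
The plan is to prove the two inclusions separately, in both cases recycling constructions already available for the corresponding decision problems.

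\textbf{$\sharp\exptime$\=/hardness.} I would reduce from the problem of counting the accepting runs of an exponential\=/time non\=/deterministic Turing machine. For an instance $(M,1^n)$ of that problem I would run the reduction from the proof of Lemma~\ref{lemma:hardnessPositiveBCCQ}, obtaining a Boolean combination $\phi$ of \emph{firm, rooted} conjunctive queries whose models encode the accepting runs of $M$ of length at most $2^n$. The point to check is that this reduction is parsimonious. By Proposition~\ref{prop:firmPatternsShortDistance}, whether a full tree $t$ satisfies $\phi$ depends only on the prefix $t^{\leq h}$ for a fixed height $h$ polynomial in $|M|$ and $n$; and, by the shape forced by the frame gadget together with the correctness enforced by $\varphi_{proper}$, $\varphi_{first}$, $\varphi_{accepting}$, $\varphi_{transition}$, the map sending an accepting run of $M$ to the unique full prefix of height $h$ that encodes it is a bijection between the accepting runs of $M$ and $\{\tau\in\trees{\Gamma'}^{h}\mid \tau\models\phi\}$. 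The balls $\setBall{\tau}$ for these finitely many $\tau$ are pairwise disjoint, each of measure $|\Gamma'|^{-(2^{h+1}-1)}$ by Lemma~\ref{lemma:basicMeasuresLemma}, so $\standardMeasureBig{\lang(\phi)}=a\cdot |\Gamma'|^{-(2^{h+1}-1)}$, where $a$ is the number of accepting runs. Hence $a=\standardMeasureBig{\lang(\phi)}\cdot|\Gamma'|^{2^{h+1}-1}$, and the multiplier is an explicitly computable integer, which gives the reduction from the counting problem to computing the measure.

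\textbf{Membership in $\sharp\exptime$.} Here I would show that, up to an explicit rescaling, computing the measure \emph{is} a $\sharp\exptime$ function. Given a BCCQ $\phi$ of size $k$, first replace each conjunctive query occurring in $\phi$ either by the query $\bot$ (if it is unsatisfiable) or by its root sub\=/pattern; as in the proofs of Lemma~\ref{lemma:rootPatternExtraction}, Lemma~\ref{lemma:booleanAlgebraAndMeasure} and Corollary~\ref{cor:booleanCombinationOfCQsAreRational}, this is done in exponential time and yields a Boolean combination $\phi^{\ast}$ of firm, rooted queries of size polynomial in $k$ with $\standardMeasureBig{\lang(\phi)}=\standardMeasureBig{\lang(\phi^{\ast})}$. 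Let $m$ be the maximal size of a query in $\phi^{\ast}$; by Proposition~\ref{prop:firmPatternsShortDistance}, whether $t\models\phi^{\ast}$ depends only on $t^{\leq m}$, so the balls $\setBall{\tau}$ with $\tau\in\trees{\Gamma}^{m}$ and $\tau\models\phi^{\ast}$ partition $\lang(\phi^{\ast})$ and $\standardMeasureBig{\lang(\phi)}=N_\phi\cdot|\Gamma|^{-(2^{m+1}-1)}$, where $N_\phi=|\{\tau\in\trees{\Gamma}^{m}\mid \tau\models\phi^{\ast}\}|$. Finally, $N_\phi$ equals the number of accepting runs of the non\=/deterministic machine that deterministically computes $\phi^{\ast}$ and $m$ in exponential time, then guesses a full tree $\tau$ of height $m$ (with $2^{m+1}-1$ nodes, i.e.\ exponentially many in $k$) and checks $\tau\models\phi^{\ast}$ by brute force over the polynomially many variables of each constituent conjunctive query, in time $|\dom{\tau}|^{O(|\phi^{\ast}|)}$, which is again exponential in $k$. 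Thus $\standardMeasureBig{\lang(\phi)}$ is this $\sharp\exptime$ value divided by the explicit integer $|\Gamma|^{2^{m+1}-1}$.

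\textbf{Main obstacle.} The delicate part is not mathematical but a matter of bookkeeping: one must fix a notion of reduction under which ``computing this rational\=/valued function is $\sharp\exptime$\=/complete'' is literally meaningful --- the natural choice being reductions allowed to run in exponential time, since the numbers $N_\phi$ already have exponentially many bits --- and then check that both directions respect it. On the membership side this is immediate, as passing between $\standardMeasureBig{\lang(\phi)}$ and $N_\phi$ is just multiplying or dividing by the explicit number $|\Gamma|^{2^{m+1}-1}$. On the hardness side it requires re\=/reading the gadgets of Lemma~\ref{lemma:hardnessPositiveBCCQ} and confirming that the encoding is exactly one\=/to\=/one: no accepting run yields two prefixes of height $h$, no prefix of height $h$ comes from two runs, and no ``spurious'' prefix of height $h$ satisfies $\phi$ without encoding a legitimate accepting run. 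Combining the two inclusions then yields the claimed $\sharp\exptime$\=/completeness.
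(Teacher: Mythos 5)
Your proposal is correct and follows essentially the same route as the paper: for hardness, observe that the reduction of Lemma~\ref{lemma:hardnessPositiveBCCQ} is parsimonious, with each accepting run corresponding to a unique full prefix of fixed polynomial height whose ball has an explicitly computable measure; for membership, reduce to the root sub\=/patterns as in Theorem~\ref{thm:CQsAreRational} and Corollary~\ref{cor:booleanCombinationOfCQsAreRational} and count, via a non\=/deterministic exponential\=/time machine, the exponential\=/size prefixes satisfying the reduced formula, rescaling by the explicit power of $|\Gamma|$. Your extra bookkeeping about the reduction notion and the one\=/to\=/one correspondence only makes explicit what the paper's brief sketch asserts.
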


Notice that the arguments above use crucially the idea that in the case of~conjunctive queries
the measure is determined by the close neighbourhood of the root.
This allows us~to~simply enumerate all the prefixes of~some fixed depth
and count the positive occurrences.
A~similar observation can be made in the case of~first\=/order formulae not using descendant relation, but
most likely does not extend to~the case of~first\=/order formulae using descendant nor to~the case of~safety automata. Since both classes admit languages of~irrational uniform measure and every prefix of fixed depth is~of~rational measure, simple counting is not enough.

\subsection{Complexity for safety automata}

In this section we provide a~lower bound for the problem of computing the measure of a~language given by a~non\=/deterministic safety automaton. In that case, contrary to the cases of FO and BCCQ, the given model is asymmetric --- the considered model is not closed under complement. Therefore, the problems of positive measure and full measure (i.e.~whether $\standardMeasureBig{L}=1$) are not obviously inter\=/reducible. Therefore, to obtain a~better lower bound we will focus on the later problem.

\begin{proposition}
The problem whether $\standardMeasureBig{\lang(\AA)}=1$ for a~non\=/deterministic safety automaton $\AA$ is \exptime\=/hard.
\end{proposition}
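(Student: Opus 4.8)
The plan is to reduce the measure question to a purely combinatorial one first. Observe that for any non\=/deterministic safety automaton $\AA$ one has $\standardMeasureBig{\lang(\AA)}=1$ if and only if $\lang(\AA)=\treesInfinite{\Gamma}$: the language $\lang(\AA)$ is closed by Proposition~\ref{pro:closed-is-safety}, and every non\=/empty open subset of $\treesInfinite{\Gamma}$ contains a~base set $\setBall{\tau}$ with $\standardMeasureBig{\setBall{\tau}}=|\Gamma|^{-|\dom{\tau}|}>0$, so a~closed set of measure $1$ has empty (open, null) complement. The same conclusion also follows from Remark~\ref{rem:M-d-and-lang} and Proposition~\ref{pro:measure-is-fixpoint}: the numbers $M_d=\mu^{d}\big(\lang^{d}(\AA)\big)$ form a~non\=/increasing sequence converging to $\standardMeasureBig{\lang(\AA)}$, hence the measure is $1$ iff $\lang^{d}(\AA)=\trees{\Gamma}^{d}$ for all $d$, which by Lemma~\ref{lem:safety-representation} amounts to $\lang(\AA)=\treesInfinite{\Gamma}$. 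It therefore suffices to prove that the universality problem for non\=/deterministic safety automata is \exptime\=/hard.

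For this I would reduce from the complement of the acceptance problem for alternating polynomial\=/space Turing machines (alternating polynomial space equals \exptime); since \exptime\ is closed under complement, the non\=/acceptance problem is \exptime\=/complete as well. Fix such a~machine $M$ and an~input $1^{n}$; after routine preprocessing we may assume $M$ strictly alternates existential and universal configurations, every non\=/halting configuration has exactly two successors, and $M$ halts on every branch within $2^{p(n)}$ steps, the bound being maintained by a~$p(n)$\=/bit clock stored on the tape (so a~configuration still consists of $O(p(n))$ cells, a~clock overflow being a~rejecting halting configuration). I will construct, in polynomial time, a~non\=/deterministic safety automaton $\AA_{M}$ with $\lang(\AA_{M})=\treesInfinite{\Gamma}$ if and only if $M$ does not accept $1^{n}$.

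The automaton uses a~tree encoding of computations of $M$ in the spirit of the encodings of Lemmas~\ref{lemma:hardnessFO} and~\ref{lemma:hardnessPositiveBCCQ}: a~configuration is written cell by cell along a~segment of a~branch, consecutive configurations along consecutive segments, the two successors of a~universal configuration are placed in the two subtrees, and at an~existential configuration the chosen successor is placed in the left subtree (the right one being irrelevant). The automaton $\AA_{M}$ accepts a~tree $t$ exactly when $t$, read from the root, is \emph{not} a~faithful encoding of an~accepting computation tree of $M$ on $1^{n}$. A~run descends along one branch of the putative computation skeleton -- at a~universal configuration it may descend into either subtree -- and at each point either continues downward or commits to certify a~\emph{local flaw}: a~malformed configuration, a~transition violating $\delta$ (the flaw at cell $i$ of a~successor is certified after an~$O(p(n))$\=/step count that carries an~$O(1)$\=/cell window), a~missing or extra successor, a~root differing from the initial configuration, or a~halting leaf in a~rejecting state; once the flaw is certified it passes to a~sink state accepting every continuation, and on the subtrees it stepped aside from, as well as below the flaw, it also moves to the sink. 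If while following the skeleton it reaches a~flaw\=/free accepting halting configuration it has no legal transition. Hence $\AA_{M}$ has an~accepting run on $t$ iff $t$ contains a~flaw iff $t$ is not a~valid encoding, so $\lang(\AA_{M})=\treesInfinite{\Gamma}$ iff $M$ has no accepting computation iff $M$ does not accept $1^{n}$. Finally, whether $t$ is a~valid encoding depends only on the restriction of $t$ to depth $2^{O(p(n))}$, so $\lang(\AA_{M})$ is clopen -- in particular closed, consistently with Proposition~\ref{pro:closed-is-safety} -- and all counters used are polynomial, so $\AA_{M}$ is polynomial; together with the first paragraph this proves that deciding $\standardMeasureBig{\lang(\AA)}=1$ is \exptime\=/hard.

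The two points I expect to carry the weight of the argument are exactly the ones that make such reductions delicate: arranging the layout of configurations so that every consistency check becomes local up to a~polynomial counter, which is what keeps $\AA_{M}$ of polynomial size; and ensuring that the \emph{only} infinite runs of $\AA_{M}$ are those that reach the sink after certifying a~flaw, so that a~search which forever postpones committing must eventually get stuck -- here it is the clock, rendering the computation skeleton finite, that forbids an~aimless infinite descent.
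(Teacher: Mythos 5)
Your argument is correct, but it takes a different route from the paper. The paper's proof is essentially a two-line citation: it invokes the known \exptime\=/hardness of \emph{universality} for non\=/deterministic automata over \emph{finite} trees (\cite[Theorem~1.7.7]{tata2007}) and then reuses the reduction of Theorem~\ref{thm:finite-trees}, which turns a finite\=/tree automaton $\AA$ into a polynomial\=/size safety automaton $\AA'$ with $\standardMeasureBig{\lang(\AA')}=\mu^{<\omega}\big(\lang(\AA)\big)$; since every finite tree has positive $\mu^{<\omega}$\=/probability, measure $1$ there is the same as $\lang(\AA)=\treesFinite{\Gamma}$. You instead (a) observe directly that for a safety automaton measure $1$ is equivalent to universality over $\treesInfinite{\Gamma}$ --- a clean and correct step, since $\lang(\AA)$ is closed by Proposition~\ref{pro:closed-is-safety} and any non\=/empty open complement contains some $\setBall{\tau}$ of positive measure --- and (b) re\=/prove \exptime\=/hardness of universality for safety automata on infinite trees from scratch, by the standard ``complement of valid encodings of accepting computation trees'' construction for an alternating polynomial\=/space machine. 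Your construction is sound as sketched, and you correctly identify the two load\=/bearing points: locality of all consistency checks modulo polynomial counters (keeping $\AA_M$ polynomial), and the clock bounding the skeleton so that on a valid encoding the active state inevitably gets stuck, hence no accepting run exists. What the two approaches buy: the paper's proof is much shorter and delegates all gadget work to the literature and to its own finite\=/tree reduction, while yours is self\=/contained, makes the measure\=/$1$/universality equivalence explicit, and shows in passing that hardness already holds for clopen languages --- at the price of having to carry out (and, in a final write\=/up, verify in detail) the machine\=/encoding gadgets that here are only outlined at the same level of precision as the paper's Lemmas~\ref{lemma:hardnessFO} and~\ref{lemma:hardnessPositiveBCCQ}.
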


\begin{proof}
The result follows directly from \exptime\=/hardness of the problem of \emph{universality} of non\=/deterministic automata over finite trees (i.e.~whether $\lang(\AA)=\treesFinite{\Gamma}$),
see e.g.~\cite[Theorem~1.7.7]{tata2007}, and the reduction provided in Theorem~\ref{thm:finite-trees}.
\end{proof}

\comment{
That is, the above reduction takes a~non\=/deterministic Turing machine $\MM$ and a~number~$n$ and returns
a Boolean combination $\phi$ of conjunctive queries such that there exists a constant $c$ computable in polynomial time
such that $c^{-1} \cdot \lang(\phi)$  is the number of accepting runs of $\MM$ that use at most $2^n$ memory cells.

In other words, the reduction shows that computing measure of set defined by a~conjunctive query is as hard as counting the number of accepting runs of a non\=/deterministic Turing machine using exponential space.
}


\section{Conclusions and future work}
\label{sec:conclusions}

We have shown that there exists an~algorithm that, given a~first\=/order sentence $\varphi$ over the signature $\Gamma \cup \{ \rootP, s_{\dR}, s_{\dL}, s\}$,
computes $\mu^{\ast}(\lang(\varphi))$ in three\=/fold exponential space.
We also have shown that there exists an~algorithm that, given a~Boolean combination of conjunctive queries $\varphi$ over the signature $\Gamma \cup \{ \rootP, s_{\dR}, s_{\dL}, s, \ancestor \}$, computes the uniform measure $\mu^{\ast}(\lang(\varphi))$ in exponential space. Both these algorithms base on the fact that the measure of the considered language coincides with the measure of an~effectively computable clopen regular set of trees (see Remark~\ref{rem:fo-as-clopen} and the proof of Theorem~\ref{thm:CQsAreRational}). Such languages are always of computable rational measure and thus the results follow. As witnessed by Proposition~\ref{prop:algebraicValueFO} this technique cannot work for the full first\=/order logic, because it can define languages of irrational measures.

Additionally to the above results, we provide an~algorithm for computing the measure of a~language given by a~non\=/deterministic safety automaton. The approach in that case must be different, as witnessed by Remark~\ref{rem:closed-algebraic}. The provided algorithm reduces the problem to a~first\=/order formula over the reals and then invokes the celebrated Tarski's quantifier elimination procedure. This result can be adjusted to the case of regular languages of finite trees, as discussed in Section~\ref{sec:finite-trees}.

We have also studied the computational complexity of the problems of positive measure. The upper and lower bounds match in the case of conjunctive queries (\np\=/complete) and Boolean combinations of conjunctive queries (\nexp\=/complete). The bounds in the case of first\=/order logic without descendant do not match: the problem is \expspace\=/hard while the provided algorithm runs in three\=/fold exponential space. Similarly, the bounds for non\=/deterministic safety automata do not match. We think that establishing the exact bounds of the problems poses~an~interesting direction of~future research.

The substantial gap in the case of first\=/order formulae stems from two reasons. The fist reason is applicable not only in the case of first\=/order formulae and can be~stated as~follows: the provided algorithms solve the computational quantitative problem, i.e.~\emph{compute the measure}, while the provided reductions utilise decision versions of~the qualitative problem, i.e.~\emph{decide whether the measure is~$0$} or~\emph{decide whether the measure is~$1$}. This is~especially evident in the case of Boolean combinations of conjunctive queries, where the problem of deciding positive measure is~$\nexp$\=/complete, computing the measure is~$\sharp \exptime$\=/complete, and we~provide an~algorithm that computes the measure in exponential space. 
To further support this observation, note that in the cases where we~have obtained matching complexity bounds we~have provided specific algorithms to~solve the positive measure problems.

The second reason is more algorithm specific. The proposed algorithm uses the Gaifman locality to compute a~new formula in Gaifman normal form.
This results in~necessarily three\=/fold exponential blow\=/up in~the size of the formula, see~\cite{modelTheoryMakesFormulasLarge}.
On the other hand, the transformation does not alter the bound on~the height of~the counted prefixes, as it remains exponential in~the size of~the original formula.
Notice that our reduction exploits the blow\=/up in the prefix size, but not in~the size of~the new formula.
We conjecture that the actual complexity of the problem should match the lower bound, as we think that it is~possible to circumvent the blow\=/up in the formula
size by performing the translation on\=/line and computing only the necessary parts of the normal form on~demand.

The results of this paper are expressed in terms of trees with binary branching. The results directly extend to trees of other fixed branching, i.e.~ternary trees. The situation is a bit more subtle with $\omega$\=/branching trees: although they can be naturally interpreted into binary trees, the interpretation is not definable in all the considered formalisms (e.g.~in FO with successor). Even worse, it seems that in the case of~unbounded finitely branching trees there is~no~natural definition of an~uniform measure: intuitively, every measure on such trees has to be~effectively discounted, in the sense that having more children has to be less probable than having less. Due to these difficulties, we decided to focus on the binary formulations. If there is a~need of translating them to some other model, one should carefully check whether it is possible to interpret that model within binary branching trees in a~way definable in the considered logic.

Note that the considered measure respects a~form of a~$0{-}1$-law. By Lemma~\ref{lemma:basicMeasuresLemma}, if $t$ is a finite tree then with probability~$1$ 
it appears as a~sub-tree in a~random tree.
It would be interesting to extend the enquiry to measures that do not posses such a property.
Such measures can be expressed, for example, by graphs or by branching boards, cf.~\cite{przybylkoRegularBranchingGamesMFCS}.

Obviously, the most interesting problem is to find an~algorithm that can compute the uniform measure of an~arbitrary regular language of~infinite trees.
While we know that languages with irrational measures exist, we conjecture that for any regular language of trees $L$
the uniform measure $\standardMeasure{L}$ is algebraic.

\bibliographystyle{elsarticle-num} 
\bibliography{przybylko}

\end{document}